\newtheorem{theorem}{Theorem}
\newtheorem{corollary}{Corollary}[theorem]
\newtheorem{lemma}{Lemma}
\def\Vdots{\vbox{\baselineskip12\p@ \lineskiplimit\z@
  \kern1\p@\hbox{.}\hbox{.}\hbox{.}\hbox{.}\hbox{.}\hbox{.}\hbox{.}\hbox{.}\hbox{.}}}
\newcommand{\beginsupplement}{%
        \setcounter{section}{0}
        \renewcommand{\thesection}{\Alph{section}} 
        \setcounter{lemma}{0}
        \setcounter{equation}{0}
        \setcounter{table}{0}
        \renewcommand{\thetable}{S\arabic{table}}%
        \setcounter{figure}{0}
        \renewcommand{\thefigure}{S\arabic{figure}}%
     }
\begin{document}

\title{\bf Graphical Dirichlet Process for Clustering Non-Exchangeable Grouped Data}
\date{}

\author[1]{\small Arhit Chakrabarti}
\author[1, a]{\small Yang Ni}
\author[2, 3, b]{\small Ellen Ruth A. Morris}
\author[2, 3]{\small Michael L. Salinas}
\author[2, 3]{\small Robert S. Chapkin}
\author[1]{\small Bani K. Mallick}

\affil[1]{\footnotesize Department of Statistics, Texas A\&M University, College Station, TX}
\affil[2]{\footnotesize Department of Nutrition, Texas A\&M University, College Station, TX}
\affil[3]{\footnotesize Program in Integrative Nutrition \& Complex Diseases, Texas A\&M University, College Station, TX}
\affil[a]{\footnotesize All correspondence should be addressed to \href{mailto:yni@stat.tamu.edu}{yni@stat.tamu.edu}}
\affil[b]{\footnotesize Current address: Texas A\&M Veterinary Medical Diagnostic Laboratory, College Station, TX}

\maketitle

\begin{abstract}
\sloppy We consider the problem of clustering grouped data with possibly non-exchangeable groups whose dependencies can be characterized by a known directed acyclic graph. To allow the sharing of clusters among the non-exchangeable groups, we propose a Bayesian nonparametric approach, termed graphical Dirichlet process, that jointly models the dependent group-specific random measures by assuming each random measure to be distributed as a Dirichlet process whose concentration parameter and base probability measure depend on those of its parent groups. The resulting joint stochastic process respects the Markov property of the directed acyclic graph that links the groups. We characterize the graphical Dirichlet process using a novel hypergraph representation as well as the stick-breaking representation, the restaurant-type representation, and the representation as a limit of a finite mixture model. We develop an efficient posterior inference algorithm and illustrate our model with simulations and a real grouped single-cell dataset.
\end{abstract}
\sloppy \noindent%
{\it Keywords:}  Bayesian nonparametrics, clustering, directed acyclic graph, family-owned restaurant process, non-exchangeable groups.
\vfill

\newpage
\section{Introduction}
This article considers clustering of grouped data where the groups are \emph{non-exchangeable}. We are interested in settings where the data are \emph{partially exchangeable} \citep{definnetti}, which entails the exchangeability of the observations within each group but not across the groups. We consider dependent group–specific
random probability measures, thereby allowing the borrowing of information across non-exchangeable groups. We represent the dependencies among groups through a known \emph{directed acyclic graph} (DAG) with nodes denoting groups and directed edges denoting the group dependencies. Such data are abundant in many areas such as genomics. 
For example, our motivating application is a single-cell RNA-sequencing (scRNA-seq) study that aimed to investigate intestinal stem cell differentiation processes in mice with colorectal cancer. The experiments started from a baseline group where the mice were genetically wild-type, fed with a normal diet, and treated with no cancer therapy (placebo). Then to understand the main effects of genotype, diet, and cancer therapy on colonic crypt and tumor niche cell composition, the experimenters introduced three new groups of mice, each differing from the baseline group by exactly one factor (Apc knock-out, a high-fat diet, or a new cancer treatment AdipoRon). To determine the two-way interaction effects, three additional groups of mice were studied, each of which differed from the baseline group by two factors (e.g., mice with Apc knock-out, a high-fat diet, and no cancer treatment). Lastly, for a three-way interaction, they introduced the eighth group of mice with Apc knock-out, a high-fat diet, and the new treatment AdipoRon. The progression of these experiments from baseline to the study of main effects, two-way interactions, and three-way interactions manifests the non-exchangeability of the experimental groups (e.g., the baseline group is expected to be more similar to the ``main effect" groups than the ``three-way interaction" group). With this grouped scRNA-seq dataset, our goal is to cluster cells based on gene expression at the single-cell level within each experimental group while allowing information to be shared across these non-exchangeable groups with a novel DAG-based Bayesian nonparametric model.
\par
The \emph{Dirichlet process} (DP, \citealp{ferguson}) and its variations \citep{de2013gibbs,barrios2013modeling} have been the backbone of numerous model-based Bayesian nonparametric clustering methods \citep{hjort2010bayesian,muller2015bayesian}. The DP, $DP(\alpha_0, G_0)$, is a probability measure on probability measures, where $\alpha_0 > 0$ is the concentration parameter and $G_0$ is a base probability measure. There have been extensive studies on DP mixture models (\citealp{antonaik}; \citealp{escobar&west};  \citealp{maceacher_muller}), which enable clustering without having to fix the number of clusters \emph{a priori}. When there are groups present in the data, naively, one could consider either a separate DP mixture model for each group on one extreme or a single DP mixture model ignoring the groups on the other extreme. However, it is often desirable to identify group-specific clusters while allowing the groups to be linked so that clusters are comparable across groups. Given the goal of clustering the observations within each group, consider a set of random probability measures, $G_j$, one for each group $j$, where each $G_j$ is distributed as $DP(\alpha_{0j}, G_{0j})$ with group-specific concentration parameter $\alpha_{0j}$ and base probability measure $G_{0j}$. Many methods have been proposed to link these group-specific DPs to induce dependencies through the parameter $\alpha_{0j}$ and/or $G_{0j}$ (\citealp{cifarelli1978problemi}; \citealp{mallick}; \citealp{kleinmanibrahim}; 
\citealp{muller}). Perhaps one of the most well-known 
methods is the hierarchical Dirichlet process (HDP, \citealp{hdp}), which falls in the general framework of dependent DP \citep{DDP_MacEachern1, DDP_MacEachern2} and assumes each group-specific $G_j$ is distributed as $DP(\alpha_0, G_0)$ where $\alpha_0$ is the shared concentration parameter and $G_0$ is the shared base probability measure for all groups. They further assume that $G_0$ follows another DP, $G_0\sim DP(\gamma, H)$. Since draws from a DP are discrete with probability one \citep{Sethuraman}, the base measure $G_0$ is almost surely discrete, which ensures that the group-specific probability measure $G_j$ shares the same set of atoms. The corresponding HDP mixture model is thus capable of identifying group-specific clusters while borrowing strength across groups. By construction, HDP mixture model assumes that both the observations within each group and the groups are exchangeable. A similar approach with a different scope, the nested DP \citep{nestedDP}, assumes $G_j$ follows a DP-distributed random probability measure with another DP as the base measure, $G_j\sim Q$ and $Q\sim DP(\alpha_0,DP(\gamma,H))$. The nested structure allows for the clustering of groups but restricts the clusters of observations within each group to be either identical or completely unrelated across groups. Similarly to HDP, nested DP also assumes both the observations within each group and
the groups to be exchangeable. Several recent works \citep{latent_nDP,hidden_HDP} have been proposed to take advantage of the cluster-sharing feature of the HDP and the group-clustering feature of the nested DP. Dependent DP has also been extensively used to model random distributions with various other types of dependencies such as spatial and temporal dependencies 
\citep{anova_DDP,linear_DDP,dunson_herring, spatial_DDP,time_DP1,time_DP4, dahl_DDP};
see \citealp{DDP_review_paper} for a recent review of different dependent DPs.\par
In this paper, we are interested in modeling a set of group-specific random distributions of which the (conditional) dependencies can be characterized by a DAG whose nodes represent the groups. More precisely, we assume that the joint distribution of the set of group-specific 
 random distributions factorizes with respect to a DAG and, therefore, respects its Markov property (i.e., conditional independencies). We call such graph-dependent DP, the \emph{graphical Dirichlet process} (GDP). 
Using GDP as a mixing distribution, the GDP mixture model gives rise to group-specific clusters, which depend directly on their Markov blanket. The known flexibility of DAG in representing conditional dependencies renders the generality of the proposed GDP for modeling dependent random distributions and group-specific clusters beyond exchangeable groups. The use of DAGs in Bayesian nonparametrics has been considered in recent literature. \citealp{dey2022graphical} proposed a graphical Gaussian process to parsimoniously model multivariate spatial data by incorporating conditional independencies among variables encoded by a DAG. \citealp{gu2023bayesian} proposes a pyramid-shaped deep latent variable model for categorical data using a DAG to represent the layer-wise latent conditional dependency structure. These works showcased the usefulness of DAGs through their factorization in Bayesian nonparametric models. We also exploit such factorization in this paper but our model is significantly different from theirs in both approaches and scopes. For example, their graphs link variables whereas ours link groups, and they focus on the modeling of multivariate spatial fields or generative models for categorical data whereas we focus on clustering non-exchangeable grouped data. 
The proposed GDP is a general model. The well-known HDP is a special case of GDP with a specific type of DAG -- a fork, i.e., one parent node and many children nodes (detailed in Section \ref{sec:def}); see Figure \ref{fig::HDP introduction}. Several existing works on time-evolving topic models can also be reformulated using a DAG to capture the time-dependency structure \citep{timevaryingDP, dynamicHDP, evolutionaryHDP}. 
\par In this paper, we will characterize the proposed GDP by a novel \emph{hypergraph} representation. We will also provide several other representations analogous to those for the HDP, i.e., a stick-breaking representation, a restaurant-type representation, and a representation as an infinite limit of a finite mixture model. We develop efficient posterior sampling based on the SALTSampler \citep{SALTSampler} and a Blocked Gibbs sampler for DP/HDP \citep{ishwaranBGS, das_etal}. Simulations and the motivating grouped single-cell data are used to demonstrate our method. In summary, our main contribution is three-fold: 

\begin{enumerate}
    \item We propose a general Bayesian nonparametric approach, GDP, to incorporate non-exchangeable group dependencies for clustering. 
    \item We provide several characterizations of GDP, each providing a different perspective.
    \item We develop a Metropolis-within-blocked-Gibbs sampler for posterior inference. Since HDP is a special case of GDP, this also contributes to a new sampler for HDP. The difficulty of sampling the global weights for HDP is mitigated by using the specialized proposal of SALTSampler \citep{SALTSampler}.
\end{enumerate}

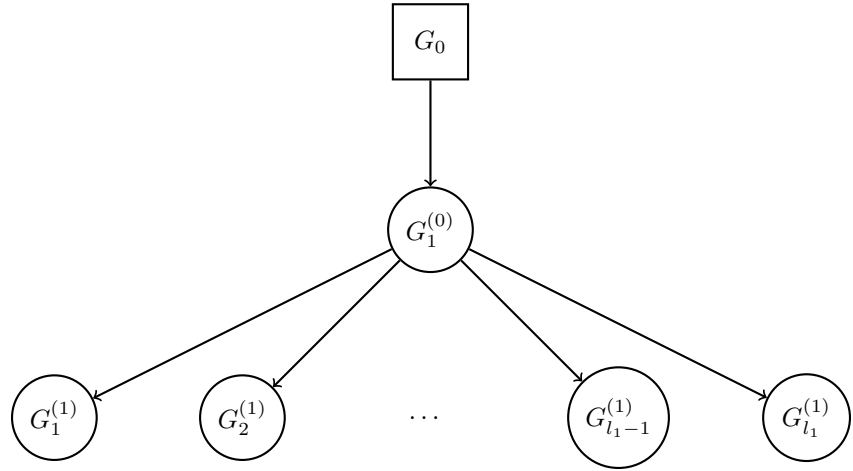
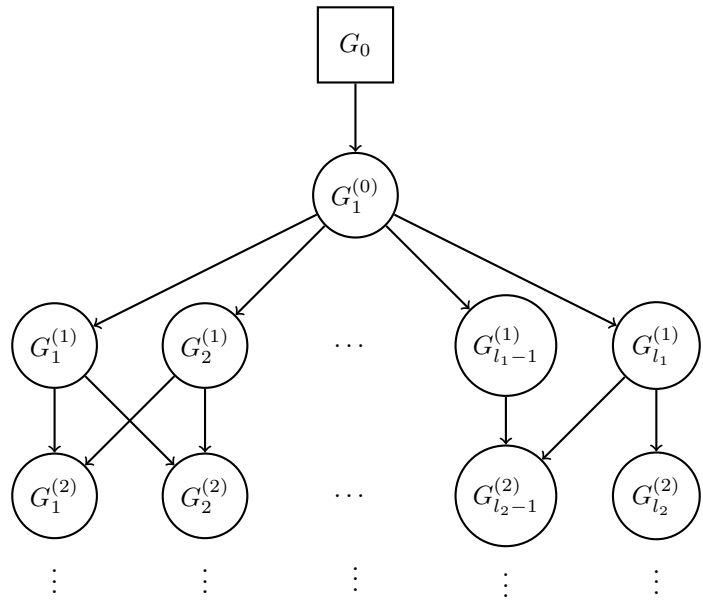
\begin{figure}[http]
\centering
    \begin{subfigure}[b]{0.8\textwidth}
        \centering
            \begin{tikzpicture}[node distance={25mm}, thick, main/.style = {draw, circle, minimum size=1cm, scale=1, transform shape}] 
\node[main, draw = none] (1) {}; 
\node[main, draw = none] (2) [right of = 1] {}; 
\node[main] (3) [right of=2] {$G_1^{(0)}$}; 
\node[main, rectangle] (3p) [above of=3] {$G_0$}; 
\node[main, draw = none] (3h1) [right of=3] {}; 
\node[main, draw = none] (3h2) [right of=3h1] {}; 
\node[main] (4) [below of=1] {$G_1^{(1)}$}; 
\node[main] (5) [right of=4] {$G_2^{(1)}$}; 
\node[main] (7) [below of=3h1] {$G_{l_1-1}^{(1)}$};
\node[main] (8) [below of=3h2] {$G_{l_1}^{(1)}$}; 
\draw[->] (3p) -- (3); 
\draw[->] (3) -- (4); 
\draw[->] (3) -- (5);
\draw[->] (3) -- (7);
\draw[->] (3) -- (8);
\path (5) -- node[auto=false]{\ldots} (7);
\end{tikzpicture} 
\caption{HDP}
\label{fig::HDP introduction}
    \end{subfigure}
    \par\bigskip
    \begin{subfigure}[b]{0.8\textwidth}
    \ContinuedFloat
    \centering
    \begin{tikzpicture}[node distance={20mm}, thick, main/.style = {draw, circle, minimum size=1cm, scale=1, transform shape}] 
\node[main, draw = none] (1) {}; 
\node[main, draw = none] (2) [right of = 1] {}; 
\node[main] (3) [right of=2] {$G_1^{(0)}$}; 
\node[main, rectangle] (3p) [above of=3] {$G_0$}; 
\node[main, draw = none] (3h1) [right of=3] {}; 
\node[main, draw = none] (3h2) [right of=3h1] {}; 
\node[main] (4) [below of=1] {$G_1^{(1)}$}; 
\node[main] (5) [right of=4] {$G_2^{(1)}$}; 
\node[main, draw = none] (6) [right of=5] {}; 
\node[main] (7) [below of=3h1] {$G_{l_1-1}^{(1)}$};
\node[main] (8) [below of=3h2] {$G_{l_1}^{(1)}$}; 
\node[main] (9) [below of=4] {$G_1^{(2)}$}; 
\node[main] (10) [right of=9] {$G_2^{(2)}$};
\node[main, draw = none] (11) [right of=10] {};
\node[main] (12) [below of=7] {$G_{l_2-1}^{(2)}$};
\node[main] (13) [below of=8] {$G_{l_2}^{(2)}$};

\draw[->] (3p) -- (3); 
\draw[->] (3) -- (4); 
\draw[->] (3) -- (5);
\draw[->] (3) -- (7);
\draw[->] (3) -- (8);
\draw[->] (4) -- (9);
\draw[->] (4) -- (10);
\draw[->] (5) -- (9);
\draw[->] (5) -- (10);
\draw[->] (7) -- (12);
\draw[->] (8) -- (13);
\draw[->] (8) -- (12);

\path (5) -- node[auto=false]{\ldots} (7);
\path (10) -- node[auto=false]{\ldots} (12);
\node[main, draw = none] (14) [below of=9] {}; 
\node[main, draw = none] (15) [below of=10] {}; 
\node[main, draw = none] (16) [below of=11] {}; 
\node[main, draw = none] (17) [below of=12] {}; 
\node[main, draw = none] (18) [below of=13] {}; 
\path (9) -- node[auto=false]{$\vdots$} (14);
\path (10) -- node[auto=false]{$\vdots$} (15);
\path (11) -- node[auto=false]{$\vdots$} (16);
\path (12) -- node[auto=false]{$\vdots$} (17);
\path (13) -- node[auto=false]{$\vdots$} (18);
\end{tikzpicture}
\caption{GDP for DAG with a unique root node.}
\label{fig::GDP introduction}
    \end{subfigure}
    \caption{Schematic illustration of HDP and GDP. HDP is a special case of GDP when the DAG is a fork. }
    \label{fig::HDP and GDP introduction}
\end{figure}

The remainder of the paper is organized as follows. Section \ref{sec:prelim} provides a brief overview of some preliminaries needed for the remainder of the paper. Section \ref{sec:gdp} introduces the proposed GDP and the corresponding nonparametric mixture model. We introduce the hyperpriors of our model and also present two lemmas, which are the backbone of our main result in Theorem \ref{main theorem}. In Section \ref{sec:representations}, we present different representations of the proposed GDP. In Section \ref{sec:simulations}, we provide simulations to illustrate our method. Section \ref{sec:read_data_analysis_main} presents a real data analysis using the proposed method on the motivating single-cell data. The paper concludes with a brief discussion in  Section \ref{sec:discussion}. 

\section{Preliminaries}\label{sec:prelim}
\subsection{Directed acyclic graph}
We first provide a brief background on DAG. 
Let $D = \left( V, E\right)$ be a DAG consisting of a set of nodes $V = \left\{1, 2, \dots , p\right\}$ and a set of directed edges $E \subset V \times V$ that does not contain any directed cycles. We denote a directed edge from the node $i$ to node $j$ by $j \leftarrow i$ and call $i$ a \emph{parent} of $j$. A node without parents is called a \emph{root}. For a DAG, there exists at least one root. Let $\bm{Y}=\{Y_1,\dots,Y_p\}$ be a set of random variables. 
 Every node $j\in V$ represents a random variable $Y_j$; later in this paper, $Y_j$ will be a random probability measure. In a DAG model, also known as a Bayesian network, the probability distribution $\mathcal{P}(\boldsymbol{Y})$ is assumed to factorize over $D$, $\mathcal{P}(\boldsymbol{Y}) = \prod_{j = 1}^{p}\mathcal{P}(Y_j \mid Y_{pa(j)})$, where $pa(j)=\{k\in V|j\gets k\}$ denotes the collection of parents of node $j$. This DAG factorization implies that the distribution $\mathcal{P}$ respects the conditional independence relationships encoded by the graph $D$ via the notion of d-separation \citep{pearl2009causality}; and vice versa. For instance, any node is conditional independent of its non-descendants given its parents, i.e., $Y_j\perp Y_{nd(j)}|Y_{pa(j)}$ for any $j\in V$ where $\perp$ denotes independence, $nd(j)=V\backslash de(j)\backslash \{j\}$ denotes the non-descendants of node $j$, and $de(j)=\{k\in V|k\gets\cdots\gets j\}$ denotes the descendants of node $j$. 
 A \emph{Markov blanket} of any node $j$ from $V$ is any subset $V_1$ of $V$ such that $Y_j\perp Y_{V \backslash V_1}|Y_{V_1}$. In other words, $V_1$ contains all the information in $V$ about the node $j$. 
 DAG models are convenient tools to parsimoniously specify a  multivariate distribution through its conditionals, which is especially useful in this paper for specifying a multivariate distribution of a set of random probability measures. \par

\subsection{Infinite mixture model}
Next, we present a brief overview of infinite mixture models for a single population, the DP mixture model, and for multiple exchangeable populations, the HDP mixture model. 

\subsubsection{Dirichlet process mixture model}
For a single population, let $x_i$ denote the $i$th realization of a random variable $X$. We consider a mixture model,
\begin{equation}
\label{DP mixture model}
\begin{aligned}
    \theta_i \mid G   & \overset{iid}{\sim} G,\\
    x_i \mid \theta_i & \overset{ind}{\sim} F(\theta_i),
\end{aligned}
\end{equation}
\noindent where $F(\theta_i)$ denotes the distribution of $x_i$ parameterized by $\theta_i$. The parameters $\theta_i$'s are conditionally independent given the prior distribution $G$. In a DP mixture model, $G$ is assigned a DP prior, $G \sim DP(\alpha_0, G_0)$ with concentration $\alpha_0$ and base probability measure $G_0$.\par
\citealp{Sethuraman} presented the \emph{stick-breaking representation} of the DP based on independent sequences of i.i.d. random variables $(\pi_k')_{k=1}^{\infty}$ and $(\phi_k)_{k=1}^{\infty}$,  which is given by,
\begin{align}
\label{DP stick breaking 1}
    \pi_k' & \mid \alpha_0\overset{iid}{\sim} Beta(1, \alpha_0), & & \phi_k\mid G_0 \overset{iid}{\sim} G_0,\\
    \label{DP stick breaking 2}
    \pi_k & = \pi_k'\prod_{l=1}^{k-1}(1-\pi_l'), & & G = \sum_{k=1}^{\infty}\pi_k\delta_{\phi_k},
\end{align}
where $\delta_{\phi}$ is a point mass at $\phi$ and $\phi_k$'s are called the \emph{atoms} of $G$. The sequence of random weights $\boldsymbol{\pi}=(\pi_k)_{k=1}^{\infty}$ constructed from \cref{DP stick breaking 1} and \cref{DP stick breaking 2} satisfies $\sum_{k=1}^{\infty}\pi_k = 1$ with probability one. The random probability measure on the set of integers is denoted by $\boldsymbol{\pi} \sim \text{GEM}(\alpha_0)$ for convenience where GEM stands for Griffiths, Engen and McCloskey \citep{pitman_GEM}. It is clear from Eq. \eqref{DP mixture model} and Eq. \eqref{DP stick breaking 2} that $\theta_i$ takes the value $\phi_k$ with probability $\pi_k$. Let $z_i$ be a categorical variable such that $z_i=k$ if $\theta_i=\phi_k$. An equivalent representation of a Dirichlet process mixture is given by, 
\begin{equation}
    \begin{aligned}
        \boldsymbol{\pi}\mid \alpha_0 & \sim \text{GEM}(\alpha_0), &  z_i \mid \boldsymbol{\pi} & \overset{iid}{\sim} \boldsymbol{\pi},\\
        \phi_k \mid G_0 & \overset{iid}{\sim} G_0,  & x_i\mid z_i, (\phi_k)_{k=1}^{\infty} & \overset{ind}{\sim} F(\phi_{z_i}).
    \end{aligned}
\end{equation}
\subsubsection{Hierarchical Dirichlet process mixture model}
Suppose observations are now organized into multiple exchangeable groups. Let $x_{ji}$ denote the observation $i$ from group $j$ and $\theta_{ji}$ denote the parameter specifying the mixture component associated with the corresponding observation. Let $F(\theta_{ji})$ denote the distribution of $x_{ji}$ given $\theta_{ji}$ and $G_j$ denote a prior distribution for $\theta_{ji}$. The group-specific mixture model is given by, 
\begin{equation}
\label{hdp mixture model}
    \begin{aligned}
        \theta_{ji}\mid G_j &\overset{ind}{\sim} G_j,\\
        x_{ji}\mid \theta_{ji} &\overset{ind}{\sim} F(\theta_{ji}).
    \end{aligned}
\end{equation}
As with the DP mixture model, when the random measures $G_j$'s are assigned an HDP prior,
\begin{equation}
\label{hdp eq1}
\begin{aligned}
    G_0 \mid \gamma, H & \sim DP(\gamma, H),\\
    G_j \mid \alpha_0, G_0 & \sim DP(\alpha_0, G_0),
\end{aligned}
\end{equation}
the corresponding mixture model is referred to as the HDP mixture model.
The global random probability measure $G_0$ is distributed as a DP with concentration parameter $\gamma$ and base probability measure $H$. The group-specific random measures $G_j$'s are conditionally independent given $G_0$ and hence are exchangeable \citep{definnetti}. They are distributed as DP with the base measure $G_0$ and some concentration parameter $\alpha_0$. The probability model \eqref{hdp mixture model} along with \eqref{hdp eq1} completes the specification of an HDP mixture model. Because DP-distributed $G_0$ is almost surely discrete, the atoms of $G_j$'s and hence the group-specific clusters are necessarily shared across groups. 
 
 \section{Graphical Dirichlet Process}\label{sec:gdp}
When groups are non-exchangeable (e.g., due to study design), the joint distribution of $G_j$'s specified by \eqref{hdp eq1} may not be appropriate.
 Our approach to the problem of sharing clusters among non-exchangeable groups is through specifying a general joint distribution of $G_j$'s that respect the Markov property of a DAG $D$ that links the groups.  
 We assume that the underlying DAG $D$ is known and we define the appropriate prior on the nodes of the DAG and refer to the resulting stochastic process on the graph as the graphical Dirichlet process (GDP). We show how this prior can be used in the non-exchangeable grouped mixture model setting. 

\subsection{The Proposed GDP}\label{sec:def}
 
 Let the nodes $V$ of DAG $D=(V,E)$ now represent the group-specific random probability measures $G_j$'s. The edges $E$ represent the conditional dependence of $G_j$'s. Then the joint distribution of the random probability measures follows the DAG factorization  $\mathcal{P}(G_1,\dots,G_p|D) = \prod_{j = 1}^{p}\mathcal{P}(G_j \mid G_{pa(j)})$, where $G_{pa(j)}$ is the set of random probability measures indexed by the parents $pa(j)$ of  node $j$. For convenience, we assume $D$ has a unique root; see Figure \ref{fig::GDP introduction}. This assumption does not diminish the generality of our approach as a DAG with multiple roots can always be converted, without losing any conditional dependencies, to a DAG with a unique root by simply augmenting the DAG with a hidden common parent of the roots; that hidden common parent becomes the unique root of the new DAG (Figure \ref{fig::DAG with no unique ancestor}). The augmentation only changes the Markov blanket of the original root nodes. Specifically, the Markov blanket of any original root node is simply augmented with the hidden parent node. 
 As the Markov blanket of any other node remains unchanged, the distributions of all other nodes remain the same, and hence this augmentation does not alter the conditional dependencies of the original DAG.

Let us introduce a few terms before describing the proposed GDP. We denote the root node, which may be hidden, as the \emph{layer 0} of DAG $D$. The child nodes of the root node are termed as the \emph{layer-1} nodes, and we assume that there are $l_1$ of them. Similarly, we assume that there are a total of $l_2$ child nodes from the layer-1 nodes, which we refer to as the \emph{layer-2} nodes. We assume that there are $K$ layers in the given DAG $D$ and at any layer $k$, there are $l_k$ nodes. The total number of non-root nodes is  $\sum_{k=1}^{K}l_k = p$.  We define the concentration parameters and random measures of node $j$ in the layer $k$ of DAG $D$ as $\alpha_j^{(k)}$ and $G_j^{(k)}, \ \ j = 1, \dots, l_k$. We denote by $an^{(k, l)}(j)$ the collection of generation-$l$ ancestors of node $j$ in layer $k$ of the DAG. For example, $an^{(k, 1)}(j)$ denotes the parents (generation-$1$ ancestors) of the node $j$ in layer $k$, and $an^{(k, 2)}(j)$ denotes the collection of the parents of the nodes in $an^{(k, 1)}(j)$ or in other words, $an^{(k, 2)}(j)$ denotes the collection of ``grand-parents" (generation-$2$ ancestors) of node $j$ in layer $k$ of the DAG.

 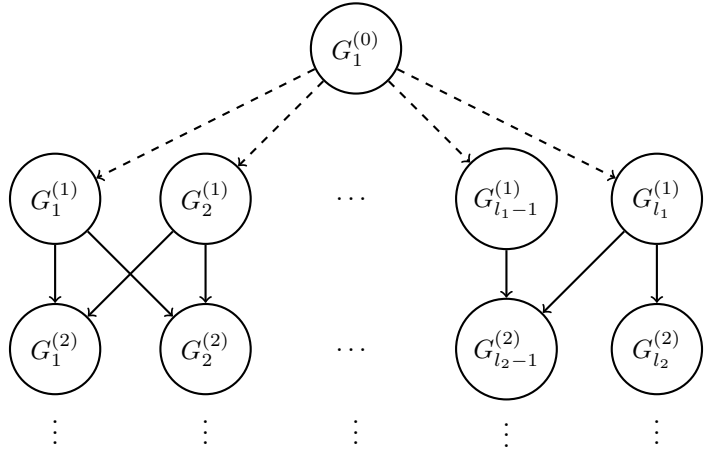
\begin{figure}[htp]
\begin{center}
\begin{tikzpicture}[node distance={20mm}, thick, main/.style = {draw, circle, minimum size=1.2cm}] 
\node[main, draw = none] (1) {}; 
\node[main, draw = none] (2) [right of = 1] {}; 
\node[main] (3) [right of=2] {$G_1^{(0)}$}; 
\node[main, draw = none] (3h1) [right of=3] {}; 
\node[main, draw = none] (3h2) [right of=3h1] {}; 
\node[main] (4) [below of=1] {$G_1^{(1)}$}; 
\node[main] (5) [right of=4] {$G_2^{(1)}$}; 
\node[main, draw = none] (6) [right of=5] {}; 
\node[main] (7) [below of=3h1] {$G_{l_1-1}^{(1)}$};
\node[main] (8) [below of=3h2] {$G_{l_1}^{(1)}$}; 
\node[main] (9) [below of=4] {$G_1^{(2)}$}; 
\node[main] (10) [right of=9] {$G_2^{(2)}$};
\node[main, draw = none] (11) [right of=10] {};
\node[main] (12) [below of=7] {$G_{l_2-1}^{(2)}$};
\node[main] (13) [below of=8] {$G_{l_2}^{(2)}$};

\draw[->, dashed] (3) -- (4); 
\draw[->, dashed] (3) -- (5);

\draw[->, dashed] (3) -- (7);
\draw[->, dashed] (3) -- (8);
\draw[->] (4) -- (9);
\draw[->] (4) -- (10);
\draw[->] (5) -- (9);
\draw[->] (5) -- (10);

\draw[->] (7) -- (12);
\draw[->] (8) -- (13);
\draw[->] (8) -- (12);

\path (5) -- node[auto=false]{\ldots} (7);
\path (10) -- node[auto=false]{\ldots} (12);
\node[main, draw = none] (14) [below of=9] {}; 
\node[main, draw = none] (15) [below of=10] {}; 
\node[main, draw = none] (16) [below of=11] {}; 
\node[main, draw = none] (17) [below of=12] {}; 
\node[main, draw = none] (18) [below of=13] {}; 
\path (9) -- node[auto=false]{$\vdots$} (14);
\path (10) -- node[auto=false]{$\vdots$} (15);
\path (11) -- node[auto=false]{$\vdots$} (16);
\path (12) -- node[auto=false]{$\vdots$} (17);
\path (13) -- node[auto=false]{$\vdots$} (18);
\end{tikzpicture} 
\caption{DAG augmented with a hidden root $G_1^{(0)}$, indicated by the dashed arrows. The original root nodes are $G_1^{(1)},\dots,G_{l_1}^{(1)}$. }
\label{fig::DAG with no unique ancestor}
\end{center}
\end{figure}
  We define GDP recursively from layer 0, the root node, 
\begin{equation}
\label{gdp distribution eq 1}
    G_1^{(0)} \mid \alpha_1^{(0)}, G_0 \sim DP\left(\alpha_1^{(0)}, G_0\right),
\end{equation} where $G_0$ is a fixed base probability measure. 
Then the distribution of the random probability measure of node $j$ in layer $k$ of DAG $D$ conditional on the concentration parameters and random probability measures of its parent nodes is given by, 
\begin{equation}
\label{gdp distribution eq 2}
    G_j^{(k)} \mid \alpha_j^{(k)}, \{G_l^{(k-1)} : l \in an^{(k, 1)}(j)\}  \sim DP\left(\alpha_j^{(k)}, \sum_{l\in an^{(k, 1)}(j)} \pi_{jl}^{(k)} G_l^{(k-1)}\right),
\end{equation}
for $j = 1,2, \dots, l_k$.
In other words, node $j$ in layer $k$ of the DAG is distributed according to a DP with its own concentration parameter $\alpha_j^{(k)}$ and its base distribution being a weighted average of the random probability measures of its parents in layer $k-1$ of the DAG, $\{G_l^{(k-1)} : l \in an^{(k, 1)}(j)\}$, where the weights are given by $\{\pi_{jl}^{(k)}: l \in an^{(k, 1)}(j)\}$, which have a unit sum $\sum_{l\in an^{(k, 1)}(j)} \pi_{jl}^{(k)}  = 1$. Moreover, from the Markov properties of DAG $D$, $G_{j_1}^{(k)}$ and $G_{j_2}^{(k)}$
are conditionally independent given their parents, $\{G_l^{(k-1)} : l \in an^{(k, 1)}(j_1)\}$ and/or $\{G_l^{(k-1)} : l \in an^{(k, 1)}(j_2)\}$, and $G_j^{(k)}$ is conditionally independent of all other random probability measures given its Markov blanket.

We remark that HDP is a special case of the proposed GDP with a specific DAG, fork-DAG (Figure \ref{fig::HDP introduction}).
Using the notations introduced, a fork-DAG is a DAG with a unique root node and only one layer of $l_1$ child nodes. With this specific DAG, the GDP is given by 
\begin{align*}
    &G_1^{(0)} \mid \alpha_1^{(0)}, G_0 \sim DP\left(\alpha_1^{(0)}, G_0\right),\\
    &G_j^{(1)} \mid \alpha_j^{(1)}, G_1^{(0)}  \sim DP\left(\alpha_j^{(1)}, G_1^{(0)} \right), \hspace{0.2cm} j = 1,2, \dots, l_1,
\end{align*}
which is clearly an HDP.

\subsection{GDP mixture model}
To cluster observations that are organized into possibly non-exchangeable groups, we use the proposed GDP in Section \ref{sec:def} as a mixing distribution of a mixture model.  
Letting $j$ index the groups and $i$ index the observations within each group, we assume that the observations $x_{j1}, x_{j2}, \dots, x_{jn_j}$ are exchangeable within each group $j$ but the groups may not be exchangeable.
We assume that each observation within a group is drawn independently from the mixture model \eqref{hdp mixture model} and $G_j$'s follow the GDP \eqref{gdp distribution eq 1} and \eqref{gdp distribution eq 2}.

\subsection{Hyperpriors}\label{sec:hyperpriors}
We assign a Dirichlet prior on the weights $\{\pi_{jl}^{(k)}\, :\, l \in an^{(k, 1)}(j)\}$ in \eqref{gdp distribution eq 2},
\begin{equation}
    \{\pi_{jl}^{(k)}\, :\, l \in an^{(k, 1)}(j)\}  \sim Dir\left(\{\alpha_l^{(k-1)} : l \in an^{(k, 1)}(j)\}\right),
\end{equation} 
where the parameters $\{\alpha_l^{(k-1)} : l \in an^{(k, 1)}(j)\}$ correspond to the concentration parameters of the parents (generation-$1$ ancestors) of node $j$. Since the concentration parameter of a DP relates to its precision (inverse-variance), assuming a Dirichlet prior for the mixture weights of any node with Dirichlet parameters proportional to the precisions of the parent nodes is a natural choice. This gives more ``weightage" to a parent node with a higher precision as opposed to a parent node with a lower precision.

The other distributional consideration that significantly simplifies the distribution of the random measure of any particular node is by considering a gamma-DAG distribution on the concentration parameters $\alpha_j^{(k)}$'s, which, like the distribution of $G_j^{(k)}$'s, also respects the same Markov property of DAG $D$. 
Specifically, we assume that
\begin{equation}
\begin{aligned}
  \label{eq:alpha_prior}
    &\alpha_1^{(0)} \mid \alpha_0 \sim Gamma(\alpha_0, 1),\\ 
    &\alpha_j^{(k)} \mid \{\alpha_l^{(k-1)} : l \in an^{(k, 1)}(j)\} \sim Gamma\left(\sum_{l\in an^{(k, 1)}(j)} \alpha_l^{(k-1)}, 1\right), &\hspace{0.2cm} j = 1,2, \dots, l_k.
\end{aligned}
\end{equation}
In other words, the concentration parameter of the root node follows a gamma distribution with a fixed shape $\alpha_0$ and a unit rate. The concentration parameter at any level of the DAG follows a conditionally gamma distribution with the shape parameter equal to the sum of the shape parameters of its parents. Such a choice of Gamma hyperprior on the concentration parameters of bottom level DPs of HDP have been considered in \citealp{williamson2013parallel}. We extend such a construction for the more general framework of our proposed GDP. In the next section, we will see how our choice of hyperpriors and hyperparameters leads to several compact representations of the proposed GDP, which requires two lemmas. The first lemma is Lemma 3.1 from \citealp{Sethuraman}, which we state here.
\begin{lemma}[\citealp{Sethuraman}]
\label{lemmaI}
\sloppy Let $\boldsymbol{\alpha}_1 = (\alpha_{11}, \alpha_{12}, \dots, \alpha_{1k})$ and $\boldsymbol{\alpha}_2 = (\alpha_{21}, \alpha_{22}, \dots, \alpha_{2k})$ be $k$-dimensional vectors with $\alpha_{ij} > 0 \hspace{0.2cm} \forall \hspace{0.2cm} j = 1, 2, \dots ,k,\hspace{0.2cm} i = 1, 2$. Let $\boldsymbol{X}_1$ and $\boldsymbol{X}_2$ be independent $k$-dimensional random vectors distributed as Dirichlet distribution with parameters $\boldsymbol{\alpha}_1$ and $\boldsymbol{\alpha}_2$, respectively. Let $\alpha_{1 \cdot} = \sum_{j=1}^{k} \alpha_{1j}$ and $\alpha_{2\cdot} = \sum_{j=1}^{k} \alpha_{2j}$. Let $\pi$ be independent of $\boldsymbol{X}_1$ and $\boldsymbol{X}_2$ and have a beta distribution $Beta\left(\alpha_{1\cdot}, \alpha_{2\cdot}\right)$. Then the distribution of $\pi \boldsymbol{X}_1 + \left(1 -\pi\right)\boldsymbol{X}_2$ is the Dirichlet distribution with parameter $\boldsymbol{\alpha}_1 + \boldsymbol{\alpha}_2$.
\end{lemma}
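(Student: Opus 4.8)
The plan is to prove this via the gamma representation of the Dirichlet distribution, which is the standard route (this is Lemma~3.1 of \citealp{Sethuraman}). The underlying idea is that a weighted average $\pi\boldsymbol{X}_1+(1-\pi)\boldsymbol{X}_2$, when the weight $\pi$ is exactly the beta variable coming from the sizes of two independent gamma vectors, collapses into a single normalized vector whose coordinates are sums of independent gammas.

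First I would set up the gamma construction: take $G_{ij}$ mutually independent with $G_{ij}\sim Gamma(\alpha_{ij},1)$ for $i=1,2$ and $j=1,\dots,k$, and put $S_i=\sum_{j=1}^{k}G_{ij}$ and $\boldsymbol{Z}_i=(G_{i1},\dots,G_{ik})/S_i$. I would invoke the standard facts that $\boldsymbol{Z}_i\sim Dir(\boldsymbol{\alpha}_i)$, that $S_i\sim Gamma(\alpha_{i\cdot},1)$, and that $\boldsymbol{Z}_i$ is independent of $S_i$; combined with independence across the two groups, this makes $\boldsymbol{Z}_1,\boldsymbol{Z}_2,S_1,S_2$ mutually independent. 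Next I would set $W=S_1/(S_1+S_2)$, note $W\sim Beta(\alpha_{1\cdot},\alpha_{2\cdot})$ by the beta--gamma calculus, and observe that, since $W$ is a function of $(S_1,S_2)$ alone, $W$ is independent of the pair $(\boldsymbol{Z}_1,\boldsymbol{Z}_2)$. Hence the triple $(\boldsymbol{Z}_1,\boldsymbol{Z}_2,W)$ is mutually independent with marginals $Dir(\boldsymbol{\alpha}_1)$, $Dir(\boldsymbol{\alpha}_2)$, and $Beta(\alpha_{1\cdot},\alpha_{2\cdot})$ --- that is, exactly the joint law of $(\boldsymbol{X}_1,\boldsymbol{X}_2,\pi)$ in the statement.

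Then I would compute, on this representation,
\[
W\boldsymbol{Z}_1+(1-W)\boldsymbol{Z}_2=\frac{(G_{11}+G_{21},\dots,G_{1k}+G_{2k})}{S_1+S_2},
\]
since the $S_i$ factors cancel. As $G_{1j}+G_{2j}\sim Gamma(\alpha_{1j}+\alpha_{2j},1)$ independently over $j$ and $S_1+S_2=\sum_j (G_{1j}+G_{2j})$, the gamma representation (now applied in reverse) gives that this normalized vector is $Dir(\boldsymbol{\alpha}_1+\boldsymbol{\alpha}_2)$. Finally, since $\pi\boldsymbol{X}_1+(1-\pi)\boldsymbol{X}_2$ is a fixed measurable function of the triple $(\boldsymbol{X}_1,\boldsymbol{X}_2,\pi)$, and this triple has the same joint distribution as $(\boldsymbol{Z}_1,\boldsymbol{Z}_2,W)$, the two mixtures have the same law, which is $Dir(\boldsymbol{\alpha}_1+\boldsymbol{\alpha}_2)$.

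The main obstacle --- really the only delicate point --- is making the independence bookkeeping in the middle step airtight: one must verify that the beta weight produced by the gamma construction is jointly independent of \emph{both} Dirichlet vectors simultaneously (not merely of each one separately), because that is precisely what makes the constructed triple match the hypothesized joint law and hence what licenses transferring the distributional conclusion back to the original $\boldsymbol{X}_1,\boldsymbol{X}_2,\pi$. An alternative, self-contained route is a direct density/Jacobian computation for $\pi\boldsymbol{X}_1+(1-\pi)\boldsymbol{X}_2$, but the change of variables from $2k+1$ coordinates down to $k$ is considerably more cumbersome, so I would present the gamma-construction argument.
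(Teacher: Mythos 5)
Your proposal is correct and follows essentially the same route as the paper's own proof: both construct the triple $(\boldsymbol{X}_1,\boldsymbol{X}_2,\pi)$ from independent gamma variables, verify that the normalized vectors and the beta weight are jointly independent with the required marginals, and then observe that the weighted average collapses to a normalized vector of coordinate-wise gamma sums. The only cosmetic difference is that the paper justifies the independence of each Dirichlet vector from its own gamma total via Basu's theorem, whereas you cite it as a standard property of the gamma representation; your explicit emphasis on needing \emph{joint} independence of the triple (not merely pairwise) is a point the paper handles implicitly.
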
 
The proof is provided in Section \ref{supp-sec::proof_lemma} of the Supplementary Materials for completeness. The next lemma is an immediate extension of \cref{lemmaI} for more than two independent Dirichlet distributed random vectors.
As the Dirichlet distribution is a multivariate analog of the beta distribution, by considering a Dirichlet distribution on the weights, we arrive at a similar result. This lemma is a finite-dimensional version of Theorem 1 of \citealp{williamson2013parallel}, which essentially states that a finite Dirichlet mixture of DPs is, in turn, a DP with its concentration parameter being the sum of the concentration parameters of the component DPs, and the base measure being a weighted mixture of the corresponding mixing base measures. 
\begin{lemma}
\label{lemmaII}
\sloppy Let $\boldsymbol{\alpha}_1, \boldsymbol{\alpha}_2, \dots, \boldsymbol{\alpha}_L$ be $k$-dimensional vectors where $\boldsymbol{\alpha}_i = (\alpha_{i1},\dots, \alpha_{ik})$ with $\alpha_{ij} > 0 \hspace{0.2cm} \forall \hspace{0.2cm} j = 1, 2, \dots ,k$,  $i = 1,2, \dots, L$. Let $\boldsymbol{X}_1, \boldsymbol{X}_2, \dots, \boldsymbol{X}_L$ be independent $k$-dimensional random vectors distributed as Dirichlet distribution with parameters $\boldsymbol{\alpha}_1, \boldsymbol{\alpha}_2,\dots,  \boldsymbol{\alpha}_L$, respectively. Let $\alpha_{i \cdot} = \sum_{j=1}^{k} \alpha_{ij},\ \ i = 1,2,\dots, L$. Let $\boldsymbol{\pi} = \left(\pi_1, \pi_2, \dots, \pi_L\right)$ be independent of $\boldsymbol{X}_1, \boldsymbol{X}_2,\dots, \boldsymbol{X}_L$ and have a Dirichlet distribution $Dir\left(\alpha_{1\cdot}, \alpha_{2\cdot},\dots,  \alpha_{L\cdot}\right)$. Then the distribution of $\sum_{i=1}^{L}\pi_i \boldsymbol{X}_i$ is the Dirichlet distribution with parameter $\sum_{i=1}^{L}\boldsymbol{\alpha}_i$. 
\end{lemma}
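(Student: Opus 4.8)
The plan is to prove \cref{lemmaII} by induction on $L$, using \cref{lemmaI} both as the base case and as the inductive engine. For $L=2$ the statement is exactly \cref{lemmaI}, so assume it holds for any $L-1$ independent Dirichlet vectors, and let $\boldsymbol{X}_1,\dots,\boldsymbol{X}_L$ and $\boldsymbol{\pi}=(\pi_1,\dots,\pi_L)$ be as in the statement.

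First I would peel off the last weight. Set $S=\pi_1+\cdots+\pi_{L-1}$, so $\pi_L=1-S$. By the aggregation property of the Dirichlet distribution, $(S,1-S)\sim Beta\!\left(\sum_{i=1}^{L-1}\alpha_{i\cdot},\,\alpha_{L\cdot}\right)$; and by the neutrality of the Dirichlet distribution, the renormalized subvector $\widetilde{\boldsymbol{\pi}}:=(\pi_1/S,\dots,\pi_{L-1}/S)$ is distributed as $Dir\!\left(\alpha_{1\cdot},\dots,\alpha_{(L-1)\cdot}\right)$ and is independent of $S$. Since $\boldsymbol{\pi}\perp(\boldsymbol{X}_1,\dots,\boldsymbol{X}_L)$ and $(\widetilde{\boldsymbol{\pi}},S)$ is a function of $\boldsymbol{\pi}$, the three objects $\widetilde{\boldsymbol{\pi}}$, $S$, and $(\boldsymbol{X}_1,\dots,\boldsymbol{X}_L)$ are mutually independent. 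Now rewrite
\[
  \sum_{i=1}^{L}\pi_i\boldsymbol{X}_i \;=\; S\!\left(\sum_{i=1}^{L-1}\frac{\pi_i}{S}\,\boldsymbol{X}_i\right) + (1-S)\,\boldsymbol{X}_L .
\]

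Next I would apply the inductive hypothesis to the bracketed term: the weights $\widetilde{\boldsymbol{\pi}}\sim Dir(\alpha_{1\cdot},\dots,\alpha_{(L-1)\cdot})$ are independent of the independent vectors $\boldsymbol{X}_1,\dots,\boldsymbol{X}_{L-1}$, so $\boldsymbol{Y}:=\sum_{i=1}^{L-1}(\pi_i/S)\boldsymbol{X}_i\sim Dir\!\left(\sum_{i=1}^{L-1}\boldsymbol{\alpha}_i\right)$; moreover $\boldsymbol{Y}$ is a function of $(\widetilde{\boldsymbol{\pi}},\boldsymbol{X}_1,\dots,\boldsymbol{X}_{L-1})$, hence jointly independent of the pair $(S,\boldsymbol{X}_L)$. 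Finally I would invoke \cref{lemmaI} with its ``$\boldsymbol{X}_1$'' taken to be $\boldsymbol{Y}\sim Dir(\sum_{i=1}^{L-1}\boldsymbol{\alpha}_i)$, its ``$\boldsymbol{X}_2$'' taken to be $\boldsymbol{X}_L\sim Dir(\boldsymbol{\alpha}_L)$, and its ``$\pi$'' taken to be $S\sim Beta(\sum_{i=1}^{L-1}\alpha_{i\cdot},\alpha_{L\cdot})$ --- the Beta parameters match because $\sum_{i=1}^{L-1}\alpha_{i\cdot}$ is precisely the sum of the entries of $\sum_{i=1}^{L-1}\boldsymbol{\alpha}_i$. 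This gives $S\boldsymbol{Y}+(1-S)\boldsymbol{X}_L\sim Dir\!\left(\sum_{i=1}^{L-1}\boldsymbol{\alpha}_i+\boldsymbol{\alpha}_L\right)=Dir\!\left(\sum_{i=1}^{L}\boldsymbol{\alpha}_i\right)$, closing the induction.

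The only delicate point is the bookkeeping of independence in the inductive step --- in particular that $\widetilde{\boldsymbol{\pi}}$ is independent of both $S$ and every $\boldsymbol{X}_i$, which is where the aggregation/neutrality properties of the Dirichlet enter. One can avoid appealing to those properties via the Gamma representation: write $X_{ij}=W_{ij}/W_{i\cdot}$ with $W_{ij}\sim Gamma(\alpha_{ij},1)$ mutually independent and $W_{i\cdot}=\sum_{j}W_{ij}$, and realize $\boldsymbol{\pi}$ as $(W_{1\cdot},\dots,W_{L\cdot})/\sum_{m}W_{m\cdot}$, which has the required distribution and, depending only on the $W_{i\cdot}$'s, is independent of every $\boldsymbol{X}_i$. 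Then $\sum_{i}\pi_i\boldsymbol{X}_i=\big(\sum_i W_{i1},\dots,\sum_i W_{ik}\big)/\sum_{m}W_{m\cdot}$, and since $\sum_i W_{ij}\sim Gamma(\sum_i\alpha_{ij},1)$ independently across $j$, this is immediately $Dir(\sum_i\boldsymbol{\alpha}_i)$. I would present the induction in the main text, for consistency with \cref{lemmaI}, and could add the Gamma argument as a one-line remark.
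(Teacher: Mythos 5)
Your proof is correct, but your primary route differs from the paper's. The paper proves \cref{lemmaII} directly via the Gamma representation: it realizes each $\boldsymbol{X}_i$ as a normalized vector of independent $Gamma(\alpha_{ij},\lambda)$ variables $W_{ij}$, realizes $\boldsymbol{\pi}$ as the normalized vector of row totals $W_{i\cdot}$, and reads off $\sum_i\pi_i\boldsymbol{X}_i=\bigl(\sum_iW_{i1},\dots,\sum_iW_{ik}\bigr)/\sum_mW_{m\cdot}\sim Dir\bigl(\sum_i\boldsymbol{\alpha}_i\bigr)$ by the reproductive property of the Gamma --- exactly the argument you relegate to a closing remark. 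Your main argument instead proceeds by induction on $L$, peeling off the last weight and invoking \cref{lemmaI} together with the aggregation and neutrality properties of the Dirichlet distribution; the independence bookkeeping you carry out (that $\widetilde{\boldsymbol{\pi}}$, $S$, and the $\boldsymbol{X}_i$'s are mutually independent, hence $\boldsymbol{Y}\perp(S,\boldsymbol{X}_L)$) is the delicate part and you handle it correctly. The trade-off: your induction uses \cref{lemmaI} as a black box and never re-opens the Gamma construction, at the cost of importing two further Dirichlet facts and a multi-step argument; the paper's (and your remark's) direct coupling is a one-line computation once the representation is set up, though it silently relies on the Basu-theorem step from the proof of \cref{lemmaI} to justify that $\boldsymbol{\pi}$, built from the row totals, is independent of every $\boldsymbol{X}_i$. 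Either proof is acceptable; presenting the induction in the text with the Gamma argument as a remark, as you propose, is a reasonable organization, though the paper opts for the reverse emphasis.
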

The proof is provided in Section \ref{supp-sec::proof_lemma} of the Supplementary Materials.

\section{Representations of the Graphical Dirichlet Process}\label{sec:representations}
In this section, we characterize the proposed GDP through (i) the hypergraph representation, (ii) the stick-breaking representation, (iii) the restaurant-type process representation, and (iv) the limit of finite mixture representation.

\subsection{The hypergraph representation}
The GDP, along with the hyperpriors on the concentration parameters and mixture weights, can be represented hierarchically as,
\begin{align}
\begin{aligned}
\label{gdp equations}
    & \alpha_1^{(0)} \mid \alpha_0 \sim Gamma\left(\alpha_0, 1\right),\\
    & G_1^{(0)} \mid \alpha_1^{(0)}, G_0 \sim DP\left(\alpha_1^{(0)}, G_0\right),\\
    & \alpha_j^{(k)} \mid \{\alpha_l^{(k-1)} : l \in an^{(k, 1)}(j)\} \sim Gamma\left(\sum_{l\in an^{(k, 1)}(j)} \alpha_l^{(k-1)}, 1\right),\\
    & \{\pi_{jl}^{(k)}\, :\, l \in an^{(k, 1)}(j)\} \mid \{\alpha_l^{(k-1)} : l \in an^{(k, 1)}(j)\} \sim Dir\left(\{\alpha_l^{(k-1)} : l \in an^{(k, 1)}(j)\}\right), \\
    &  G_j^{(k)} \mid \alpha_j^{(k)}, \{G_l^{(k-1)} : l \in an^{(k, 1)}(j)\}  \sim DP\left(\alpha_j^{(k)}, \sum_{l\in an^{(k, 1)}(j)} \pi_{jl}^{(k)} G_l^{(k-1)}\right),  
    \end{aligned}
\end{align}
for $j = 1,2, \dots, l_k$ and $k=1,\dots,K$.

The hyperparameters of the GDP consist of the base probability measure $G_0$ and the concentration parameter $\alpha_0$. 
The probability measure $G_1^{(0)}$ of the root node varies around the base measure $G_0$ with the amount of variability governed by $\alpha_1^{(0)}$, which in turn is governed by the hyperparameter $\alpha_0$. We now present a novel hypergraph representation of GDP, which simplifies the graph-based distribution. The representation follows from the gamma-DAG distribution on the concentration parameters and standard properties of Dirichlet distribution. 

\begin{theorem}[Hypergraph Representation]
\label{main theorem}
 Consider a DAG $D$ that has $K$ layers and $l_k$ distinct nodes in layer $k$ for $k=1,\dots,K$.
Under model \eqref{gdp equations}, the distribution of the random measure $G_j^{(k)}$ of node $j$ in layer $k$ of DAG $D$ can be equivalently represented as,
\begin{align*}
    & G_j^{(k)} \mid  \alpha_j^{(k)}, H_j^{(k,k)} \sim DP\left(\alpha_j^{(k)},  H_j^{(k,k)}\right),\\
    & H_j^{(k,k)} \mid \{\alpha_l^{(k-1)} : l \in an^{(k, 1)}(j)\}, H_j^{(k,k-1)} \sim DP\left(\sum_{l\in an^{(k, 1)}(j)} \alpha_l^{(k-1)}, H_j^{(k,k-1)}\right),\\
    & H_j^{(k,k-1)} \mid \{\alpha_l^{(k-2)} : l \in an^{(k, 2)}(j)\}, H_j^{(k,k-2)} \sim DP\left(\sum_{l\in an^{(k, 2)}(j)} \alpha_l^{(k-2)}, H_j^{(k,k-2)}\right),\\
    & \vdots \\
    & H_j^{(k,2)} \mid \{\alpha_l^{(1)} : l \in  an^{(k, k-1)}(j)\}, G_1^{(0)} \sim DP\left(\sum_{l\in an^{(k, k-1)}(j)} \alpha_l^{(1)}, G_1^{(0)}\right).
\end{align*}
\end{theorem}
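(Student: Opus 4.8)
\section*{Proof proposal}

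The plan is to prove the claim by induction on the layer index $k$, with Lemma~\ref{lemmaII} --- or rather its infinite-dimensional, Dirichlet-process version, Theorem~1 of \citealp{williamson2013parallel} (a mixture of independent DPs with Dirichlet weights whose parameters equal the component concentration parameters is again a DP; its proof reduces to applying Lemma~\ref{lemmaII} to the finite-dimensional Dirichlet marginals that characterise a DP) --- as the workhorse at each step, together with the gamma-DAG prior \eqref{eq:alpha_prior} and the Dirichlet hyperprior on the mixture weights. Everything below is conditional on the concentration parameters indexed by the ancestral sub-DAG of the node in question and on $G_0$. The base case $k=1$ is immediate: $an^{(1,1)}(j)$ consists of the root alone, the weight in \eqref{gdp distribution eq 2} equals $1$, and the asserted chain reduces to $G_j^{(1)}\mid\alpha_j^{(1)},G_1^{(0)}\sim DP(\alpha_j^{(1)},G_1^{(0)})$, which is the definition of the GDP.

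For the inductive step, fix $j$ in layer $k\geq 2$ and consider its base measure $B:=\sum_{l\in an^{(k,1)}(j)}\pi_{jl}^{(k)}G_l^{(k-1)}$. By the Markov property of $D$, the random measures $\{G_l^{(k-1)}:l\in an^{(k,1)}(j)\}$ are conditionally independent given their parents' measures, each distributed as $DP(\alpha_l^{(k-1)},B_l^{(k-1)})$ with $B_l^{(k-1)}=\sum_{m\in pa(l)}\pi_{lm}^{(k-1)}G_m^{(k-2)}$; moreover the weight vector $\{\pi_{jl}^{(k)}\}$ is $Dir(\{\alpha_l^{(k-1)}\})$, with parameters matching these concentrations and independent of the measures. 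Lemma~\ref{lemmaII} then gives $B\sim DP\big(\sum_{l}\alpha_l^{(k-1)},\ \sum_{l}w_l B_l^{(k-1)}\big)$ with $w_l=\alpha_l^{(k-1)}/\sum_{l'}\alpha_{l'}^{(k-1)}$; setting $H_j^{(k,k)}:=B$ yields the first two lines of the representation.

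The remaining lines follow by iterating this collapse up the ancestral sub-DAG. The base $\sum_{l}w_l B_l^{(k-1)}$ is again a weighted average, now of the generation-$2$ ancestor measures $\{G_m^{(k-2)}:m\in an^{(k,2)}(j)\}$, each being $DP(\alpha_m^{(k-2)},B_m^{(k-2)})$. The key point is that the induced coefficient vector on these measures is again Dirichlet with parameters equal to the $\{\alpha_m^{(k-2)}\}$, so that Lemma~\ref{lemmaII} reapplies and produces $DP\big(\sum_{m\in an^{(k,2)}(j)}\alpha_m^{(k-2)},\ \cdot\big)=:H_j^{(k,k-1)}$; this is where the form of the hyperpriors is essential. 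Under \eqref{eq:alpha_prior} each $\alpha_l^{(k-1)}$ is $Gamma\big(\sum_{m\in pa(l)}\alpha_m^{(k-2)},1\big)$ and, independently, $\{\pi_{lm}^{(k-1)}\}\sim Dir(\{\alpha_m^{(k-2)}\})$, so by the gamma-Dirichlet identity the products $\gamma_{lm}:=\alpha_l^{(k-1)}\pi_{lm}^{(k-1)}$ are independent $Gamma(\alpha_m^{(k-2)},1)$ variables; grouping the terms of $\sum_{l}w_l B_l^{(k-1)}$ by the ancestor index $m$ makes the coefficient of $G_m^{(k-2)}$ a normalised sum of such gammas, which identifies the whole coefficient vector as $Dir(\{\alpha_m^{(k-2)}\})$, independent of the measures. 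Repeating this generation by generation reduces the base through ancestor measures of generations $3,4,\dots$, and after $k-1$ passes terminates at $H_j^{(k,2)}\sim DP\big(\sum_{l\in an^{(k,k-1)}(j)}\alpha_l^{(1)},G_1^{(0)}\big)$, since the generation-$(k-1)$ ancestors of a layer-$k$ node are the children of the root and each has base measure $G_1^{(0)}$.

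I expect the weight-regrouping step to be the real obstacle: one must track how weights attached to a single ancestor reached along several directed paths aggregate, and check that the aggregate is still a Dirichlet vector with the correct parameters --- this is the combinatorial content of the hyperedges $an^{(k,m)}(j)$ and the reason the gamma-DAG construction (rather than an arbitrary prior on the $\alpha$'s) is used. A clean way to run the induction is to strengthen the hypothesis: assume that for every node in a layer below $k$ the base of its DP can be written as a mixture of its parent-layer DPs with Dirichlet weights whose parameters match those DPs' concentrations, and then show that the gamma-Dirichlet telescoping above propagates this property one layer up. The only other ingredient is the routine verification, via d-separation in $D$, of the conditional-independence statements invoked above.
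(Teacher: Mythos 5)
Your base case and the first collapse are fine: conditionally on the parents' base measures, the $\{G_l^{(k-1)} : l\in an^{(k,1)}(j)\}$ are independent DPs whose total concentrations match the Dirichlet parameters of $\{\pi_{jl}^{(k)}\}$, so Lemma~\ref{lemmaII}, applied to finite-dimensional marginals, yields the first two lines of the representation; this is exactly how the paper treats the layer-2 nodes $G_5,G_6,G_7$ in its appendix. The gap is at the step you yourself flag as ``the real obstacle,'' and it does not resolve the way you assert. Your gamma--Dirichlet telescoping correctly makes the $\gamma_{lm}=\alpha_l^{(k-1)}\pi_{lm}^{(k-1)}$ independent $Gamma(\alpha_m^{(k-2)},1)$ variables, but when you regroup $\sum_l w_l B_l^{(k-1)}$ by the \emph{distinct} ancestor $m$, the coefficient of $G_m^{(k-2)}$ is $\sum_{l:\,m\in pa(l)}\gamma_{lm}\big/\sum\gamma$, whose numerator is $Gamma(c_m\alpha_m^{(k-2)},1)$ with $c_m$ the number of nodes in $an^{(k,1)}(j)$ having $m$ as a parent. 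Hence the coefficient vector is $Dir\left(\{c_m\alpha_m^{(k-2)}\}\right)$, not $Dir\left(\{\alpha_m^{(k-2)}\}\right)$ as you claim, and Lemma~\ref{lemmaII} cannot be reapplied because its hypothesis requires the weight parameters to equal the total concentration masses $\alpha_m^{(k-2)}$ of the component DPs. Keeping the sum indexed by pairs $(l,m)$ does not help either: the parameters then match, but the repeated components $G_m^{(k-2)}$ are identical across different $l$, violating the independence hypothesis of the lemma. So the inductive step fails for any node with a shared generation-2 ancestor (i.e.\ $c_m>1$), which is precisely the situation in the motivating DAG.

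For comparison, the paper's appendix takes a different route at this point: it applies Lemma~\ref{lemmaII} to the mixture of the already-collapsed hypernode measures $H_1,H_2,H_3$, whose concentrations $\alpha_2+\alpha_4$, $\alpha_2+\alpha_3$, $\alpha_3+\alpha_4$ do match the Dirichlet parameters of $(\alpha_5,\alpha_6,\alpha_7)/\alpha^*$; that is how the multiplicity surfaces as the factor $2$ in $H^*\sim DP(2(\alpha_2+\alpha_3+\alpha_4),G_1)$, already contradicting your unweighted parameters. But note that this application needs $H_1,H_2,H_3$ to be conditionally independent given $G_1$, and they are not --- they share $G_2,G_3,G_4$ --- so the obstacle you identified is genuine and is not dispatched by either route. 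Indeed, a direct second-moment check in the example (take $\alpha_2=\alpha_3=\alpha_4=1$ and a set $A$ with $G_1(A)=1/2$; then $H^*(A)=d_2G_2(A)+d_3G_3(A)+d_4G_4(A)$ with $(d_2,d_3,d_4)\sim Dir(2,2,2)$ independent of the $G_i(A)\sim Beta(1/2,1/2)$, giving $\mathrm{Var}(H^*(A))=3/56$ rather than the $1/28$ required of $Beta(3,3)$) shows the collapsed base measure is not marginally a DP once an ancestor is reachable through more than one parent. Your proposal therefore does not close as written; a correct argument must confront the shared-ancestor dependence head-on rather than assume the aggregated Dirichlet parameters come out matched.
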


The proof is provided in the Appendix \ref{appendix_GDP_example}. In words, Theorem \ref{main theorem} essentially states the following. The distribution of $G_j^{(k)}$  is a DP with a hidden base measure $H_j^{(k,k)}$ and the concentration parameter $\alpha_j^{(k)}$. 
The hidden base measure $H_j^{(k,k)}$, in turn, is again a DP with base measure $H_j^{(k,k-1)}$ and concentration parameter being the sum of the concentration parameters of the generation-$1$ ancestors of $G_j^{(k)}$. Recursively, the hidden base measure $H_j^{(k,k-1)}$ is a DP with base measure $H_j^{(k,k-2)}$ and the concentration parameter being the sum of the concentration parameters of the generation-$2$ ancestors. This distributional pattern continues in a hierarchical fashion. Through $k-1$ hidden base measures, any node in layer $k$ can be seen to depend on the root node $G_1^{(0)}$ through its ancestral relationships. We call the representation of GDP in Theorem \ref{main theorem} as the hypergraph representation because one can view $H_j^{(k,k-a)}$ for $a=0,\dots,k-2$ as a hypernode that contains all the sufficient information from generation-$(a+1)$ ancestors of $G_j^{(k)}$.
We provide in Figure \ref{DAG for random measures with hypernodes} an illustrative example of the hypergraph representation showing how the hypernodes contain all the ancestral information.  
From Figure \ref{fig::Hypernode H2}, we can see that the distribution of $G_6$ depends on the distribution of its parents, $G_2$ and $G_3$. We refer to $H_2$, consisting of $\{G_2, G_3\}$, as a hypernode. Hypernode $H_2$ contains all the information about the parents of $G_6$. Loosely speaking, the information of the root node $G_1$ (e.g., its atoms) is passed to $G_6$ through $H_2$. 
Similarly, $H_3$, being the hypernode of $\{G_3,G_4\}$, contains all the information about $G_7$ from its parent nodes allowing the flow of information from the root node (see Figure \ref{fig::Hypernode H3}). For node $G_8$, we have two levels of hypernodes -- $H_4$ denotes the first layer and consists of the parents of $G_8$, and $H^*$ denotes the second layer and consists of generation-$2$ ancestors of $G_8$. Thus, hypernodes $H_4$ and $H^*$ carry all the information from the root node $G_1$ to $G_8$ as illustrated in Figure \ref{fig::Hypernode H4 and H*}. 

\begin{figure}[htp]
\begin{center}
\begin{subfigure}[t]{0.32\textwidth}
        \centering
\begin{tikzpicture}[node distance={15mm}, thick, main/.style = {draw, circle}, scale=0.75, transform shape] 
\node[main, rectangle] (0) {$G_0$}; 
\node[main] (1) [below of=0] {$G_1$}; 
\node[main] (3) [below of=1] {$G_3$};
\node[main] (2) [left of=3] {$G_2$}; 
\node[ellipse, dashed, draw=black, fit=(2) (3), inner sep=0.5mm, label=above left:{$H_2$}] (all) {};
\node[main] (4) [right of=3] {$G_4$}; 
\node[main] (5) [below of=2] {$G_5$}; 
\node[main] (6) [below of=3] {$G_6$};
\node[main] (7) [below of=4] {$G_7$};
\node[main] (8) [below of=6] {$G_8$};
\draw[->] (0) -- (1); 
\draw[->] (1) -- (2); 
\draw[->] (1) -- (3); 
\draw[->] (1) -- (4);
\draw[->] (2) -- (5); 
\draw[->] (2) -- (6);
\draw[->] (3) -- (6); 
\draw[->] (3) -- (7); 
\draw[->] (4) -- (5);
\draw[->] (4) -- (7);
\draw[->] (5) -- (8);
\draw[->] (6) -- (8);
\draw[->] (7) -- (8);
\end{tikzpicture} 
\caption{}
\label{fig::Hypernode H2}
\end{subfigure}
\hspace{1mm}
\begin{subfigure}[t]{0.32\textwidth}
        \centering
\begin{tikzpicture}[node distance={15mm}, thick, main/.style = {draw, circle}, scale=0.75, transform shape] 
\node[main, rectangle] (0) {$G_0$}; 
\node[main] (1) [below of=0] {$G_1$}; 
\node[main] (3) [below of=1] {$G_3$};
\node[main] (2) [left of=3] {$G_2$}; 
\node[main] (4) [right of=3] {$G_4$}; 
\node[ellipse, dashed, draw=black, fit=(3) (4), inner sep=0.5mm, label=above right:{$H_3$}] (all) {};
\node[main] (5) [below of=2] {$G_5$}; 
\node[main] (6) [below of=3] {$G_6$};
\node[main] (7) [below of=4] {$G_7$};
\node[main] (8) [below of=6] {$G_8$};
\draw[->] (0) -- (1); 
\draw[->] (1) -- (2); 
\draw[->] (1) -- (3); 
\draw[->] (1) -- (4);
\draw[->] (2) -- (5); 
\draw[->] (2) -- (6);
\draw[->] (3) -- (6); 
\draw[->] (3) -- (7); 
\draw[->] (4) -- (5);
\draw[->] (4) -- (7);
\draw[->] (5) -- (8);
\draw[->] (6) -- (8);
\draw[->] (7) -- (8);
\end{tikzpicture} 
\caption{}
\label{fig::Hypernode H3}
\end{subfigure}
\hspace{1mm}
\begin{subfigure}[t]{0.32\textwidth}
        \centering
\begin{tikzpicture}[node distance={15mm}, thick, main/.style = {draw, circle}, scale=0.75, transform shape] 
\node[main, rectangle] (0) {$G_0$}; 
\node[main] (1) [below of=0] {$G_1$}; 
\node[main] (3) [below of=1] {$G_3$};
\node[main] (2) [left of=3] {$G_2$}; 
\node[main] (4) [right of=3] {$G_4$}; 
\node[ellipse, dashed, draw=black, fit=(2) (3) (4), inner sep=0.5mm, label=right:{$H^*$}] (all) {};
\node[main] (5) [below of=2] {$G_5$}; 
\node[main] (6) [below of=3] {$G_6$};
\node[main] (7) [below of=4] {$G_7$};
\node[ellipse, dashed, draw=black, fit=(5) (6) (7), inner sep=0.5mm, label=right:{$H_4$}] (all) {};
\node[main] (8) [below of=6] {$G_8$};
\draw[->] (0) -- (1); 
\draw[->] (1) -- (2); 
\draw[->] (1) -- (3); 
\draw[->] (1) -- (4);
\draw[->] (2) -- (5); 
\draw[->] (2) -- (6);
\draw[->] (3) -- (6); 
\draw[->] (3) -- (7); 
\draw[->] (4) -- (5);
\draw[->] (4) -- (7);
\draw[->] (5) -- (8);
\draw[->] (6) -- (8);
\draw[->] (7) -- (8);
\end{tikzpicture} 
\caption{}
\label{fig::Hypernode H4 and H*}
\end{subfigure}
\caption{Illustration of hypernodes (represented by dashed ovals) of the DAG for our motivational problem. (a) Hypernode $H_2$ consists of the generation-1 ancestors (i.e., $G_2$ and $G_3$) of node $G_6$. (b) Hypernode $H_3$ consists of the generation-1 ancestors (i.e., $G_3$ and $G_4$) of node $G_7$. (c) Hypernode $H_4$ consists of the generation-1 ancestors (i.e., $G_5$, $G_6$, and $G_7$) of node $G_8$. Hypernode $H^*$ consists of the generation-2 ancestors (i.e., $G_2$, $G_3$, and $G_4$) of node $G_8$.}
\label{DAG for random measures with hypernodes}
\end{center}
\end{figure}
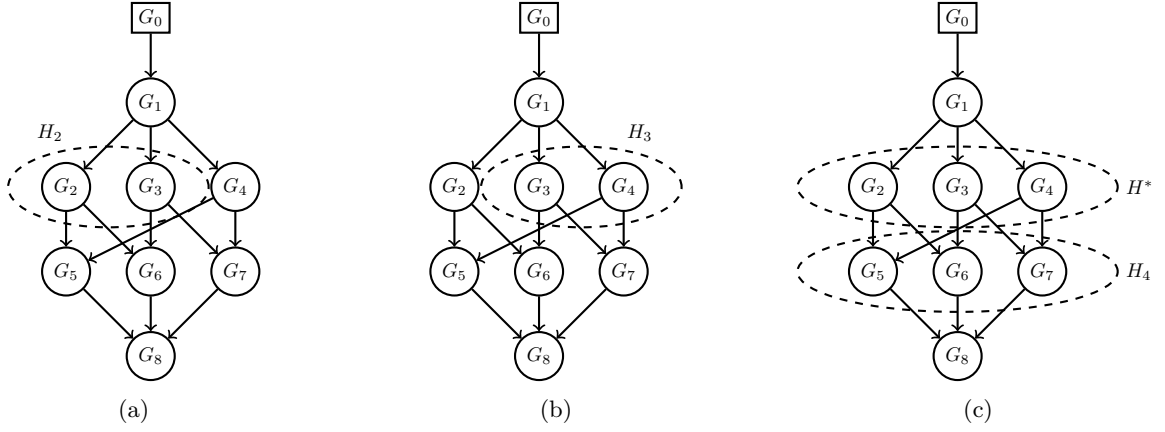

We will exploit this representation to derive the stick-breaking representation and the limit of finite mixture representation of the proposed GDP in the next subsections.

\subsection{The stick-breaking representation}
Given that the random measure $G_1^{(0)}$ of the root node is distributed as a DP, it can be expressed using a stick-breaking representation, 
\begin{equation}
    G_1^{(0)} = \sum_{l=1}^{\infty} \beta_{1l}^{(0)} \delta_{\phi_l},
\end{equation}
where $\phi_l \overset{iid}{\sim} G_0$ and $\boldsymbol{\beta}_1^{(0)} = \left(\beta_{1l}^{(0)}\right)_{l=1}^{\infty}  \sim \text{GEM}\left(\alpha_1^{(0)}\right)$ are mutually independent. We interpret $\boldsymbol{\beta}_1^{(0)}$ as a probability measure on the positive integers. Since $G_1^{(0)}$ has support at the atoms $\boldsymbol{\phi} = (\phi_{l})_{l=1}^{\infty} $, each $G_j^{(k)}$ necessarily has support at these atoms as well and hence can be expressed as, 
\begin{equation}
        G_j^{(k)} = \sum_{l = 1}^{\infty} \beta_{jl}^{(k)} \delta_{\phi_l}.
\end{equation}
As with Theorem \ref{main theorem}, the stick-breaking weights depend hierarchically on a set of hidden weights. Letting $\boldsymbol{\beta}_{j}^{(k)} = \left(\beta_{jl}^{(k)}\right)_{l=1}^{\infty}$ be the stick-breaking weights for node $j$ in layer $k$ of DAG $D$ and letting $\boldsymbol{\nu}_{j}^{(k, m)} = \left(\nu_{jl}^{(k,m)}\right)_{l=1}^{\infty}, m = 2,\dots, k$ be their hidden weights, we have the following corollary.
\begin{corollary}[Stick-Breaking Representation]
Consider a DAG $D$ that has $K$ layers and $l_k$ distinct nodes in layer $k$ for $k=1,\dots,K$. The stick-breaking weights $\boldsymbol{\beta}_j^{(k)}$ of node $j$ at layer $k$ of DAG $D$ can be represented as
\begin{align*}
    & \boldsymbol{\beta}_{j}^{(k)} \mid  \alpha_j^{(k)}, \boldsymbol{\nu}_j^{(k,k)} \sim DP\left(\alpha_j^{(k)},  \boldsymbol{\nu}_j^{(k,k)}\right),\\
    & \boldsymbol{\nu}_j^{(k,k)} \mid \{\alpha_l^{(k-1)} : l \in an^{(k, 1)}(j)\}, \boldsymbol{\nu}_j^{(k, k-1)} \sim DP\left(\sum_{l\in an^{(k, 1)}(j)} \alpha_l^{(k-1)}, \boldsymbol{\nu}_j^{(k,k-1)}\right),\\
    & \boldsymbol{\nu}_j^{(k,k-1)} \mid \{\alpha_l^{(k-2)} : l \in an^{(k, 2)}(j)\}, \boldsymbol{\nu}_j^{(k,k-2)} \sim DP\left(\sum_{l\in an^{(k, 2)}(j)} \alpha_l^{(k-2)}, \boldsymbol{\nu}_j^{(k,k-2)}\right),\\
    & \vdots \\
    & \boldsymbol{\nu}_j^{(k,2)} \mid \{\alpha_l^{(1)} : l \in  an^{(k, k-1)}(j)\}, \boldsymbol{\beta}_1^{(0)} \sim DP\left(\sum_{l\in an^{(k, k-1)}(j)} \alpha_l^{(1)}, \boldsymbol{\beta}_1^{(0)}\right).
\end{align*}

\end{corollary}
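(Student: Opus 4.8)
The plan is to deduce the corollary directly from the hypergraph representation in Theorem~\ref{main theorem} by transporting the entire hierarchy of Dirichlet processes to the positive integers through the atom-labeling map. First I would fix the atoms of the root measure: write $G_1^{(0)} = \sum_{l=1}^{\infty}\beta_{1l}^{(0)}\delta_{\phi_l}$ with $\phi_l \overset{iid}{\sim} G_0$ and $\boldsymbol{\beta}_1^{(0)} \sim \mathrm{GEM}(\alpha_1^{(0)})$ mutually independent, and (as elsewhere in the paper) take $G_0$ to be non-atomic so that the $\phi_l$ are almost surely distinct. Since draws from a DP are almost surely discrete with support contained in the support of the base measure \citep{Sethuraman}, and the base measure at the bottom of the recursion in Theorem~\ref{main theorem} is $G_1^{(0)}$, a short downward induction along that recursion shows that every hidden measure $H_j^{(k,m)}$ and every $G_j^{(k)}$ is almost surely supported on $\{\phi_l\}_{l\ge 1}$. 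Hence each can be written as $H_j^{(k,m)} = \sum_{l=1}^{\infty}\nu_{jl}^{(k,m)}\delta_{\phi_l}$ and $G_j^{(k)} = \sum_{l=1}^{\infty}\beta_{jl}^{(k)}\delta_{\phi_l}$ for well-defined random weight sequences $\boldsymbol{\nu}_j^{(k,m)}$ and $\boldsymbol{\beta}_j^{(k)}$ on the positive integers.

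Next I would invoke the standard fact that the Dirichlet process is preserved under measurable maps of the sample space: if $T$ is measurable and $G \sim DP(\alpha,H)$, then the pushforward $T_{\#}G \sim DP(\alpha, T_{\#}H)$, and this operation respects (conditional) independence \citep{ferguson}. Working conditionally on $\boldsymbol{\phi} = (\phi_l)_{l\ge 1}$, let $T$ be the labeling map $\phi_l \mapsto l$, which is a measurable bijection from $\{\phi_l\}_{l\ge 1}$ onto the positive integers because the $\phi_l$ are distinct. Applying $T_{\#}$ to each line of Theorem~\ref{main theorem} turns $DP(\cdot, H_j^{(k,m)})$ into $DP(\cdot, \boldsymbol{\nu}_j^{(k,m)})$ and $DP(\cdot, G_1^{(0)})$ into $DP(\cdot, \boldsymbol{\beta}_1^{(0)})$, while leaving the concentration parameters unchanged since they are functions of the $\alpha$'s alone and hence do not involve the atom labels. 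Because $T_{\#}$ is a bijection here, these statements hold jointly rather than only marginally, so the whole recursive display transfers verbatim to the weight sequences; averaging over $\boldsymbol{\phi}$, which is independent of all the weight sequences, then yields exactly the stated stick-breaking representation together with the identity $G_j^{(k)} = \sum_{l=1}^{\infty}\beta_{jl}^{(k)}\delta_{\phi_l}$.

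The main obstacle is bookkeeping around the atom labels rather than anything deep: one must verify that the $\phi_l$ are almost surely distinct so that $T$ is well-defined and bijective, that the support of every DP draw appearing in the recursion is contained in $\{\phi_l\}_{l\ge 1}$ so that the weight sequences exist at all, and that the pushforward genuinely preserves the DAG's conditional-independence structure so that the recursive display is valid as a joint law and not merely as a collection of marginals. If one prefers not to assume $G_0$ diffuse, the same conclusion follows after attaching i.i.d.\ marks to the atoms to break ties; alternatively, one can bypass the pushforward entirely and re-run the induction used to prove Theorem~\ref{main theorem} at the level of stick-breaking weights, repeatedly applying Lemma~\ref{lemmaII} to the finite Dirichlet mixtures of $\mathrm{GEM}$-type weight vectors. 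I would note this alternative but present the pushforward argument as the main proof.
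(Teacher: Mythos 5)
Your proof is correct and follows exactly the route the paper intends: the paper omits the argument, stating only that the corollary ``directly follows from the hypergraph representation of Theorem~\ref{main theorem},'' and your pushforward of that theorem's DP hierarchy through the atom-labeling map $\phi_l \mapsto l$ (with the support-containment induction and the distinctness of the $\phi_l$ checked) is precisely the standard way to make that claim rigorous, mirroring the analogous step for the HDP. The care you take with the non-atomicity of $G_0$ and with transferring the joint (not merely marginal) law is a welcome filling-in of details the paper leaves implicit.
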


The proof of this corollary directly follows from the hypergraph representation of Theorem \ref{main theorem} and is hence omitted. We call this representation \emph{the stick-breaking representation} where $\boldsymbol{\nu}_j^{(k, k)}$ is interpreted as a hidden probability measure on the set of positive integers corresponding to the first hidden layer. Each hidden layer of stick-breaking weights depend hierarchically on its previous hidden layer, denoted by $\boldsymbol{\nu}_j^{(k,k-1)}$, $\boldsymbol{\nu}_j^{(k,k-2)}$, and so on, and finally on the weights $\boldsymbol{\beta}_1^{(0)}$ of the root node.

\subsection{The family-owned restaurant process representation}\label{sec:forp}
DP and HDP have the well-known Chinese restaurant process and franchise representations. Here, we provide a culinary analog for the proposed GDP. We refer to this process as the \emph{family-owned restaurant process} as it is customary to use familial relationships to describe the relationships between nodes in a DAG. 
The metaphor is as follows. An original restaurant is opened by the ancestor of a family (the root node), which serves some dishes from a global menu containing an infinite number of dishes. The descendants of the ancestor open their own respective restaurants, which serve some of the dishes already being served in the restaurants owned by their parents and possibly some new dishes from the global menu. 
At each table of the original restaurant, one dish is ordered from the menu by the first customer occupying the table, and the dish is shared by all the other customers who sit at that table. Any subsequent customer may either join an occupied table and share the dish being served at that table or open a new table with a new dish from the menu. In restaurants other than the original restaurant, however, the first customer might choose to select a dish being served at one of the tables of its parent restaurant or order a new dish from the menu. Since the hypergraph representation of GDP involves hypernodes with hidden probability measures, we introduce a notation for the number of tables serving a dish in any restaurant and demarcate them with the notation for the number of tables serving the dish in the hypernodes, which we refer to as hyper-restaurants. 

As before, assume that there are $K$ generations in the family and there are $l_k$ different restaurants in generation $k$. The restaurants correspond to the nodes of DAG $D$. The customers coming in restaurant $j$ of generation $k$ correspond to parameters $\theta_{ji}^{(k)}$. Let $\phi_1, \phi_2, \dots, \phi_L$ denote i.i.d. random variables distributed according to the base distribution $G_0$, which are dishes from the global menu. 
To maintain a count of customers and tables, we introduce two notations. We use the notation $n_{jt}^{(k)}$ to denote the number of customers at table $t$ in the restaurant $j$ of generation $k$ and the notation $m_{jl}^{(k)}$ to denote the number of tables in the restaurant $j$ of generation $k$ that serve dish $l$. Marginal counts are represented by dots at the appropriate indices. For example, $m_{j\cdot}^{(k)}$ denotes the count of all the tables (regardless of what dishes being served) in the restaurant $j$ of generation $k$.
We introduce the notation $\psi_{jt}^{(k, k)}$ to denote the dish served at table $t$ in restaurant $j$ of generation $k$, chosen from the corresponding layer-$1$ hyper-restaurant ($H_j^{(k,k)})$. 

\par We integrate out random measures $\left\{G_j^{(k)}, H_j^{(k,k)}, H_j^{(k,k-1)}, \dots, G_1^{(0)}\right\}$ sequentially. First, we find the conditional distribution of $\theta_{ji}^{(k)}$ given $\theta_{j1}^{(k)}, \theta_{j2}^{(k)}, \dots, \theta_{j,i-1}^{(k)}, \alpha_j^{(k)}$, and $H_j^{(k,k)}$ with $G_j^{(k)}$ integrated out,
\begin{equation}
    \theta_{ji}^{(k)} \mid \theta_{j1}^{(k)}, \theta_{j2}^{(k)}, \dots, \theta_{j,i-1}^{(k)}, \alpha_j^{(k)}, H_j^{(k,k)} \sim \sum_{t = 1}^{m_{j\cdot}^{(k)}} \frac{n_{jt}^{(k)}}{i - 1 + \alpha_j^{(k)}}\delta_{\psi_{jt}^{(k,k)}} + \frac{\alpha_j^{(k)}}{i - 1 + \alpha_j^{(k)}} H_j^{(k,k)},
\end{equation}

We let $\psi_{jt}^{(k,k-1)}$ to denote the dish served at table $t$ in the layer-$1$ hyper-restaurant corresponding to restaurant $j$ of generation $k$, chosen from the dishes served in the layer-$2$ hyper-restaurants ($H_j^{(k,k-1)})$.
\sloppy Integrating out the hidden measure from the current layer $H_j^{(k,k)}$, the conditional distribution of $\psi_{jt}^{(k,k)}$ given $\psi_{j1}^{(k,k-1)}, \psi_{j2}^{(k,k-1)}, \dots, \psi_{j1}^{(k,k)}, \dots, \psi_{j,t-1}^{(k,k)}, \{\alpha_l^{(k-1)} : l \in an^{(k, 1)}(j)\}$, and the hidden measure from the previous layer,  $H_j^{(k,k-1)}$ is given by, 
\begin{multline}
        \psi_{jt}^{(k,k)} \mid \psi_{j1}^{(k,k-1)}, \psi_{j2}^{(k,k-1)}, \dots, \psi_{j1}^{(k,k)}, \dots, \psi_{j,t-1}^{(k,k)}, \{\alpha_l^{(k-1)} : l \in an^{(k, 1)}(j)\}, H_j^{(k,k-1)} \\
        \sim \mathlarger{\sum}_{l = 1}^{\scalebox{0.7}{$M_j^{{(k,1)}}$}} \frac{m_{jl}^{(k,1)}}{m_{j\cdot}^{(k, 1)} + \sum_{l\in an^{(k, 1)}(j)} \alpha_l^{(k-1)}}\mathlarger{\delta}_{\psi_{jl}^{(k,k-1)}} + \frac{\sum_{l\in an^{(k, 1)}(j)} \alpha_l^{(k-1)}}{m_{j\cdot}^{(k,1)} + \sum_{l\in an^{(k, 1)}(j)} \alpha_l^{(k-1)}} H_j^{(k,k-1)},
\end{multline}
where the notation $m_{j l}^{(k,1)}$ denotes the number of tables in layer-1 hyper-restaurant, corresponding to restaurant $j$ of generation $k$ serving the dish $l$. We denote by $M_j^{{(k,1)}}$ the number of dishes served in the layer-$1$ hyper-restaurants and by $m_{j\cdot}^{(k, 1)}$ the total number of tables in the layer-$1$ hyper-restaurant, corresponding to the restaurant $j$ of generation $k$. Similarly, integrating out the measure $H_j^{(k,k-1)}$ and introducing the next layer of variables $\psi_{jt}^{(k,k-2)}$, the conditional distribution of $\psi_{jt}^{(k, k-1)}$ given $\psi_{j1}^{(k,k-2)}, \psi_{j2}^{(k,k-2)}, \dots, \psi_{j1}^{(k, k-1)}, \dots, \psi_{j, t-1}^{(k,k-1)}, \{\alpha_l^{(k-2)} : l \in an^{(k, 2)}(j)\}$, and the hidden measure from the previous layer  $H_j^{(k,k-2)}$ is given by,
\begin{multline}
\psi_{jt}^{(k,k-1)} \mid \psi_{j1}^{(k,k-2)}, \psi_{j2}^{(k,k-2)}, \dots, \psi_{j1}^{(k,k-1)}, \dots, \psi_{j, t-1}^{(k,k-1)}, \{\alpha_l^{(k-2)} : l \in an^{(k, 2)}(j)\}, H_j^{(k,k-2)} \\
        \sim \mathlarger{\sum}_{l = 1}^{\scalebox{0.7}{$M_j^{{(k,2)}}$}} \frac{m_{jl}^{(k,2)}}{m_{j\cdot}^{(k,2)} + \sum_{l\in an^{(k, 2)}(j)} \alpha_l^{(k-2)}}\mathlarger{\delta}_{\psi_{jl}^{(k,k-2)}} + \frac{\sum_{l\in an^{(k, 2)}(j)} \alpha_l^{(k-2)}}{m_{j\cdot}^{(k,2)}+ \sum_{l\in an^{(k, 2)}(j)} \alpha_l^{(k-2)}} H_j^{(k,k-2)}.
\end{multline}
As in the stick-breaking representation, we can recursively integrate out hidden measures and eventually arrive at the conditional distribution of $\psi_{jt}^{(k, 2)}$ given $ \psi_{j1}^{(0)}, \psi_{j2}^{(0)}, \dots, \psi_{j1}^{(k, 2)}, \dots, \psi_{j,t-1}^{(k, 2)}, \{\alpha_l^{(1)} : l \in an^{(k, k-1)}(j)\}$, and the probability measure 
of the root node $G_1^{(0)}$,
\begin{multline}
        \psi_{jt}^{(k, 2)} \mid \psi_{j1}^{(0)}, \psi_{j2}^{(0)}, \dots, \psi_{j1}^{(k, 2)}, \dots, \psi_{j,t-1}^{(k, 2)}, \{\alpha_l^{(1)} : l \in an^{(k, k-1)}(j)\}, G_1^{(0)} \\
        \sim \mathlarger{\sum}_{l = 1}^{M_j^{(k, k-1)}} \frac{m_{jl}^{(k, k-1)}}{m_{j \cdot}^{(k, k-1)} + \sum_{l\in an^{(k, k-1)}(j)} \alpha_l^{(1)}}\mathlarger{\delta}_{\psi_{jl}^{(0)}} + \frac{\sum_{l\in an^{(k, k-1)}(j)} \alpha_l^{(1)}}{m_{j \cdot}^{(1)} + \sum_{l\in an^{(k, k-1)}(j)} \alpha_l^{(1)}} G_1^{(0)},
\end{multline}
and the conditional distribution of $\psi_{jt}^{(0)}$ given $\psi_{j1}^{(0)}, \dots, \psi_{j,t-1}^{(0)}, \alpha_1^{(0)}$, and the base measure $G_0$,
\begin{equation}\label{distribution_root_node}
        \psi_{jt}^{(0)} \mid \psi_{j1}^{(0)}, \dots, \psi_{j,t-1}^{(0)}, \alpha_1^{(0)}, G_0 
        \sim \mathlarger{\sum}_{l = 1}^{L} \frac{m_{l}^{(0)}}{m_{\cdot}^{(0)} + \alpha_1^{(0)}}\mathlarger{\delta}_{\phi_l} + \frac{\alpha_1^{(0)}}{m_{\cdot}^{(0)} + \alpha_1^{(0)}} G_0,
\end{equation}
where $m_{l}^{(0)}$ denotes the number of tables in the original restaurant serving dish $l$ and $m_{\cdot}^{(0)}$ denotes the total number of tables in the original restaurant. Note that \eqref{distribution_root_node} corresponds to the case where the root node is hidden (the same as in HDP). 
When the root node is not hidden, a similar formula can be derived, which is omitted for simplicity.

\subsection{The infinite limit of finite mixture model}
The GDP mixture model can  be derived as the infinite limit of a finite mixture model. Let us denote the observations and the mixture component indicator from node $j$ in layer $k$ of DAG $D$ by $x_{ji}^{(k)}$ and $z_{ji}^{(k)}$, respectively. Suppose $\boldsymbol{\beta}_1^{(0)}$ is the vector of mixing weights for the root node. Denoting by $\boldsymbol{\beta}_j^{(k)}$ the mixing weights of node $j$ in layer $k$ and by $\nu_j^{(k,m)}$ the corresponding mixing weights for the hidden layer $m$, with $m=2,\dots, k$, we consider a finite mixture version of the proposed GDP,
\begin{align}
\label{eq::finit_mixture_model}
   \nonumber \boldsymbol{\beta}_1^{(0)} \mid \alpha_1^{(0)} & \sim Dir\left(\alpha_1^{(0)}/L, \dots,  \alpha_1^{(0)}/L\right),\\
    \nonumber \boldsymbol{\nu}_j^{(k,2)} \mid \{\alpha_l^{(1)} : l \in an^{(k, k-1)}(j)\}, \boldsymbol{\beta}_1^{(0)} &\sim Dir\left(\sum_{l\in an^{(k, k-1)}(j)} \alpha_l^{(1)} \left(\beta_{11}^{(0)}, \dots, \beta_{1L}^{(0)}\right)\right),\\
    \nonumber & \vdots\\
    \nonumber \boldsymbol{\nu}_j^{(k,k)} \mid \{\alpha_l^{(k,k-1)} : l \in an^{(k, 1)}(j)\}, \boldsymbol{\nu}_j^{(k,k-1)} & \sim Dir\left(\sum_{l\in an^{(k, 1)}(j)} \alpha_l^{(k-1)} \left(\nu_{j1}^{(k,k-1)}, \dots, \nu_{jL}^{(k,k-1)}\right)\right),\\
    \nonumber \boldsymbol{\beta}_j^{(k)} \mid \alpha_j^{(k)}, \boldsymbol{\nu}_j^{(k,k)} & \sim Dir\left(\alpha_j^{(k)} \left(\nu_{j1}^{(k,k)}, \dots, \nu_{jL}^{(k,k)}\right)\right),\\
    \nonumber \phi_l \mid G_0 & \sim G_0, \\
    \nonumber z_{ji}^{(k)} \mid \boldsymbol{\beta}_j^{(k)} &\sim \boldsymbol{\beta}_j^{(k)}, \\
      x_{ji}^{(k)} \mid  z_{ji}^{(k)}, \left(\phi_l\right)_{l=1}^{L} &\sim F\left(\phi_{z_{ji}^{(k)}}\right).
\end{align} 
The distribution of this finite mixture model approaches the GDP mixture model as $L\rightarrow \infty$. Refer to Section \ref{supp-sec::proof_infinite_limit} of the Supplementary Materials for the proof. Based on this finite mixture model approximation with a large enough truncation level $L$, we develop an efficient posterior inference procedure of our model using a Metropolis-within-blocked-Gibbs sampler with a specialized proposal \citep{SALTSampler}; see Section \ref{supp-sec::fmm_posterior} of the Supplementary Materials for details.

\section{Simulations}\label{sec:simulations}
Our simulations are designed to mimic the motivating application where we have 8 experimental groups, whose relationships are represented by the DAG in Figure \ref{fig::DAG}.  

\begin{figure}[htp]
    \centering
\begin{tikzpicture}[node distance={15mm}, thick, main/.style = {draw, circle}] 
\node[main] (1) {$ 1$}; 
\node[main] (3) [below of=1] {$ 3$};
\node[main] (2) [left of=3] {$ 2$}; 
\node[main] (4) [right of=3] {$ 4$}; 
\node[main] (5) [below of=2] {$ 5$}; 
\node[main] (6) [below of=3] {$ 6$};
\node[main] (7) [below of=4] {$ 7$};
\node[main] (8) [below of=6] {$ 8$}; 
\draw[->] (1) -- (2); 
\draw[->] (1) -- (3); 
\draw[->] (1) -- (4);
\draw[->] (2) -- (5); 
\draw[->] (2) -- (6);
\draw[->] (3) -- (6); 
\draw[->] (3) -- (7); 
\draw[->] (4) -- (5);
\draw[->] (4) -- (7);
\draw[->] (5) -- (8);
\draw[->] (6) -- (8);
\draw[->] (7) -- (8);
\end{tikzpicture} 
\caption{The DAG of experimental groups.}
\label{fig::DAG}
\end{figure}
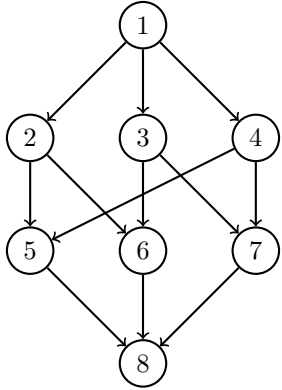

We generated data within each of the $8$ groups from a four-component mixture of bivariate Gaussian distributions with different covariance matrices for each group. We drew the DP concentration parameters $\alpha_j$'s for the different groups  from their prior distribution \eqref{eq:alpha_prior} respecting the DAG in Figure \ref{fig::DAG} with $\alpha_0=5$. The weights of the finite mixture model corresponding to the different groups were drawn using \eqref{eq::finit_mixture_model} and the same DAG.
The true cluster indicators of each group were sampled from a multinomial distribution with probabilities equal to the mixture weights. Using these true cluster indices for each group, samples were drawn from the Gaussian distribution with the cluster-specific mean and group-specific covariance matrix, given in Tables \ref{supp-table:cluster means_simulation} and \ref{supp-table:population covariance_simulation}, respectively, in Section \ref{supp-sec:simulations} of the Supplementary Materials. Refer to the same section in the Supplementary Materials for more details on our simulation strategy.
In our Gibbs sampler, the truncation level of the finite mixture model was set to $L = 10$, and the base measure for GDP, $G_0$, was specified as the normal-inverse-Wishart distribution, $\mathcal{NIW}(\boldsymbol{0}, 0.01, \mathbb{I}_2, 2)$.
Upon the completion of the Gibbs sampler, the clusters were estimated by using the least squares criterion \citep{LeastSquares}, and they were compared with the true cluster labels for evaluation. We considered various sample sizes in each group, which are summarized in Table \ref{tab:ss}.
In all cases, we ran 15,000 iterations of our Gibbs sampler and discarded the first 5,000 samples as burn-in.

\begin{table}[htp]
\centering
\begin{tabular}{ |c|c|c|c|c|c|c|c|c| }
 \hline
 Sample sizes & \multicolumn{8}{|c|}{Groups} \\ 
 \cline{2-9}
 & 1 & 2 & 3 &4 & 5 & 6 & 7 & 8\\
 \hline
 small & 40 & 30 & 30 & 35 & 25 & 30 & 25 & 30\\
moderate & 80 & 70 & 70 & 75 & 83 & 88 & 92 & 88\\
large & 150 & 160 & 180 & 170 & 155 & 175 & 185 & 145\\
 unbalanced & 350 & 30 & 40 & 45 & 25 & 25 & 35 & 35 \\
 \hline
\end{tabular}
\caption{The sample sizes for the different groups that were used to simulate the data.}
\label{tab:ss}
\end{table}
\begin{figure}[htp]
\centering
\begin{subfigure}{0.8\textwidth}
  \centering
  \includegraphics[width= \linewidth]{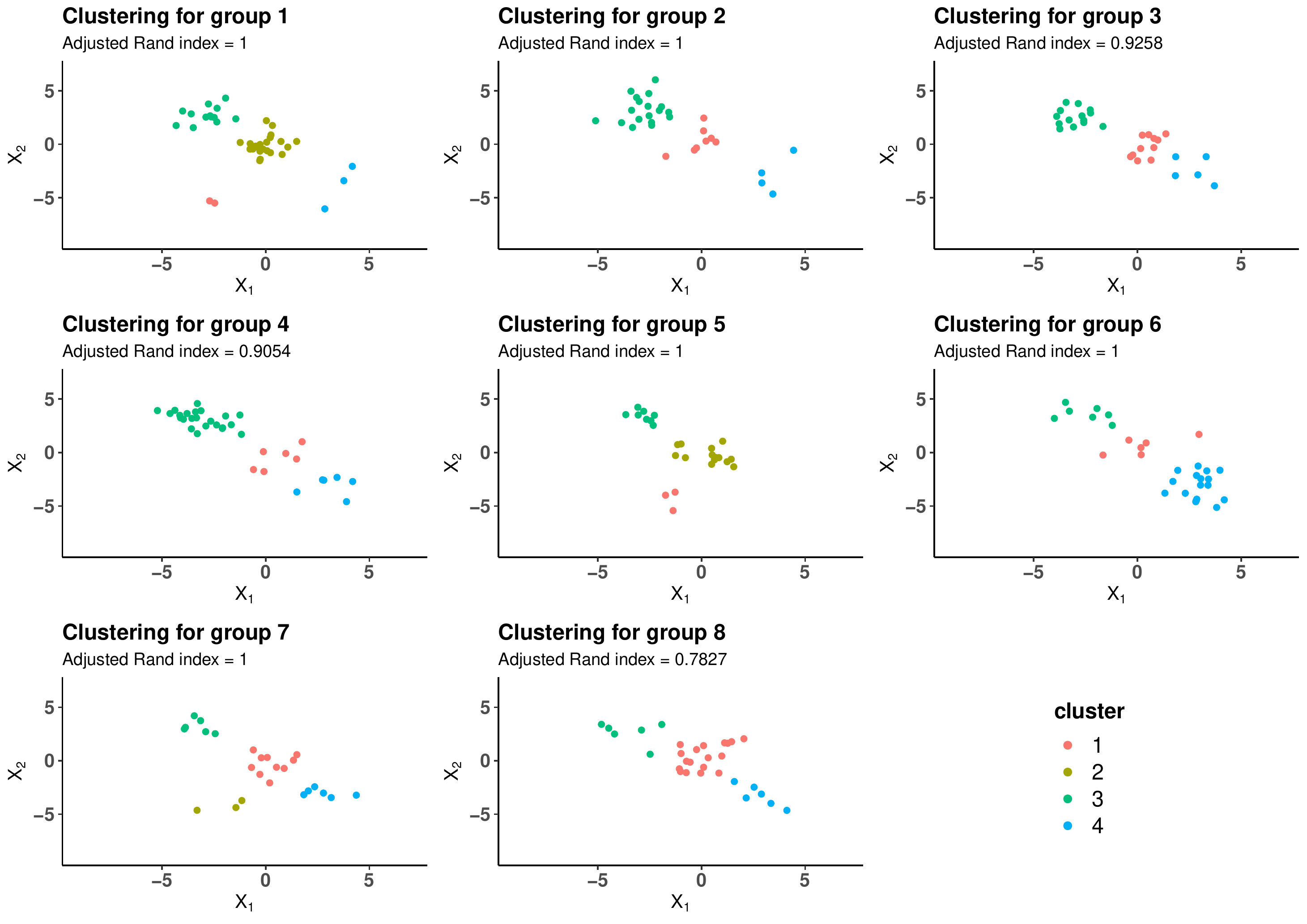}
  \caption{Small sample size in each group}
  \label{fig:clustering small}
\end{subfigure}%
\par\bigskip
\begin{subfigure}{.8\textwidth}
  \centering
  \includegraphics[width=\linewidth]{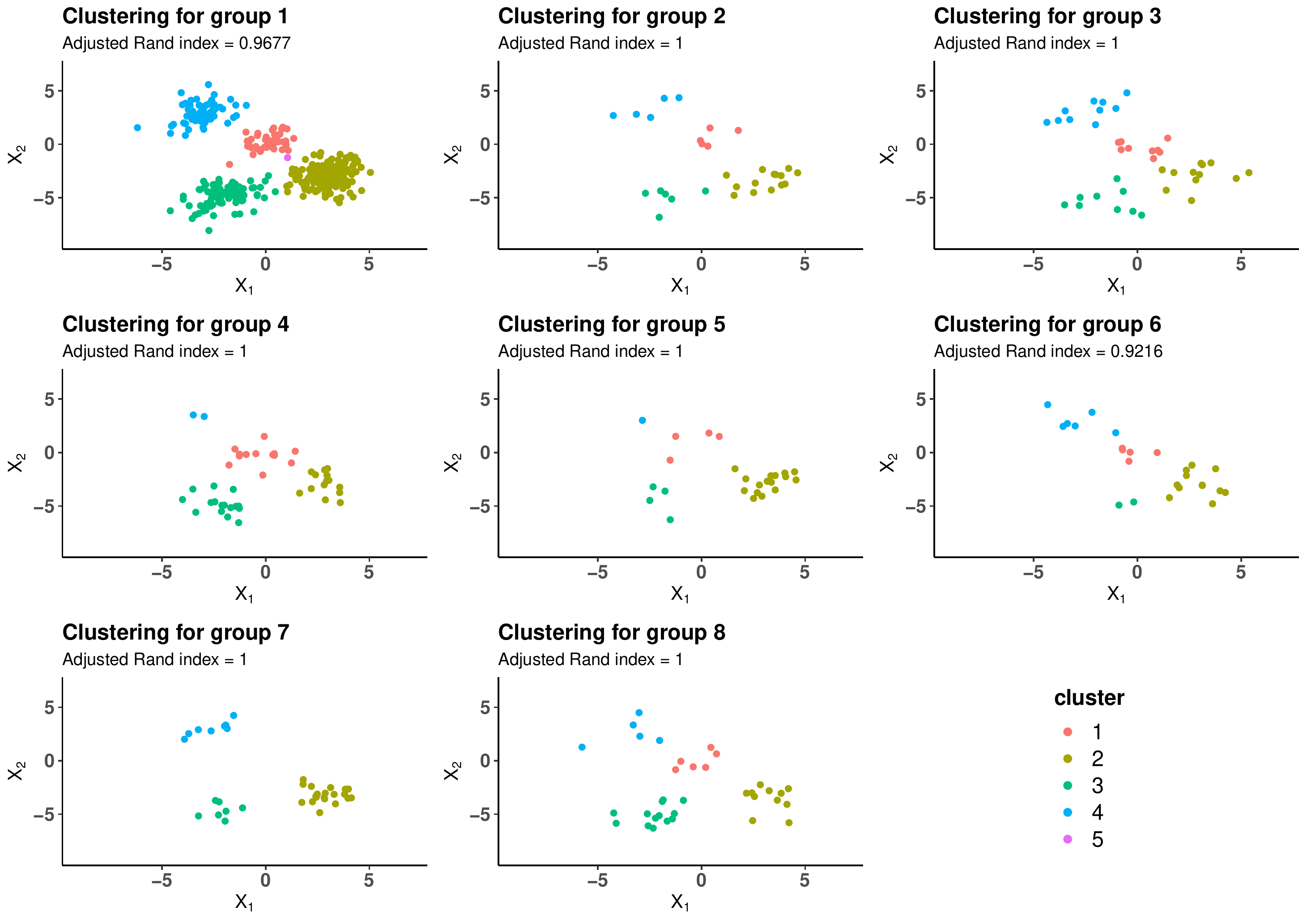}
  \caption{Unbalanced sample sizes between groups}
  \label{fig:clustering unbalanced}
\end{subfigure}
\caption{Clustering performance of GDP for different sample sizes. The colors indicate the estimated clusters by GDP.  Adjusted Rand index is reported at the top of each panel.}
\label{fig:clustering for various sample sizes}
\end{figure}

The clustering results of GDP for small and unbalanced sample sizes are visualized in Figure \ref{fig:clustering for various sample sizes}. The remaining clustering plots are shown in Supplementary \ref{supp-sec:simulations}. Across different sample sizes, the proposed GDP was able to identify the clusters within each group with very good accuracy and was able to link clusters across non-exchangeable groups. 
\par
We also looked at the clustering performance of GDP under a more difficult scenario. The simulation details and clustering results are shown in Supplementary \ref{supp-sec:simulations}. Since HDP is a special case of the proposed GDP, we compared the two methods for this difficult scenario. 
We also compared the clustering performance of GDP with k-means, a widely used non-Bayesian clustering technique. The number of clusters in k-means was taken to be the truncation level of our GDP. 
All simulations were replicated $50$ times. 

GDP significantly outperformed both HDP and k-means. For example, the boxplots of Adjusted Rand indices (\citealp{ARI}; higher is better) for the different methods are shown in Figure \ref{fig:gdp hdp comparison 1}.
It is evident that the Adjusted Rand indices of GDP were almost uniformly higher than those of HDP because HDP was not able to handle non-exchangeable groups. Similarly, the higher Adjusted Rand indices of GDP indicated its superior clustering performance over the k-means algorithm. Moreover, k-means algorithm does not allow sharing of relevant clusters across the groups. 

\begin{figure}[htp]
\centering
  \centering
  \includegraphics[width= 1\linewidth]{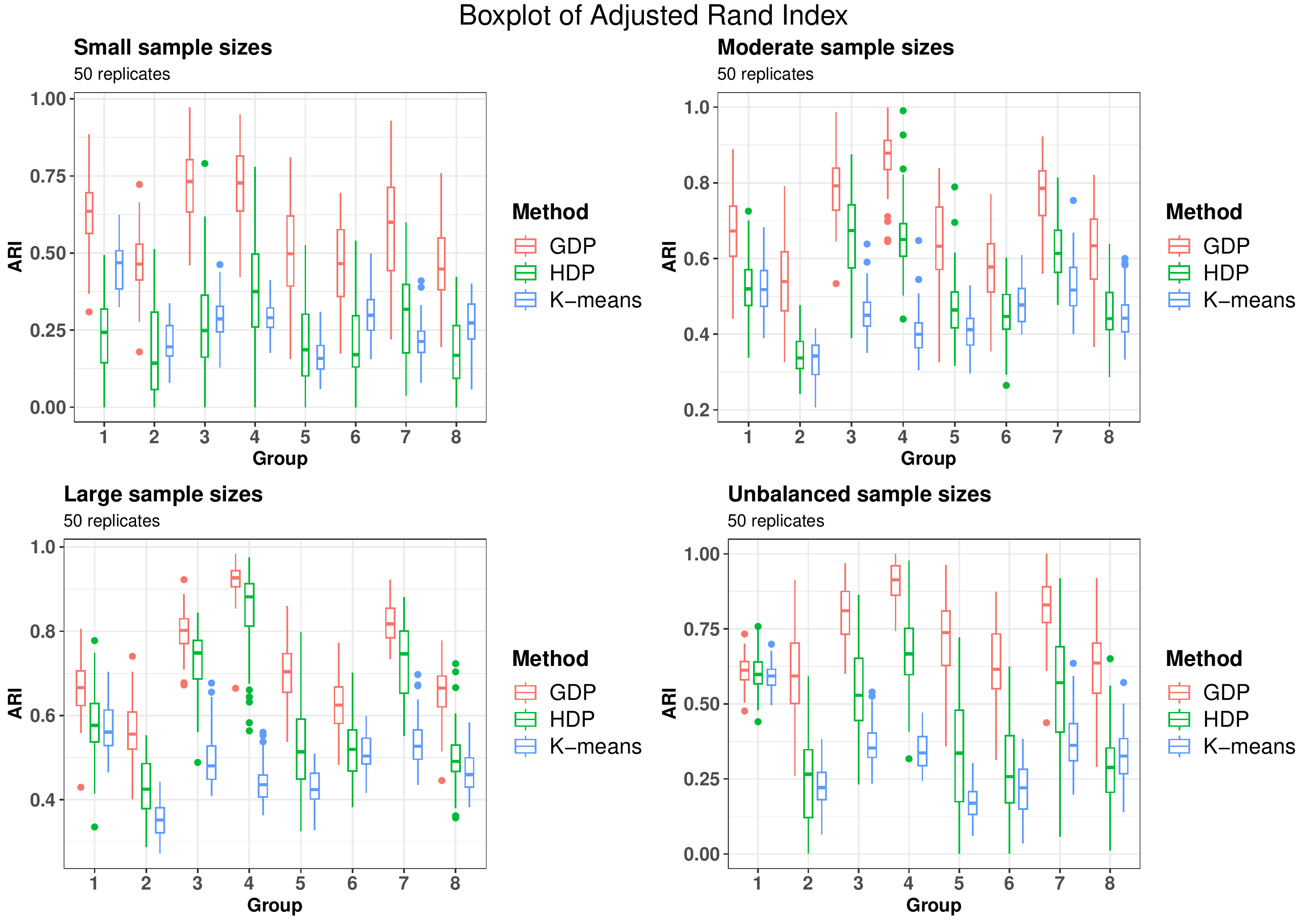}
\caption{The boxplots of the Adjusted Rand indices for GDP, HDP, and k-means for all sample sizes.}
\label{fig:gdp hdp comparison 1}
\end{figure}

\section{Real Data Analysis}\label{sec:read_data_analysis_main}
With the advancement of next-generation sequencing techniques in recent years, it is now possible to molecularly characterize individual cells, which may provide valuable insights into complex biological systems, ranging from cancer genomics to diverse microbial communities \citep{scRNA}. Colorectal cancer is the  third most common type of cancer after breast and lung cancers. It is known that the mutation of tumor-suppressor gene Apc is an initial step in most colorectal tumors \citep{apcmutation}. In addition, numerous studies have been conducted to understand the effect of high-fat vs low-fat diet on gene expressions (\citealp{fat_diet1}; \citealp{fat_diet2}; \citealp{fat_diet3}). We are motivated by a study that aimed to investigate how diet, genotype, and treatment with a new cancer prevention drug (AdipoRon) against placebo interacted to influence the expression of genes in intestinal crypt and tumor cells. The experiments started from a baseline group where the mice were genetically wild-type, fed with a normal diet, and treated with placebo. Then to understand the main effects of genotype, diet, and cancer treatment on stem cell gene expression, the experimenters introduced three new groups of mice, each differing from the baseline group 
by exactly one factor (Apc knock-out, high-fat diet, or new cancer treatment AdipoRon). To determine the two-way interaction effects, three additional groups of mice were studied, each of which differed from the baseline group 
by two factors (e.g., mice with Apc knock-out, high-fat diet, and placebo). Lastly, for a three-way interaction, the experimenters introduced the eighth group of mice with Apc knock-out, a high-fat diet, and the new treatment AdipoRon. By design, these 8 experimental groups are non-exchangeable and their relationships can be delineated by the DAG in Figure \ref{fig::DAG}. 
The goal of this analysis is to identify potential intestinal molecular subtypes within each experimental group while allowing information to be shared across these non-exchangeable groups with the proposed GDP model.
For illustration, we randomly sampled 100 cells from each of the eight groups.
The scRNA-seq data were pre-processed following standard procedure as outlined by \cite{seurat1} using the  R package \texttt{Seurat}. The data was log-normalized and scaled such that the mean expression across cells was $0$ and the variance across cells was $1$. 
As a common practice in single-cell data analyses, the uniform manifold approximation and projection (UMAP) \citep{umap} was used to reduce the data to two dimensions. We considered the same truncation level, $L = 10$, and the same base probability measure, $G_0$, as in the simulations.
We ran four parallel chains of the Gibbs sampler for $40,000$ iterations. To monitor the convergence of the sampler, we drew the traceplots of the log-likelihood for each of the four chains, after discarding the initial $25,000$ samples and thinning the samples by a factor of $15$, which indicated no lack of convergence of our sampler.
We pooled the Monte Carlo samples across different chains for posterior inference. We compared the clustering performance with that obtained from HDP on the same data.


\begin{figure}[htp]
\centering
\begin{subfigure}{0.8\textwidth}
  \centering
  \includegraphics[width= \linewidth]{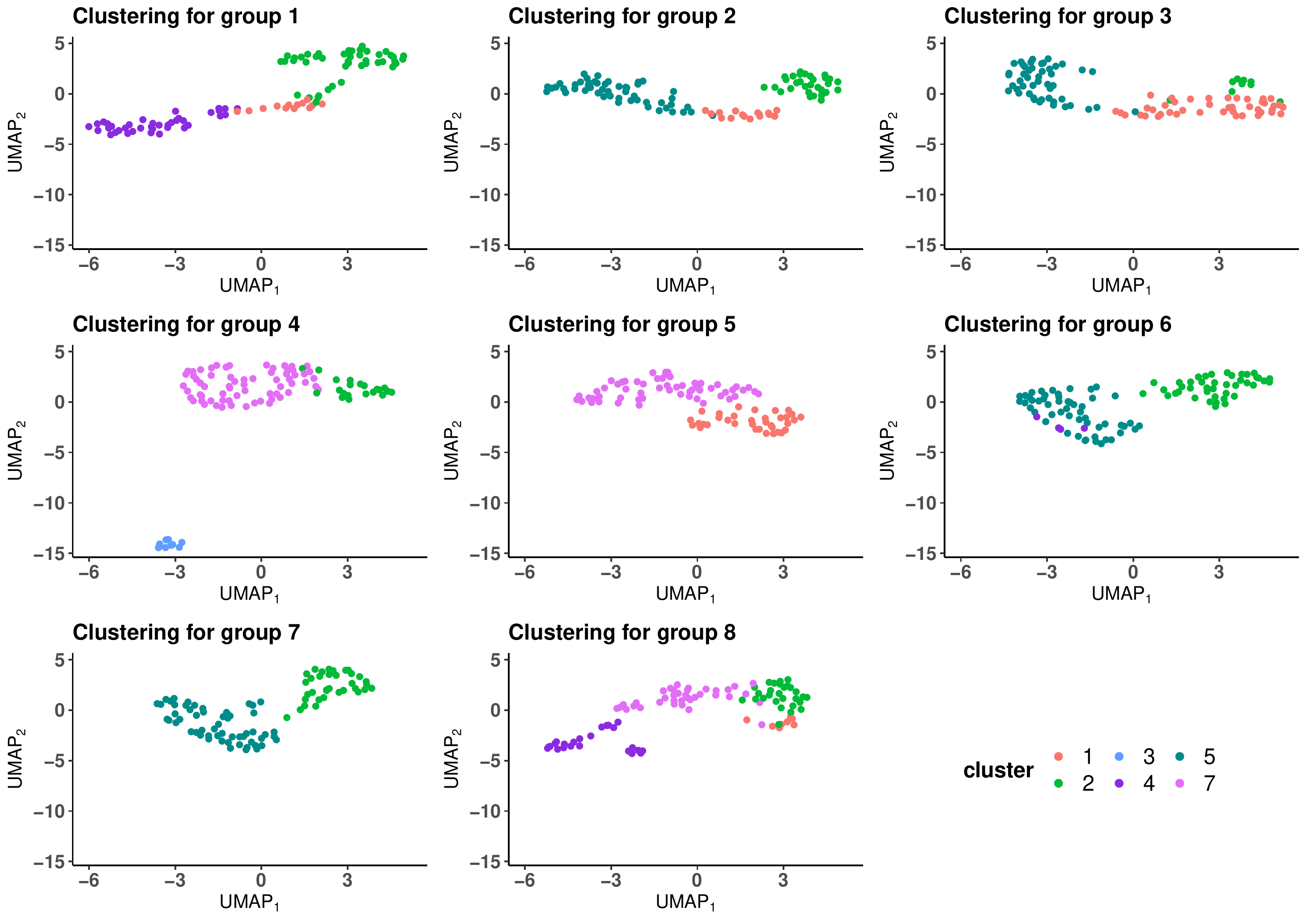}
  \caption{Clustering plot for different groups by GDP.}
  \label{fig:Real data cluster GDP}
\end{subfigure}%
\par\bigskip
\begin{subfigure}{.8\textwidth}
  \centering
  \includegraphics[width=\linewidth]{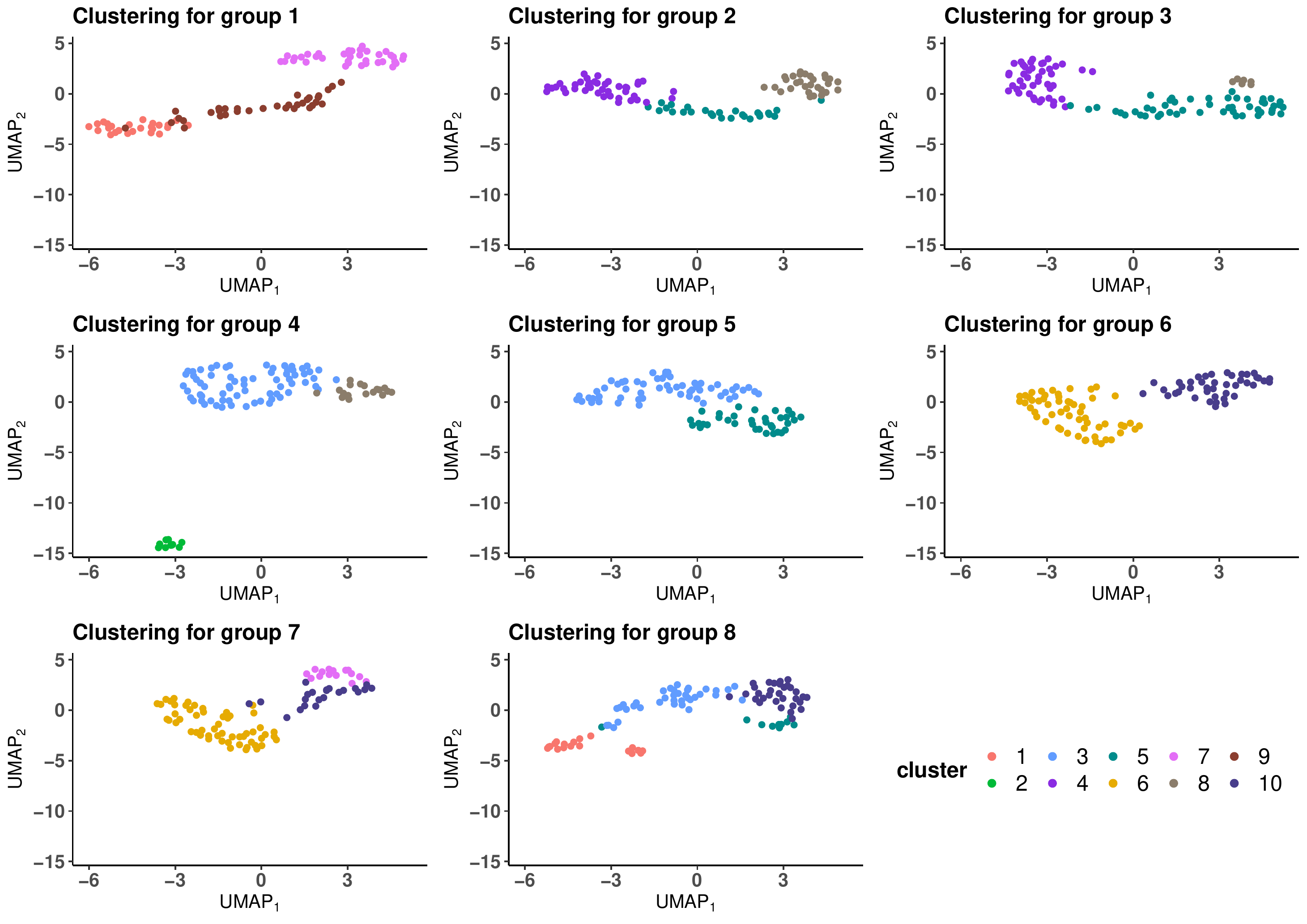}
  \caption{Clustering plot for different groups by HDP.}
  \label{fig:Real data cluster HDP}
\end{subfigure}
\caption{Clustering of the group-specific single-cell data whose dimensions are reduced to 2 by UMAP by (a) GDP and (b) HDP.}
\label{fig:Real data cluster}
\end{figure}

The estimated clusters from GDP and HDP are shown in Figures \ref{fig:Real data cluster GDP} and \ref{fig:Real data cluster HDP}, respectively. 
 As shown in Table \ref{supp-table::study design} in Section \ref{supp-sec::fmm_posterior} of the Supplementary Materials, group 1 is the wild-type group receiving the placebo and a normal diet.
 Each of group 2, 3, and 4 are obtained from group 1 by changing the three factors one at a time, and hence shares some similar clusters with group 1. Group 4 is similar to the baseline group 1 but with the Apc gene knocked out. The corresponding clustering plot (Figure \ref{fig:Real data cluster GDP}) of GDP indicates that the Apc knock-out group seems to exhibit more heterogeneity of cells (suggesting possibly new cellular subtypes) as compared to the wild-type group. Group 5 is the Apc knock-out group receiving a high-fat diet and the placebo. The clustering plot shows some resemblance with its parent groups (groups 2 and 4) but with the absence of some parental clusters. Groups 6 and 7 show similar clustering patterns, indicating possibly similar impact of changing the corresponding factors from their parent groups. Groups 7 and 8 correspond to the Apc knock-out group receiving the new treatment and fed with a normal and high-fat diet, respectively. It can be seen that the high-fat diet group appears to have greater molecular heterogeneity than the normal diet group. The Figure \ref{fig:Real data cluster HDP}, on the other hand, clearly shows that HDP fails to capture meaningful clusters across the non-exchangeable groups, i.e., some points that seemingly belong to the same cluster are assigned different labels across groups. To quantify the difference between GDP and HDP, we computed several internal clustering validation measures; 
 see \citealp{internal_clustering} for a review of several such measures. Table \ref{tab:internal clustering criterion} compares the Calinski-Harabasz, Davies–Bouldin, and Silhouette Index between GDP and HDP. Clearly, all of them indicate the superior clustering performance of GDP over HDP.

\begin{table}[!htb]
\centering
\begin{tabular}{ |c|c|c|c| }
 \hline
   & Calinski-Harabasz Index (CHI) & Davies–Bouldin Index (DBI) & Silhouette Index (SI) \\
  \hline
 GDP &  493.567 &  1.131 & 0.251\\
 HDP &  340.981 &  1.599 & 0.095\\
 \hline
\end{tabular}
\caption{Different measures of internal clustering for GDP and HDP. Higher values of CHI and SI indicates better clustering. Lower values of DBI indicate well separated clusters.}
\label{tab:internal clustering criterion}
\end{table}

\section{Discussion}\label{sec:discussion}
We have introduced the GDP as a graph-based stochastic process for modeling dependent random measures that are linked by a DAG. We have also introduced the corresponding infinite mixture model and presented how the GDP mixture model can be used for clustering grouped data with non-exchangeable groups. We provided different representations of the GDP including a novel hypergraph representation of the original process. The posterior inference was relatively straightforward. We illustrated our method using both simulations and an application to a real grouped scRNA-seq dataset. \par
There are a few possible future directions for this work. First, it may be possible to replace the DAG in our GDP with an undirected or chain graph. The challenge is to define the joint distribution over a set of random measures given the graph where the convenient DAG factorization no longer applies. Second, it may also be possible to learn the DAG structure instead of assuming it is known, which may require independent realizations of the GDP. Third, it will be interesting to extend the nested DP and other Bayesian nonparametric priors to grouped data with non-exchangeable groups.

\begin{appendices}

\section{Proof of the Hypergraph Representation}
\label{appendix_GDP_example}
We prove Theorem \ref{main theorem} (the hypergraph representation of the proposed GDP) of the main manuscript in the case of our motivational problem where we have 8 groups. 
Note that the proof for any general DAG follows in a similar fashion by repeated application of the two lemmas in Section \ref{sec:hyperpriors} of the main manuscript and properties of gamma and Dirichlet distributions, which, however, requires more involved bookkeeping of the corresponding random distributions and hence is omitted. 
Our proof also illustrates how the random distribution of any particular node of the DAG is related to the root node through a number of hidden random measures, which shows the clustering property of our model. 
In our motivating example, 
each group corresponds to a combination of treatment, diet, and genotype, as summarized in Table \ref{supp-table::study design} in Section \ref{supp-sec::fmm_posterior} of the Supplementary Materials. The underlying DAG for the problem is given in Figure \ref{fig::DAG} of the main manuscript
where group 1 is the root node, groups 2-4 are the layer-1 nodes, groups 5-7 are the layer-2 nodes, and group 8 is the layer-3 node. For ease of notation, instead of using $G_1^{(0)}$ and $\alpha_1^{(0)}$ to denote the random measure and the concentration parameter of the root node, we use simply $G_1$ and $\alpha_1$ instead; similarly for all the other nodes. 
Using these simplified notations, Figures \ref{DAG for random measures} and \ref{DAG for concentration parameters} show the relationships among the group-specific random measures and concentration parameters according to Figure \ref{fig::DAG} of the main manuscript. 

\begin{figure}[http]
\centering
    \begin{subfigure}[b]{0.45\textwidth}
        \centering
            \begin{tikzpicture}[node distance={15mm}, thick, main/.style = {draw, circle, scale = 1, transform shape}] 
\node[main, rectangle] (0) {$G_0$}; 
\node[main] (1) [below of=0] {$G_1$}; 
\node[main] (3) [below of=1] {$G_3$};
\node[main] (2) [left of=3] {$G_2$}; 
\node[main] (4) [right of=3] {$G_4$}; 
\node[main] (5) [below of=2] {$G_5$}; 
\node[main] (6) [below of=3] {$G_6$};
\node[main] (7) [below of=4] {$G_7$};
\node[main] (8) [below of=6] {$G_8$};
\draw[->] (0) -- (1); 
\draw[->] (1) -- (2); 
\draw[->] (1) -- (3); 
\draw[->] (1) -- (4);
\draw[->] (2) -- (5); 
\draw[->] (2) -- (6);
\draw[->] (3) -- (6); 
\draw[->] (3) -- (7); 
\draw[->] (4) -- (5);
\draw[->] (4) -- (7);
\draw[->] (5) -- (8);
\draw[->] (6) -- (8);
\draw[->] (7) -- (8);
\end{tikzpicture} 
\caption{}
\label{DAG for random measures}
    \end{subfigure}
    \begin{subfigure}[b]{0.45\textwidth}
    \centering
    \begin{tikzpicture}[node distance={15mm}, thick, main/.style = {draw, circle, scale = 1, transform shape}] 
\node[main, rectangle] (0) {$\alpha_0$}; 
\node[main] (1) [below of=0] {$\alpha_1$}; 
\node[main] (3) [below of=1] {$\alpha_3$};
\node[main] (2) [left of=3] {$\alpha_2$}; 
\node[main] (4) [right of=3] {$\alpha_4$}; 
\node[main] (5) [below of=2] {$\alpha_5$}; 
\node[main] (6) [below of=3] {$\alpha_6$};
\node[main] (7) [below of=4] {$\alpha_7$};
\node[main] (8) [below of=6] {$\alpha_8$}; 
\draw[->] (0) -- (1); 
\draw[->] (1) -- (2); 
\draw[->] (1) -- (3); 
\draw[->] (1) -- (4);
\draw[->] (2) -- (5); 
\draw[->] (2) -- (6);
\draw[->] (3) -- (6); 
\draw[->] (3) -- (7); 
\draw[->] (4) -- (5);
\draw[->] (4) -- (7);
\draw[->] (5) -- (8);
\draw[->] (6) -- (8);
\draw[->] (7) -- (8);
\end{tikzpicture} 
\caption{}
\label{DAG for concentration parameters}
    \end{subfigure}
    \caption{The DAG of the (a) random measures $G_j$'s and  (b) concentration parameters $\alpha_j$'s.}
\end{figure}
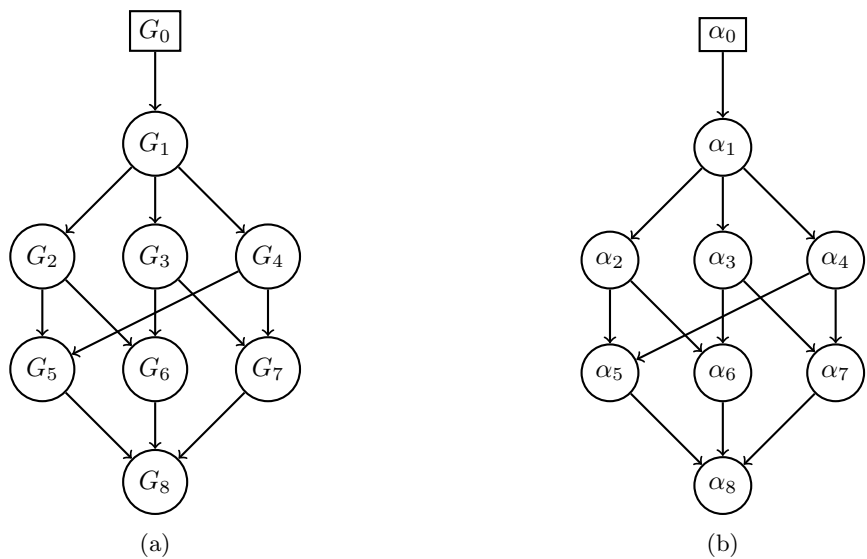

The proposed GDP mixture model for this problem is given hierarchically as,  
\begin{align}
\begin{aligned}
     \alpha_1 \mid \alpha_0 & \sim Gamma(\alpha_0, 1), & 
    G_1\mid G_0, \alpha_1 & \sim DP(\alpha_1, G_0),\\
    \alpha_2 \mid \alpha_1 & \sim Gamma(\alpha_1, 1), & G_2\mid G_1, \alpha_2 & \sim DP(\alpha_2, G_1),\\
    \alpha_3 \mid \alpha_1 & \sim Gamma(\alpha_1, 1), & G_3\mid G_1, \alpha_3 & \sim DP(\alpha_3, G_1),\\
    \alpha_4 \mid \alpha_1 & \sim Gamma(\alpha_1, 1), & G_4\mid G_1, \alpha_4 & \sim DP(\alpha_4, G_1),\\
    \alpha_5\mid \alpha_2, \alpha_4 & \sim Gamma(\alpha_2 + \alpha_4, 1), & G_5 \mid G_2, G_4, \alpha_5, \pi_1 & \sim DP(\alpha_5, \pi_1 G_2 + (1-\pi_1)G_4), \\
    & & \pi_1 \mid \alpha_2, \alpha_4 & \sim Beta(\alpha_2, \alpha_4),\\
    \alpha_6\mid \alpha_2, \alpha_3 & \sim Gamma(\alpha_2 + \alpha_3, 1), & G_6 \mid G_2, G_3, \alpha_6, \pi_2 & \sim DP(\alpha_6, \pi_2 G_2 + (1-\pi_2)G_3), \\
     & & \pi_2 \mid \alpha_2, \alpha_2 & \sim Beta(\alpha_2, \alpha_3),\\
    \alpha_7\mid \alpha_3, \alpha_4 & \sim Gamma(\alpha_3 + \alpha_4, 1), & G_7 \mid G_3, G_4, \alpha_7, \pi_3 & \sim DP(\alpha_7, \pi_3 G_3 + (1-\pi_3)G_4), \\
     & & \pi_3 \mid \alpha_3, \alpha_4& \sim Beta(\alpha_3, \alpha_4),\\
    \alpha_8\mid \alpha_5, \alpha_6, \alpha_7 & \sim Gamma(\alpha_5 + \alpha_6 + \alpha_7, 1), & G_8 \mid G_5, G_6, G_7, \alpha_8, \boldsymbol{\gamma} & \sim DP(\alpha_8, \gamma_1 G_5 + \gamma_2 G_6 + \gamma_3 G_7), \\
     & & \boldsymbol{\gamma} = (\gamma_1, \gamma_2, \gamma_3) \mid \alpha_5, \alpha_6, \alpha_7 & \sim Dir(\alpha_5, \alpha_6, \alpha_7),\\
    \theta_{ji} \mid G_j &\overset{ind}{\sim} G_j,\\
    x_{ji} \mid \theta_{ji}, G_j, G_0 &\overset{ind}{\sim} F(\theta_{ji}), &  & i=1, \dots, n_j, \hspace{0.2cm} j = 1, \dots, 8.
    \end{aligned}
\end{align} 
Now, from \cref{main theorem}, we have the following hypergraph representation, which we are going to prove, 
\begin{align}
\begin{aligned}
    \alpha_1 \mid \alpha_0 & \sim Gamma(\alpha_0, 1), & 
    G_1\mid G_0, \alpha_1 & \sim DP(\alpha_1, G_0),\\
    \alpha_2 \mid \alpha_1 & \sim Gamma(\alpha_1, 1), & G_2\mid G_1, \alpha_2 & \sim DP(\alpha_2, G_1),\\
    \alpha_3 \mid \alpha_1 & \sim Gamma(\alpha_1, 1), & G_3\mid G_1, \alpha_3 & \sim DP(\alpha_3, G_1),\\ 
    \alpha_4 \mid \alpha_1 & \sim Gamma(\alpha_1, 1), & G_4\mid G_1, \alpha_4 & \sim DP(\alpha_4, G_1),\\
    \alpha_5\mid \alpha_2, \alpha_4 & \sim Gamma(\alpha_2 + \alpha_4, 1), & G_5 \mid H_1, \alpha_5  & \sim DP(\alpha_5, H_1), \\
    & & H_1 \mid \alpha_2, \alpha_4& \sim DP(\alpha_2 + \alpha_4, G_1),\\ 
    \alpha_6\mid \alpha_2, \alpha_3 & \sim Gamma(\alpha_2 + \alpha_3, 1), & G_6 \mid H_2, \alpha_6  & \sim DP(\alpha_6, H_2), \\
    & & H_2 \mid \alpha_2, \alpha_3 & \sim DP(\alpha_2 + \alpha_3, G_1),\\ 
    \alpha_7\mid \alpha_3, \alpha_4 & \sim Gamma(\alpha_3 + \alpha_4, 1), & G_7 \mid H_3, \alpha_7  & \sim DP(\alpha_7, H_3), \\
     & & H_3 \mid \alpha_3, \alpha_4 & \sim DP(\alpha_3 + \alpha_4, G_1),\\
    \alpha_8\mid \alpha_5, \alpha_6, \alpha_7 & \sim Gamma(\alpha_5 + \alpha_6 + \alpha_7, 1), & G_8 \mid H_4, \alpha_8 & \sim DP(\alpha_8, H_4), \\
     & & H_4 \mid \alpha_5, \alpha_6, \alpha_7,  H^* & \sim DP(\alpha_5 + \alpha_6 + \alpha_7, H^*),\\
     & & H^* \mid \alpha_2, \alpha_3, \alpha_4,  G_1 & \sim DP(2(\alpha_2 + \alpha_3 + \alpha_4), G_1),\\
     \theta_{ji} \mid G_j &\overset{ind}{\sim} G_j,\\
     x_{ji} \mid \theta_{ji}, G_j, G_0 &\overset{ind}{\sim} F(\theta_{ji}), &  i=1, \dots, n_j, \hspace{0.2cm} j = 1, \dots, 8.
     \end{aligned}
    \label{Hypergraph Theorem proof}
\end{align}
\begin{proof}

Note that the random measures $G_2, G_3$, and $G_4$ are the layer-1 nodes. Their relationships to the root node $G_1$ are the same as those in an HDP. We shall consider the relationships  of the random measures of the layer-2 and layer-3 nodes (i.e., $G_5, G_6, G_7$, and $G_8$) to the root node.
Let $H_1 = \pi_1 G_2 + (1 - \pi_1) G_4$ where $G_2 \sim DP(\alpha_2, G_1)$ and $G_4  \sim DP(\alpha_4, G_1)$ independently.
Let $A_1, A_2, \dots, A_r$ be a finite measurable partition of the sample space $\Theta$. Then by the definition of DP, we have
\begin{align}
    \nonumber \left(G_2(A_1), G_2(A_2), \dots, G_2(A_r)\right) & \sim Dir\left(\alpha_2 G_1(A_1), \alpha_2 G_1(A_2), \dots, \alpha_2 G_1(A_r)\right),\\
    \nonumber  \left(G_4(A_1), G_4(A_2), \dots, G_4(A_r)\right) & \sim Dir\left(\alpha_4 G_1(A_1), \alpha_4 G_1(A_2), \dots, \alpha_4 G_1(A_r)\right), 
\end{align}
which are conditionally independent given $\alpha_2, \alpha_4$ and $G_1$.
As $\pi_1\sim Beta(\alpha_2, \alpha_4)$ 
independently of $G_2$ and $G_4$, using \cref{lemmaI}, we have that, given $\alpha_2, \alpha_4$ and $G_1$,
\begin{align}
    \nonumber \pi_1 \left(G_2(A_1), \dots, G_2(A_r)\right) + (1-\pi)\left(G_4(A_1), \dots, G_4(A_r)\right) &\sim Dir((\alpha_2 + \alpha_4) (G_1(A_1), \dots, G_1(A_r)))\\
    \nonumber \Rightarrow (H_1(A_1), \dots, H_1(A_r)) \mid \alpha_2, \alpha_4, G_1 & \sim Dir((\alpha_2 + \alpha_4)  (G_1(A_1), \dots, G_1(A_r)))\\
    \Rightarrow  H_1 \mid \alpha_2, \alpha_4, G_1 & \sim DP(\alpha_2 + \alpha_4, G_1)
\end{align}
Thus, we have 
\begin{align}
\label{hidden eq 1}
    \nonumber G_5 \mid H_1, \alpha_5 & \sim DP(\alpha_5, H_1),\\
    H_1 \mid \alpha_2, \alpha_4, G_1 & \sim DP(\alpha_2 + \alpha_4, G_1).
\end{align}
Similarly, the other layer-2 measures $G_6$ and $G_7$ have the following representations:
\begin{align}
\label{hidden eq 2}
    \nonumber G_6 \mid H_2, \alpha_6 & \sim DP(\alpha_6, H_2),\\
    H_2 \mid \alpha_2, \alpha_3, G_1 & \sim DP(\alpha_2 + \alpha_3, G_1),
 \end{align}   
 and,
\begin{align}
\label{hidden eq 3}
    \nonumber G_7 \mid H_3, \alpha_7 & \sim DP(\alpha_7, H_3),\\
    H_3 \mid \alpha_3, \alpha_4, G_1 & \sim DP(\alpha_3 + \alpha_4, G_1),
\end{align}
where $H_2 = \pi_2 G_2 + (1 - \pi_2) G_3$ and $H_3= \pi_3 G_3 + (1 - \pi_3) G_4$.

Let $H_4 = \gamma_1 G_5 + \gamma_2 G_6 + \gamma_3 G_7$ and $\boldsymbol{\gamma} = (\gamma_1, \gamma_2, \gamma_3) \sim Dir(\alpha_5, \alpha_6, \alpha_7)$. Since $G_5$, $G_6$, and $G_7$ are conditionally independent given $G_2$, $G_3$, and $G_4$, they are also independent given $H_1, H_2$, and $H_3$. Therefore, we have,
\begin{align*}
    G_5 \mid \alpha_5, H_1 & \sim DP(\alpha_5, H_1),\\
    G_6 \mid \alpha_6, H_2 & \sim DP(\alpha_6, H_2),\\
    G_7 \mid \alpha_7, H_3 & \sim DP(\alpha_7, H_3).
\end{align*}
For any finite measurable partition $A_1, A_2, \dots, A_r$ of $\Theta$, from \cref{lemmaII}, we have
\begin{flalign}
\label{sim eq1}
   \nonumber & \left(H_4(A_1), \dots, H_4(A_r)\right)  \mid \alpha_5, \alpha_6, \alpha_7, H_1, H_2, H_3 &&\\ 
   = \nonumber \hspace{0.3cm}  & \gamma_1 \left(G_5(A_1), \dots, G_5(A_r)\right) + \gamma_2 \left(G_6(A_1), \dots, G_6(A_r)\right)  + \gamma_3 \left(G_7(A_1), \dots, G_7(A_r)\right) &&\\
    \nonumber   & \phantom{\left(G_7(A_1), \dots,\right) } \sim Dir\left(\left(\alpha_5 H_1 + \alpha_6 H_2 + \alpha_7 H_3\right)(A_1), \dots
    , \left(\alpha_5 H_1 + \alpha_6 H_2 + \alpha_7 H_3\right)(A_r)\right) && \\
    \nonumber  & \phantom{\left(G_7(A_1), \dots,\right) } \equiv Dir\left(\alpha^* \left(\left(\frac{\alpha_5}{\alpha^*}H_1 + \frac{\alpha_6}{\alpha^*}H_2 + \frac{\alpha_7}{\alpha^*} H_3 \right)(A_1), \dots, \left(\frac{\alpha_5}{\alpha^*}H_1 + \frac{\alpha_6}{\alpha^*}H_2 + \frac{\alpha_7}{\alpha^*} H_3\right) (A_r)\right)   \right) &&\\
    \nonumber  & \phantom{\left(G_7(A_1), \dots,\right) } \equiv Dir\left(\alpha^* \left(H^*(A_1), \dots, H^* (A_r)\right)   \right) &&\\
    & \phantom{\left(G_7(A_1), \dots,\right) }\Rightarrow H_4 \mid \alpha^*, H^*  \sim DP(\alpha^*, H^*),
\end{flalign}
\noindent where $\alpha^* = \alpha_5 + \alpha_6 + \alpha_7$ and $H^* = \frac{\alpha_5}{\alpha^*}H_1 + \frac{\alpha_6}{\alpha^*}H_2 + \frac{\alpha_7}{\alpha^*} H_3 $. Note that 
$\alpha_5$, $\alpha_6$, and $\alpha_7$ are independent gamma random variables conditionally on $\alpha_2, \alpha_3, \alpha_4$ with shape parameters $\alpha_2 + \alpha_4, \alpha_2 + \alpha_3$, and $\alpha_3 + \alpha_4$, respectively. Thus,
\begin{equation}
\label{sim eq2}
    \left(\frac{\alpha_5}{\alpha^*}, \frac{\alpha_6}{\alpha^*}, \frac{\alpha_7}{\alpha^*}\right) \mid \alpha_2, \alpha_3, \alpha_4 \sim Dir(\alpha_2+ \alpha_4, \alpha_2 + \alpha_3, \alpha_3 + \alpha_4).
\end{equation}
Thus, given $G_1, G_2, G_3, G_4, \alpha_2, \alpha_3, \alpha_4$, and from \crefrange{hidden eq 1}{hidden eq 3}, and \cref{sim eq2}, using \cref{lemmaII} and using a similar measurable finite partition of $\Theta$ argument, we have,
\begin{equation}
    H^* \mid \alpha_2, \alpha_3, \alpha_4, G_1 \sim DP(2(\alpha_2 + \alpha_3 + \alpha_4), G_1),
\end{equation}
which completes the proof.
\end{proof}

\end{appendices}

\bigskip
\begin{center}
{\Large \bf Supplementary Materials for ``Graphical Dirichlet Process for Clustering Non-Exchangeable Grouped Data"}
\end{center}

\beginsupplement

\section{Proof of Lemma 1 and Lemma 2}
\label{supp-sec::proof_lemma}
\begin{lemma}[\citealp{Sethuraman}]
\label{supp-lemmaI}
Let $\boldsymbol{\alpha}_1 = (\alpha_{11}, \alpha_{12}, \dots, \alpha_{1k})$ and $\boldsymbol{\alpha}_2 = (\alpha_{21}, \alpha_{22}, \dots, \alpha_{2k})$ be $k$-dimensional vectors with $\alpha_{ij} > 0 \hspace{0.2cm} \forall \hspace{0.2cm} j = 1, 2, \dots ,k,\hspace{0.2cm} i = 1, 2$. Let $\boldsymbol{X}_1$ and $\boldsymbol{X}_2$ be independent $k$-dimensional random vectors distributed as Dirichlet distribution with parameters $\boldsymbol{\alpha}_1$ and $\boldsymbol{\alpha}_2$, respectively. Let $\alpha_{1 \cdot} = \sum_{j=1}^{k} \alpha_{1j}$ and $\alpha_{2\cdot} = \sum_{j=1}^{k} \alpha_{2j}$. Let $\pi$ be independent of $\boldsymbol{X}_1$ and $\boldsymbol{X}_2$ and have a beta distribution $Beta\left(\alpha_{1\cdot}, \alpha_{2\cdot}\right)$. Then the distribution of $\pi \boldsymbol{X}_1 + \left(1 -\pi\right)\boldsymbol{X}_2$ is the Dirichlet distribution with parameter $\boldsymbol{\alpha}_1 + \boldsymbol{\alpha}_2$.
\end{lemma}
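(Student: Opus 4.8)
The plan is to use the standard gamma construction of the Dirichlet distribution. First I would introduce mutually independent random variables $Y_{ij}\sim Gamma(\alpha_{ij},1)$ for $i=1,2$ and $j=1,\dots,k$, put $S_i=\sum_{j=1}^k Y_{ij}$, and recall the classical facts that $S_i\sim Gamma(\alpha_{i\cdot},1)$, that the normalized vector $\boldsymbol{X}_i^*:=(Y_{i1},\dots,Y_{ik})/S_i$ has law $Dir(\boldsymbol{\alpha}_i)$, and that $\boldsymbol{X}_i^*$ is independent of $S_i$. Since the two blocks $\{Y_{1j}\}_j$ and $\{Y_{2j}\}_j$ are independent, the pairs $(\boldsymbol{X}_1^*,S_1)$ and $(\boldsymbol{X}_2^*,S_2)$ are independent, so in particular $(\boldsymbol{X}_1^*,\boldsymbol{X}_2^*)$ has the same joint law as $(\boldsymbol{X}_1,\boldsymbol{X}_2)$.

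Next I would set $\pi^*:=S_1/(S_1+S_2)$ and use the beta--gamma fact that $\pi^*\sim Beta(\alpha_{1\cdot},\alpha_{2\cdot})$ and $\pi^*$ is independent of $S_1+S_2$; combined with the block independence above, this makes $\pi^*$ independent of $(\boldsymbol{X}_1^*,\boldsymbol{X}_2^*)$. The computation then collapses via the identity
\begin{equation*}
\pi^*\boldsymbol{X}_1^*+(1-\pi^*)\boldsymbol{X}_2^*
=\frac{S_1}{S_1+S_2}\cdot\frac{(Y_{11},\dots,Y_{1k})}{S_1}+\frac{S_2}{S_1+S_2}\cdot\frac{(Y_{21},\dots,Y_{2k})}{S_2}
=\frac{(Y_{11}+Y_{21},\dots,Y_{1k}+Y_{2k})}{S_1+S_2}.
\end{equation*}
Because $Y_{1j}+Y_{2j}\sim Gamma(\alpha_{1j}+\alpha_{2j},1)$ independently across $j$, the right-hand side is, again by the gamma construction, distributed as $Dir(\boldsymbol{\alpha}_1+\boldsymbol{\alpha}_2)$.

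Finally I would transfer the conclusion from $\pi^*$ to the $\pi$ in the statement. Both $\pi$ and $\pi^*$ have the $Beta(\alpha_{1\cdot},\alpha_{2\cdot})$ law and are independent of their respective Dirichlet pairs, and $(\boldsymbol{X}_1,\boldsymbol{X}_2)\stackrel{d}{=}(\boldsymbol{X}_1^*,\boldsymbol{X}_2^*)$, so the triples $(\pi,\boldsymbol{X}_1,\boldsymbol{X}_2)$ and $(\pi^*,\boldsymbol{X}_1^*,\boldsymbol{X}_2^*)$ are equal in distribution; applying the measurable map $(p,\boldsymbol{u},\boldsymbol{v})\mapsto p\boldsymbol{u}+(1-p)\boldsymbol{v}$ yields the claim. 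The main obstacle is not any single computation but the independence bookkeeping: carefully justifying that $\pi^*$ is independent of $(\boldsymbol{X}_1^*,\boldsymbol{X}_2^*)$ and that this permits replacing the constructed $\pi^*$ by the abstractly given $\pi$. A direct alternative via a Jacobian change of variables on the densities is available but considerably messier, so I would favor the gamma-representation argument.
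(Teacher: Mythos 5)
Your proposal is correct and follows essentially the same route as the paper's proof: represent each Dirichlet vector via normalized independent gamma variables, observe that the convex combination collapses to the normalized vector of summed gammas, and handle the independence bookkeeping (the paper justifies the independence of the normalized vector from its sum via Basu's theorem, whereas you invoke it as a classical fact). Your explicit final step transferring the conclusion from the constructed triple $(\pi^*,\boldsymbol{X}_1^*,\boldsymbol{X}_2^*)$ to the abstractly given $(\pi,\boldsymbol{X}_1,\boldsymbol{X}_2)$ is a small point of added care that the paper leaves implicit.
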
 
\begin{proof}
Let $T_i \overset{ind}{\sim} Gamma\left(\alpha_{1i}, \lambda\right), \hspace{0.2cm} i = 1, 2, \dots, k$ and $S_i \overset{ind}{\sim} Gamma\left(\alpha_{2i}, \lambda\right), \hspace{0.2cm} i = 1, 2, \dots, k$ independently of $T_i$, where $\lambda > 0$. Let $T = \sum_{i = 1}^{k} T_i$ and $S = \sum_{i = 1}^{k} S_i$. We know from the reproductive property of independent gamma distributions that $T \sim Gamma\left(\sum_{i = 1}^{k} \alpha_{1i}, \lambda\right) \equiv Gamma\left(\alpha_{1\cdot}, \lambda\right)$ and $S \sim Gamma\left(\sum_{i = 1}^{k} \alpha_{2i}, \lambda\right) \equiv Gamma\left(\alpha_{2\cdot}, \lambda\right)$ independently of $T$. Define
\begin{align*}
    \boldsymbol{X}_1  := \left(\frac{T_1}{T}, \frac{T_2}{T},\dots, \frac{T_k}{T}\right),  \hspace{0.4cm}
    \boldsymbol{X}_2  := \left(\frac{S_1}{S}, \frac{S_2}{S},\dots, \frac{S_k}{S}\right), \hspace{0.4cm} \text{and}\hspace{0.4cm} \pi := \frac{T}{T+S}. 
\end{align*}
\sloppy It is easy to see that
$\boldsymbol{X}_1 \sim Dir(\alpha_{11}, \dots, \alpha_{1k})$ is independent of $\boldsymbol{X}_2 \sim Dir(\alpha_{21}, \dots, \alpha_{2k})$, and that $\pi \sim Beta\left(\alpha_{1\cdot}, \alpha_{2\cdot}\right)$. We now need to show that $\pi$ as defined above is indeed independent of $\boldsymbol{X}_1$ and $\boldsymbol{X}_2$ as required by the lemma. For any fixed $\alpha_{11}, \dots, \alpha_{1k}$, we have that $\sum_{i=1}^{k}T_i$ is a complete and sufficient statistic for $\lambda$. Because $\boldsymbol{X}_1
\sim Dir(\alpha_{11}, \dots, \alpha_{1k})$ is ancillary for $\lambda$, by the Basu's theorem \citep{basu}, we have that $\boldsymbol{X}_1$ is independent of $\sum_{i=1}^{k}T_i = T$. 
Furthermore, due to the independence of $S_i$ and $T_i$, $i = 1, \dots, k$, $\boldsymbol{X}_1$ is independent of $S$, and,  therefore, $\boldsymbol{X}_1$  is independent of $\pi = \frac{T}{T+S}$. Similarly, $\boldsymbol{X}_2$ is also independent of $\pi$.
Then,
\begin{align*}
    \pi \boldsymbol{X}_1 + \left(1 - \pi\right) \boldsymbol{X}_2 & = \frac{T}{T+S} \left(\frac{T_1}{T}, \frac{T_2}{T},\dots, \frac{T_k}{T}\right) + \frac{S}{T+S} \left(\frac{S_1}{S}, \frac{S_2}{S},\dots, \frac{S_k}{S}\right)\\
    & = \left(\frac{T_1 + S_1}{T + S}, \frac{T_2 + S_2}{T + S},\dots, \frac{T_k + S_k}{T + S}\right) \sim Dir\left(\boldsymbol{\alpha}_1 + \boldsymbol{\alpha}_2\right),
\end{align*}
because $T_i + S_i \overset{ind}{\sim} Gamma(\alpha_{1i} + \alpha_{2i}, \lambda) \hspace{0.2cm} i = 1, 2, \dots, k$ and $T + S \sim Gamma(\alpha_{1\cdot} + \alpha_{2\cdot}, \lambda) $.
\end{proof}

\begin{lemma}
\label{supp-lemmaII}
\sloppy Let $\boldsymbol{\alpha}_1, \boldsymbol{\alpha}_2, \dots, \boldsymbol{\alpha}_L$ be $k$-dimensional vectors where $\boldsymbol{\alpha}_i = (\alpha_{i1},\dots, \alpha_{ik})$ with $\alpha_{ij} > 0 \hspace{0.2cm} \forall \hspace{0.2cm} j = 1, 2, \dots ,k$,  $i = 1,2, \dots, L$. Let $\boldsymbol{X}_1, \boldsymbol{X}_2, \dots, \boldsymbol{X}_L$ be independent $k$-dimensional random vectors distributed as Dirichlet distribution with parameters $\boldsymbol{\alpha}_1, \boldsymbol{\alpha}_2,\dots,  \boldsymbol{\alpha}_L$, respectively. Let $\alpha_{i \cdot} = \sum_{j=1}^{k} \alpha_{ij},\ \ i = 1,2,\dots, L$. Let $\boldsymbol{\pi} = \left(\pi_1, \pi_2, \dots, \pi_L\right)$ be independent of $\boldsymbol{X}_1, \boldsymbol{X}_2,\dots, \boldsymbol{X}_L$ and have a Dirichlet distribution $Dir\left(\alpha_{1\cdot}, \alpha_{2\cdot},\dots,  \alpha_{L\cdot}\right)$. Then the distribution of $\sum_{i=1}^{L}\pi_i \boldsymbol{X}_i$ is the Dirichlet distribution with parameter $\sum_{i=1}^{L}\boldsymbol{\alpha}_i$. 
\end{lemma}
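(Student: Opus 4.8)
The plan is to mirror the gamma--representation argument that proves \cref{supp-lemmaI}, now carried out with $L$ independent blocks of gamma variables instead of two. An induction on $L$ that peels off one coordinate of $\boldsymbol{\pi}$ at a time and applies \cref{supp-lemmaI} at each step would also work (using the aggregation/neutrality property of the Dirichlet distribution to split $\boldsymbol{\pi}$ into a $Beta$ part and a lower-dimensional $Dir$ part), but it forces one to re-establish an independence relation at every stage, so the direct construction is cleaner.

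First I would fix an arbitrary rate $\lambda>0$ and introduce mutually independent variables $T_{ij}\sim Gamma(\alpha_{ij},\lambda)$ for $i=1,\dots,L$ and $j=1,\dots,k$. Writing $T_i=\sum_{j=1}^{k}T_{ij}$, so that $T_i\sim Gamma(\alpha_{i\cdot},\lambda)$ independently over $i$, I would set
\[
\boldsymbol{X}_i:=\Bigl(\tfrac{T_{i1}}{T_i},\dots,\tfrac{T_{ik}}{T_i}\Bigr),\qquad \boldsymbol{\pi}:=\Bigl(\tfrac{T_1}{\sum_{l=1}^{L}T_l},\dots,\tfrac{T_L}{\sum_{l=1}^{L}T_l}\Bigr).
\]
By the standard gamma/Dirichlet correspondence each $\boldsymbol{X}_i\sim Dir(\boldsymbol{\alpha}_i)$, the $\boldsymbol{X}_i$ are mutually independent since they are built from disjoint blocks of the array, and $\boldsymbol{\pi}\sim Dir(\alpha_{1\cdot},\dots,\alpha_{L\cdot})$.

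The step I expect to need the most care is verifying that $\boldsymbol{\pi}$ is independent of $(\boldsymbol{X}_1,\dots,\boldsymbol{X}_L)$, so that this construction genuinely realizes the hypotheses of the lemma. Within block $i$, Basu's theorem \citep{basu} gives $\boldsymbol{X}_i\perp T_i$, exactly as in the proof of \cref{supp-lemmaI}; the point is then to upgrade these within-block independences to the \emph{joint} statement that $(\boldsymbol{X}_1,\dots,\boldsymbol{X}_L)$ is independent of $(T_1,\dots,T_L)$, which follows because the $L$ blocks of the array are mutually independent. Since $\boldsymbol{\pi}$ is a measurable function of $(T_1,\dots,T_L)$ alone, independence of $\boldsymbol{\pi}$ from $(\boldsymbol{X}_i)_{i}$ follows.

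Finally I would compute
\[
\sum_{i=1}^{L}\pi_i\boldsymbol{X}_i=\frac{1}{\sum_{l=1}^{L}T_l}\sum_{i=1}^{L}\bigl(T_{i1},\dots,T_{ik}\bigr)=\frac{\bigl(\sum_{i=1}^{L} T_{i1},\dots,\sum_{i=1}^{L} T_{ik}\bigr)}{\sum_{i=1}^{L}\sum_{j=1}^{k}T_{ij}},
\]
and note that $\sum_{i=1}^{L}T_{ij}\sim Gamma\bigl(\sum_{i=1}^{L}\alpha_{ij},\lambda\bigr)$ independently over $j$, so the right-hand side is $Dir\bigl(\sum_{i=1}^{L}\boldsymbol{\alpha}_i\bigr)$ by the same correspondence. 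Because the independence established above shows that the left-hand side has exactly the law prescribed in the lemma's hypotheses, this identifies that law as $Dir\bigl(\sum_{i=1}^{L}\boldsymbol{\alpha}_i\bigr)$, completing the argument.
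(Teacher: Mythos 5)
Your construction is correct and is essentially the paper's own argument: the paper proves this lemma by the same gamma-array representation (building $\boldsymbol{\pi}$ from the block sums $\gamma_i\sim Gamma(\alpha_{i\cdot},\lambda)$ and mimicking the proof of \cref{supp-lemmaI}), and your careful upgrade of the per-block Basu independences to joint independence of $(\boldsymbol{X}_1,\dots,\boldsymbol{X}_L)$ from $(T_1,\dots,T_L)$ is exactly the detail the paper leaves implicit.
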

\begin{proof}
The proof is similar to that of \cref{supp-lemmaI}. By noting that 
\begin{equation*}
    \boldsymbol{\pi} \sim Dir\left(\alpha_{1\cdot}, \alpha_{2\cdot},\dots, \alpha_{L\cdot}\right) \overset{d}{=} \left(\frac{\gamma_1}{\gamma}, \frac{\gamma_2}{\gamma},\dots, \frac{\gamma_L}{\gamma}\right),
\end{equation*}
where $\gamma_i \overset{ind}{\sim} Gamma(\alpha_{i\cdot}, \lambda), \hspace{0.2cm} i= 1, 2, \dots, L$ and $\gamma = \sum_{i = 1}^{L} \gamma_i \sim Gamma\left(\sum_{i=1}^{L}\alpha_{i\cdot}, \lambda\right)$. The remaining proof follows from standard properties of Dirichlet distributions and mimics the proof of \cref{supp-lemmaI}.
\end{proof}

\section{Proof of the Infinite Limit of Finite Mixture Model}
\label{supp-sec::proof_infinite_limit}
The GDP mixture model can  be derived as the infinite limit of a finite mixture model. Let us denote the observations and the mixture component indicator from node $j$ in layer $k$ of DAG $D$ by $x_{ji}^{(k)}$ and $z_{ji}^{(k)}$, respectively. Let $\boldsymbol{\beta}_1^{(0)}$ be the vector of mixing weights for the root node. Denoting by $\boldsymbol{\beta}_j^{(k)}$ the mixing weights of node $j$ in layer $k$ and by $\nu_j^{(k,m)}$ the corresponding mixing weights for the hidden layer $m$, with $m=2,\dots, k$, we have
\begin{align}
\begin{aligned}
    \boldsymbol{\beta}_1^{(0)} \mid \alpha_1^{(0)} & \sim Dir\left(\alpha_1^{(0)}/L, \dots, \alpha_1^{(0)}/L\right),\\
    \boldsymbol{\nu}_j^{(k,2)} \mid \{\alpha_l^{(1)} : l \in an^{(k, k-1)}(j)\}, \boldsymbol{\beta}_1^{(0)} &\sim Dir\left(\sum_{l\in an^{(k, k-1)}(j)} \alpha_l^{(1)} \left(\beta_{11}^{(0)}, \dots, \beta_{1L}^{(0)}\right)\right),\\
     & \vdots\\
    \boldsymbol{\nu}_j^{(k,k)} \mid \{\alpha_l^{(k,k-1)} : l \in an^{(k, 1)}(j)\}, \boldsymbol{\nu}_j^{(k,k-1)} & \sim Dir\left(\sum_{l\in an^{(k, 1)}(j)} \alpha_l^{(k-1)} \left(\nu_{j1}^{(k,k-1)}, \dots, \nu_{jL}^{(k,k-1)}\right)\right),\\
    \boldsymbol{\beta}_j^{(k)} \mid \alpha_j^{(k)}, \boldsymbol{\nu}_j^{(k,k)} & \sim Dir\left(\alpha_j^{(k)} \left(\nu_{j1}^{(k,k)}, \dots, \nu_{jL}^{(k,k)}\right)\right),\\
     \phi_l \mid G_0 & \sim G_0, \\
     z_{ji}^{(k)} \mid \boldsymbol{\beta}_j^{(k)} &\sim \boldsymbol{\beta}_j^{(k)},\\
      x_{ji}^{(k)} \mid  z_{ji}^{(k)}, \left(\phi_l\right)_{l=1}^{L} & \sim F\left(\phi_{z_{ji}^{(k)}}\right).
      \end{aligned}
\end{align}
\begin{proof}
Consider the random probability measure 
\begin{equation*}
    G_1^{(0)\, L} = \sum_{l=1}^{L}\beta_{1l}^{(0)}\delta_{\phi_l}. 
\end{equation*}
\citealp{ishwaran_zarepour} shows that for every measurable function $g$, integrable with respect to $G_0$, we have, given $\alpha_1^{(0)}$, as $L\rightarrow \infty$
\begin{equation*}
    \int g(\theta)dG_1^{(0), L}(\theta) \overset{\mathcal{D}}{\rightarrow}
    \int g(\theta)dG_1^{(0)}(\theta). 
\end{equation*}
Further, consider 
\begin{align*}
    & G_j^{(k) \, L} = \sum_{l=1}^{L}\beta_{jl}^{(k)}\delta_{\phi_{l}},\\
    & H_j^{(k,m)\, L} = \sum_{l=1}^{L}\nu_{jl}^{(k,m)}\delta_{\phi_{l}}, \hspace{0.4cm} m = 2, \dots, k.
\end{align*}
Let $(A_1, \dots, A_r)$ be a measurable partition of the sample space $\Theta$. Let $K_t = \{l=1,  \dots, L : \phi_l \in A_t\}, \ t = 1, \dots, r$, where $r \leq L$. Assuming that $G_0$ is non-atomic, the $\phi_l$'s are distinct with probability one, implying that any partition of $\{1, \dots, L\}$ corresponds to some partition of $\Theta$. Thus, as $\boldsymbol{\beta}_j^{(k)} \mid \alpha_j^{(k)}, \boldsymbol{\nu}_j^{(k,k)} \sim Dir\left(\alpha_j^{(k)} \left(\nu_{j1}^{(k,k)}, \dots, \nu_{jL}^{(k,k)}\right)\right)$, from the properties of Dirichlet distribution, we have, 
\begin{equation*}
    \left(G_j^{(k)\, L}(A_1), \dots, G_j^{(k)\, L}(A_r)\right) = \left(\sum_{l\in K_1}\beta_{jl}^{(k)},\dots,\sum_{l\in K_r}\beta_{jl}^{(k)} \right) \sim Dir\left(\alpha_j^{(k)}\sum_{l\in K_1}\nu_{jl}^{(k,k)} , \dots, \alpha_j^{(k)} \sum_{l\in K_r}\nu_{jl}^{(k,k)}\right).
\end{equation*}
Thus, 
\begin{align*}
    & G_j^{(k)\, L} \mid  \alpha_j^{(k)}, H_j^{(k,k)\, L} \sim DP\left(\alpha_j^{(k)},  H_j^{(k,k)\, L}\right).
\end{align*}
Similarly,
    \begin{align*}
    & H_j^{(k,k)\, L} \mid \{\alpha_l^{(k-1)} : l \in an^{(k, 1)}(j)\}, H_j^{(k,k-1)\, L} \sim DP\left(\sum_{l\in an^{(k, 1)}(j)} \alpha_l^{(k-1)}, H_j^{(k,k-1)\, L}\right),\\
    & H_j^{(k,k-1)\, L} \mid \{\alpha_l^{(k-2)} : l \in an^{(k, 2)}(j)\}, H_j^{(k,k-2)\, L} \sim DP\left(\sum_{l\in an^{(k, 2)}(j)} \alpha_l^{(k-2)}, H_j^{(k,k-2)\, L}\right),\\
    & \vdots \\
    & H_j^{(k,2)\, L} \mid \{\alpha_l^{(1)} : l \in  an^{(k, k-1)}(j)\}, G_1^{(0)\, L} \sim DP\left(\sum_{l\in an^{(k, k-1)}(j)} \alpha_l^{(1)}, G_1^{(0)\, L}\right).
\end{align*}
By letting $L\rightarrow \infty$, the marginal distribution that this finite mixture model induces on the observations, $\boldsymbol{x}_j^{(k)} = (x_{j1}^{(k)}, x_{j2}^{(k)},\dots )$, approaches the proposed GDP mixture model. 
\end{proof}

\section{Finite Mixture Model Approximation and Posterior Inference}
\label{supp-sec::fmm_posterior}
The posterior inference of the proposed GDP mixture model is carried out using a blocked Gibbs sampler. For concreteness, we will present the finite mixture model approximation of the GDP for our motivating example and posterior inference based on this approximation.  
In our motivating application, we have 8 experimental groups. Each group corresponds to a combination of treatment, diet, and genotype; see Table \ref{supp-table::study design} where we use binary indicators to denote the genotype, the two levels of diet, and the two treatment regimes. The design of the experiments naturally introduces dependencies among the experimental groups, which are represented by the DAG in Figure \ref{supp-fig::DAG}, where group 1 is the root node, groups 2-4 are the layer-1 nodes, groups 5-7 are the layer-2 nodes, and group 8 is the layer-3 node. For ease of notation, instead of using $G_1^{(0)}$ and $\alpha_1^{(0)}$ to denote the random measure and the concentration parameter of the root node, we use simply $G_1$ and $\alpha_1$ instead; similarly for all the other nodes. 

\begin{figure}[htp]
\begin{tabular}{*{2}{>{\arraybackslash}b{\dimexpr0.5\linewidth-2\tabcolsep\relax}}}
\begin{center}

\begin{tikzpicture}[node distance={15mm}, thick, main/.style = {draw, circle}] 
\node[main] (1) {$ 1$}; 
\node[main] (3) [below of=1] {$ 3$};
\node[main] (2) [left of=3] {$ 2$}; 
\node[main] (4) [right of=3] {$ 4$}; 
\node[main] (5) [below of=2] {$ 5$}; 
\node[main] (6) [below of=3] {$ 6$};
\node[main] (7) [below of=4] {$ 7$};
\node[main] (8) [below of=6] {$ 8$}; 
\draw[->] (1) -- (2); 
\draw[->] (1) -- (3); 
\draw[->] (1) -- (4);
\draw[->] (2) -- (5); 
\draw[->] (2) -- (6);
\draw[->] (3) -- (6); 
\draw[->] (3) -- (7); 
\draw[->] (4) -- (5);
\draw[->] (4) -- (7);
\draw[->] (5) -- (8);
\draw[->] (6) -- (8);
\draw[->] (7) -- (8);
\end{tikzpicture} 
\caption{The DAG of experimental groups.}
\label{supp-fig::DAG}
\end{center}
    &
\renewcommand{\arraystretch}{1.3}
\begin{tabular}{ |c|c|c|c| } 
 \hline
 Group & Diet & Treatment & Genotype \\ 
 \hline
 $ 1$ & 0 & 0 & 0 \\ 
 $ 2$ & 1 & 0 & 0 \\ 
 $ 3$ & 0 & 1 & 0 \\
 $ 4$ & 0 & 0 & 1 \\
 $ 5$ & 1 & 0 & 1 \\ 
 $ 6$ & 1 & 1 & 0 \\ 
 $ 7$ & 0 & 1 & 1 \\ 
 $ 8$ & 1 & 1 & 1 \\ 
 \hline
\end{tabular}
\captionof{table}{Each experimental group corresponds to a combination of diet, treatment, and genotype. Diet = 1 corresponds to high-fat diet and 0 corresponds to normal diet, 
Treatment = 1 corresponds to AdipoRon  and 0 corresponds to no therapy, 
Genotype = 1 corresponds to Apc knock-out  and 0 corresponds to wild type.}
\label{supp-table::study design}
\end{tabular}
\end{figure}

Recall that from the main text, the finite truncation of the infinite mixture model representation is given by,
\begin{align}
\label{supp-eq::finit_mixture_model}
\begin{aligned}
    \boldsymbol{\beta}_1^{(0)} \mid \alpha_1^{(0)} & \sim Dir\left(\alpha_1^{(0)}/L, \dots, \alpha_1^{(0)}/L\right),\\
    \boldsymbol{\nu}_j^{(k,2)} \mid \{\alpha_l^{(1)} : l \in an^{(k, k-1)}(j)\}, \boldsymbol{\beta}_1^{(0)} &\sim Dir\left(\sum_{l\in an^{(k, k-1)}(j)} \alpha_l^{(1)} \left(\beta_{11}^{(0)}, \dots, \beta_{1L}^{(0)}\right)\right),\\
     & \vdots\\
    \boldsymbol{\nu}_j^{(k,k)} \mid \{\alpha_l^{(k,k-1)} : l \in an^{(k, 1)}(j)\}, \boldsymbol{\nu}_j^{(k-1)} & \sim Dir\left(\sum_{l\in an^{(k, 1)}(j)} \alpha_l^{(k-1)} \left(\nu_{j1}^{(k,k-1)}, \dots, \nu_{jL}^{(k,k-1)}\right)\right),\\
    \boldsymbol{\beta}_j^{(k)} \mid \alpha_j^{(k)}, \boldsymbol{\nu}_j^{(k,k)} & \sim Dir\left(\alpha_j^{(k)} \left(\nu_{j1}^{(k,k)}, \dots, \nu_{jL}^{(k,k)}\right)\right),\\
     \phi_l \mid G_0 & \sim G_0, \\
     z_{ji}^{(k)} \mid \boldsymbol{\beta}_j^{(k)} &\sim \boldsymbol{\beta}_j^{(k)},\\
      x_{ji}^{(k)} \mid  z_{ji}^{(k)}, \left(\phi_l\right)_{l=1}^{L} & \sim F\left(\phi_{z_{ji}^{(k)}}\right).
      \end{aligned}
\end{align} 
Using the simplified notations for the group-specific random measures and concentration parameter, from the finite truncation of the infinite mixture model representation from Eq. \eqref{supp-eq::finit_mixture_model}, we have, for this motivating problem,
\begin{alignat}{3}
\label{supp-eq::finite_mixture_model_example_appendix}
     \nonumber \alpha_1 \mid \alpha_0 & \sim Gamma(\alpha_0, 1), & 
    \boldsymbol{\beta}_1 \mid \alpha_1 & \sim Dir(\alpha_1/L, \dots, \alpha_1/L), \\
    \nonumber 
    \alpha_2 \mid \alpha_1 & \sim Gamma(\alpha_1, 1), & \boldsymbol{\beta}_2 \mid \alpha_2, \boldsymbol{\beta}_1 & \sim Dir(\alpha_2\beta_{11}, \dots, \alpha_2\beta_{1L}), \\
    \nonumber
    \alpha_3 \mid \alpha_1 & \sim Gamma(\alpha_1, 1), & \boldsymbol{\beta}_3 \mid \alpha_3, \boldsymbol{\beta}_1 & \sim Dir(\alpha_3\beta_{11}, \dots, \alpha_3\beta_{1L}), \\
    \nonumber 
    \alpha_4 \mid \alpha_1 & \sim Gamma(\alpha_1, 1), & \boldsymbol{\beta}_4 \mid \alpha_4, \boldsymbol{\beta}_1 & \sim Dir(\alpha_4\beta_{11}, \dots, \alpha_4\beta_{1L}), \\
    \nonumber 
    \alpha_5\mid \alpha_2, \alpha_4 & \sim Gamma(\alpha_2 + \alpha_4, 1), & \boldsymbol{\beta}_5 \mid \alpha_5, \boldsymbol{\nu}_1 & \sim Dir(\alpha_5\nu_{11}, \dots, \alpha_5\nu_{1L}), \\
    \nonumber & & \boldsymbol{\nu}_1 \mid \alpha_2, \alpha_4, \boldsymbol{\beta}_1 & \sim Dir\left((\alpha_2 + \alpha_4) (\beta_{11},\dots, \beta_{1L})\right), \\
    \nonumber 
    \alpha_6\mid \alpha_2, \alpha_3 & \sim Gamma(\alpha_2 + \alpha_3, 1), & \boldsymbol{\beta}_6 \mid \alpha_6, \boldsymbol{\nu}_2 & \sim Dir(\alpha_6\nu_{21}, \dots, \alpha_6\nu_{2L}), \\
    \nonumber & & \boldsymbol{\nu}_2 \mid \alpha_2, \alpha_3, \boldsymbol{\beta}_1 & \sim Dir\left((\alpha_2 + \alpha_3)  (\beta_{11},\dots, \beta_{1L})\right),\\
    \nonumber 
    \alpha_7\mid \alpha_3, \alpha_4 & \sim Gamma(\alpha_3 + \alpha_4, 1), & \boldsymbol{\beta}_7 \mid \alpha_7, \boldsymbol{\nu}_3 & \sim Dir(\alpha_7\nu_{31}, \dots, \alpha_7\nu_{3L}), \\
    \nonumber & & \boldsymbol{\nu}_3 \mid \alpha_3, \alpha_4, \boldsymbol{\beta}_1 & \sim Dir\left((\alpha_3 + \alpha_4)  (\beta_{11},\dots, \beta_{1L})\right),\\
    \nonumber 
    \alpha_8\mid \alpha_5, \alpha_6, \alpha_7 & \sim Gamma(\alpha_5 + \alpha_6 + \alpha_7, 1), & \boldsymbol{\beta}_8 \mid \boldsymbol{\nu}_4, \alpha_8,  & \sim Dir(\alpha_8\nu_{41}, \dots, \alpha_8\nu_{4L}), \\
    \nonumber & & \boldsymbol{\nu}_4 \mid \alpha_5, \alpha_6, \alpha_7,  \boldsymbol{\eta} & \sim Dir\left((\alpha_5 + \alpha_6 + \alpha_7)(\eta_1, \dots, \eta_L)\right),\\
    \nonumber & & 
    \boldsymbol{\eta} \mid \alpha_2, \alpha_3, \alpha_4,  \boldsymbol{\beta}_1 & \sim Dir\left(2(\alpha_2 + \alpha_3 + \alpha_4) (\beta_{11},\dots, \beta_{1L})\right),\\
    \nonumber 
     z_{ji} \mid \boldsymbol{\beta}_j &\overset{ind}{\sim} Cat(1:L, \boldsymbol{\beta}_j),\\
     x_{ji} \mid z_{ji}, (\phi_l)_{l=1}^{L}  &\overset{ind}{\sim} F(\phi_{z_{ji}}),  \hspace{0.2cm}   & & i=1, \dots, n_j, \hspace{0.2cm} j = 1, \dots, 8.
\end{alignat}
With the above distributional structure, Gibbs sampling is straightforward. We use $\pi(.)$ and $\pi( .\mid -)$ to denote the prior distribution and the conditional distribution, respectively, of the parameter specified in the argument. The full conditional distribution for the atoms is given by,
\begin{equation}
    \pi(\{\phi_l\}_{l=1}^{L}\mid -) \propto \prod_{l=1}^{L}\left[\left\{\prod_{j=1}^{8}\prod_{i=1}^{n_j} F(x_{ji}\mid \phi_l)^{\mathds{1}(z_{ji} = l)}\right\} \pi( \phi_l)\right].
\end{equation}
The full conditional distributions for the latent cluster labels are given by,
\begin{equation}
    P(z_{ji} = l \mid -) \propto \beta_{jl} F(x_{ji}\mid \phi_l), \hspace{0.3cm} l = 1, \dots, L, \ \  i = 1, \dots, n_j \ \ j = 1,\dots, 8.
\end{equation}
The full conditional distribution for the stick-breaking weights is given by,
\begin{multline}
        \pi(\boldsymbol{\beta}_1\mid -) \propto \frac{\prod_{l=1}^{L} \beta_{1l}^{m_{1l} +\frac{\alpha_1}{L}}\left\{\beta_{2l}^{\alpha_2} \beta_{3l}^{\alpha_3}\beta_{4l}^{\alpha_4} \nu_{1l}^{\alpha_2 + \alpha_4}\nu_{2l}^{\alpha_2 + \alpha_3}\nu_{3l}^{\alpha_3 + \alpha_4} \eta_l^{2(\alpha_2 + \alpha_3 + \alpha_4)}\right\}^{\beta_{1l}}}{\prod_{l=1}^{L}\left\{\Gamma\left((\alpha_2 + \alpha_4 )\beta_{1l}\right) \Gamma\left((\alpha_2 + \alpha_3 )\beta_{1l}\right) \Gamma\left( (\alpha_3 + \alpha_4) \beta_{1l}\right) \Gamma\left(2(\alpha_2 + \alpha_3 + \alpha_4) \beta_{1l}\right)\right\}}\\
    \times \frac{1}{\prod_{l=1}^{L}\left\{ \Gamma\left(\alpha_2\beta_{1l}\right) \Gamma\left(\alpha_3\beta_{1l}\right) \Gamma\left(\alpha_4\beta_{1l}\right)\right\}},
\end{multline}
where $m_{1l} = \sum_{i =1}^{n_1} \mathds{1}(z_{1i} = l)$, $l = 1, \dots, L$.
The full conditionals for $\boldsymbol{\beta}_j, \ \ j = 2, \dots, 8$, are in closed form,
\begin{equation}
\label{beta1 eqn}
    \pi(\boldsymbol{\beta}_j\mid -) \sim Dir(\boldsymbol{m}_j + \alpha_j \boldsymbol{\beta}_1), \ \ \text{where $\boldsymbol{m}_j = (m_{j1}, \dots, m_{jL})$ and $m_{jl} = \sum_{i =1}^{n_j} \mathds{1}(z_{ji} = l)$, $l = 1, \dots, L$}.
\end{equation}
By letting $\boldsymbol{B}(\boldsymbol{a})$ to denote the multivariate beta function, i.e., for a $L$-dimensional vector $\boldsymbol{a} = (a_1, \dots, a_L)$ with $a_i > 0$, we have,
\begin{equation*}
    \boldsymbol{B}(\boldsymbol{a}) = \frac{\prod_{l=1}^{L}\Gamma(a_l)}{ \Gamma(\sum_{l=1}^{L} a_l)},
\end{equation*}
where $\Gamma(\cdot)$ is the gamma function. 
Then the full-conditional distribution of the hidden weights are given by,
\begin{equation}
\label{nu1 eqn}
    \pi(\boldsymbol{\nu}_1\mid -) \propto \frac{1}{\boldsymbol{B}(\alpha_5 \boldsymbol{\nu}_1)} \prod_{l=1}^{L} \left\{\beta_{5l}^{\alpha_5 \nu_{1l}}\nu_{1l}^{(\alpha_2 + \alpha_4)\beta_{1l} -1}\right\},
\end{equation}
\begin{equation}
\label{nu2 eqn}
    \pi(\boldsymbol{\nu}_2\mid -) \propto \frac{1}{\boldsymbol{B}(\alpha_6 \boldsymbol{\nu}_2)} \prod_{l=1}^{L} \left\{\beta_{6l}^{\alpha_6 \nu_{2l}}\nu_{2l}^{(\alpha_2 + \alpha_3)\beta_{1l} -1}\right\},
\end{equation}
\begin{equation}
\label{nu3 eqn}
    \pi(\boldsymbol{\nu}_3\mid -) \propto \frac{1}{\boldsymbol{B}(\alpha_7 \boldsymbol{\nu}_3)} \prod_{l=1}^{L} \left\{\beta_{7l}^{\alpha_7 \nu_{3l}}\nu_{3l}^{(\alpha_3 + \alpha_4)\beta_{1l} -1}\right\},
\end{equation}
\begin{equation}
\label{nu4 eqn}
    \pi(\boldsymbol{\nu}_4\mid -) \propto \frac{1}{\boldsymbol{B}(\alpha_8 \boldsymbol{\nu}_4)} \prod_{l=1}^{L} \left\{\beta_{4l}^{\alpha_8 \nu_{4l}}\nu_{4l}^{(\alpha_5 + \alpha_6 + \alpha_7)\eta_{l} -1}\right\},
\end{equation}
\begin{equation}
\label{eta eqn}
    \pi(\boldsymbol{\eta} \mid -) \propto \frac{1}{\boldsymbol{B}((\alpha_5 + \alpha_6 + \alpha_7) \boldsymbol{\eta})} \prod_{l=1}^{L} \left\{\eta_l^{2(\alpha_2 + \alpha_3 + \alpha_4)\beta_{1l} - 1}\nu_{4l}^{(\alpha_5 + \alpha_6 + \alpha_7)\eta_{l}}\right\}.
\end{equation}
The full conditionals for the concentration parameters are given by,
\begin{equation}
    \pi(\alpha_1 \mid -) \propto \frac{e^{-\alpha_1}\alpha_1^{\alpha_0 - 1} \alpha_2^{\alpha_1} \alpha_3^{\alpha_1}\alpha_4^{\alpha_1}}{\{\Gamma(\alpha)\}^3 \boldsymbol{B}((\alpha_1/L, \dots, \alpha_1/L)}\prod_{l=1}^{L}\beta_{1l}^{\frac{\alpha_1}{L}}
\end{equation}
\begin{equation}
    \pi(\alpha_2 \mid -) \propto \frac{e^{-\alpha_2} \alpha_2^{\alpha_1 -1} \alpha_5^{\alpha_2}\alpha_6^{\alpha_2} \left[\prod_{l=1}^{L}\left\{\beta_{2l}^{\beta_{1l} } \nu_{1l}^{\beta_{1l} }\nu_{2l}^{\beta_{1l}} \eta_{l}^{2 \beta_{1l}}\right\}^{\alpha_2}\right] \Gamma(\alpha_2) \Gamma(2(\alpha_2 + \alpha_3 + \alpha_4))}{\prod_{l=1}^{L}\left\{\Gamma(\alpha_2\beta_{1l}) \Gamma((\alpha_2 + \alpha_4)\beta_{1l}) \Gamma((\alpha_2 + \alpha_3)\beta_{1l}) \Gamma(2(\alpha_2 + \alpha_3 + \alpha_4)\beta_{1l})\right\}},
\end{equation}
\begin{equation}
    \pi(\alpha_3 \mid -) \propto \frac{e^{-\alpha_3} \alpha_3^{\alpha_1 -1} \alpha_6^{\alpha_3}\alpha_7^{\alpha_3} \left[\prod_{l=1}^{L}\left\{\beta_{3l}^{\beta_{1l} } \nu_{2l}^{\beta_{1l} }\nu_{3l}^{\beta_{1l}} \eta_{l}^{2 \beta_{1l}}\right\}^{\alpha_3}\right] \Gamma(\alpha_3) \Gamma(2(\alpha_2 + \alpha_3 + \alpha_4))}{\prod_{l=1}^{L}\left\{\Gamma(\alpha_3\beta_{1l}) \Gamma((\alpha_2 + \alpha_3)\beta_{1l}) \Gamma((\alpha_3 + \alpha_4)\beta_{1l}) \Gamma(2(\alpha_2 + \alpha_3 + \alpha_4)\beta_{1l})\right\}},
\end{equation}
\begin{equation}
    \pi(\alpha_4 \mid -) \propto \frac{e^{-\alpha_4} \alpha_4^{\alpha_1 -1} \alpha_5^{\alpha_4}\alpha_7^{\alpha_4} \left[\prod_{l=1}^{L}\left\{\beta_{4l}^{\beta_{1l} } \nu_{1l}^{\beta_{1l} }\nu_{3l}^{\beta_{1l}} \eta_{l}^{2 \beta_{1l}}\right\}^{\alpha_4}\right] \Gamma(\alpha_4) \Gamma(2(\alpha_2 + \alpha_3 + \alpha_4))}{\prod_{l=1}^{L}\left\{\Gamma(\alpha_4\beta_{1l}) \Gamma((\alpha_2 + \alpha_4)\beta_{1l}) \Gamma((\alpha_3 + \alpha_4)\beta_{1l}) \Gamma(2(\alpha_2 + \alpha_3 + \alpha_4)\beta_{1l})\right\}},
\end{equation}
\begin{equation}
    \pi(\alpha_5 \mid -) \propto \frac{e^{-\alpha_5} \alpha_5^{\alpha_2 + \alpha_4 -1} \alpha_8^{\alpha_5} \left[\prod_{l=1}^{L}\left\{\beta_{5l}^{\nu_{1l} } \nu_{4l}^{\eta_l}\right\}^{\alpha_5}\right] \Gamma(\alpha_5) }{\prod_{l=1}^{L}\left\{\Gamma(\alpha_5 \nu_{1l}) \Gamma((\alpha_5 + \alpha_6 + \alpha_7) \nu_l)\right\}},
\end{equation}
\begin{equation}
    \pi(\alpha_6 \mid -) \propto \frac{e^{-\alpha_6} \alpha_6^{\alpha_2 + \alpha_3 -1} \alpha_8^{\alpha_6} \left[\prod_{l=1}^{L}\left\{\beta_{6l}^{\nu_{2l} } \nu_{4l}^{\eta_l}\right\}^{\alpha_6}\right] \Gamma(\alpha_6) }{\prod_{l=1}^{L}\left\{\Gamma(\alpha_6 \nu_{2l}) \Gamma((\alpha_5 + \alpha_6 + \alpha_7) \nu_l)\right\}},
\end{equation}
\begin{equation}
    \pi(\alpha_7 \mid -) \propto \frac{e^{-\alpha_7} \alpha_7^{\alpha_3 + \alpha_4 -1} \alpha_8^{\alpha_7} \left[\prod_{l=1}^{L}\left\{\beta_{7l}^{\nu_{3l} } \nu_{4l}^{\eta_l}\right\}^{\alpha_7}\right] \Gamma(\alpha_7) }{\prod_{l=1}^{L}\left\{\Gamma(\alpha_7 \nu_{3l}) \Gamma((\alpha_5 + \alpha_6 + \alpha_7) \nu_l)\right\}},
\end{equation}
\begin{equation}
    \pi(\alpha_8 \mid -) \propto \frac{e^{-\alpha_8} \alpha_8^{\alpha_5 + \alpha_6 + \alpha_7 -1} \left[\prod_{l=1}^{L} \beta_{8l}^{\alpha_8\nu_{4l}}\right] \Gamma(\alpha_8) }{\prod_{l=1}^{L}\Gamma(\alpha_8 \nu_{4l})}.
\end{equation}
Note that the full conditionals of $\alpha_j, \ j = 1, \dots, 8$, $\boldsymbol{\beta}_1, \boldsymbol{\nu}_1, \boldsymbol{\nu}_2, \boldsymbol{\nu}_3, \boldsymbol{\nu}_4$, and $\boldsymbol{\eta}$ are not standard distributions that have direct samplers. We adopt a Metropolis-within-Gibbs strategy to sample from their corresponding full conditional distributions. Since $\alpha_j$'s are real-valued, sampling using a Metropolis step is straightforward. However, the main bottleneck in sampling are the weights $\boldsymbol{\beta}_1, \boldsymbol{\nu}_1, \boldsymbol{\nu}_2, \boldsymbol{\nu}_3, \boldsymbol{\nu}_4$ and $\boldsymbol{\eta}$, which have a complex structure on the simplex. To mitigate this problem, we use the SALTSampler \citep{SALTSampler} for which the implementation is publicly available as an R package.

\section{Simulation details}
\label{supp-sec:simulations}
Our simulations are designed to mimick the motivating application where we have 8 experimental groups. See Table \ref{supp-table::study design} for our experimental design represented in terms of binary indicators denoting the levels of diet, treatment, and genotype. The corresponding DAG is given in Figure \ref{supp-fig::DAG}.

For our simulation study, we generated data within each of the $8$ groups from a four-component mixture of bivariate Gaussian distributions with different covariance matrices for each group. Taking $\alpha_0=5$, we drew the concentration parameters for the different groups $\alpha_j$'s, the mixture model weights, $\boldsymbol{\beta}_j$'s, $\boldsymbol{\nu}_j$'s, and $\boldsymbol{\eta}_j$, and the true cluster indicators $z_{ji}$'s for each of the different groups using \eqref{supp-eq::finite_mixture_model_example_appendix}. Given the cluster indicators, the data were generated from the Gaussian distribution with the true cluster-specific means $\phi_l$'s given in Table \ref{supp-table:cluster means_simulation} and the group-specific covariance matrices given in Table \ref{supp-table:population covariance_simulation}. Note that within each group, the same covariance matrix was used for all clusters. 
\begin{table}[htp]
    \centering
    \begin{tabular}{c|c}
    \hline
    Cluster & Mean\\
    \hline
    1 &  (-2, -5) \\
    2 &  (0, 0)  \\
    3 &  (-3, 3) \\
    4 &  (3, -3)\\
    \hline
    \end{tabular}
    \caption{True cluster-specific  means.}
    \label{supp-table:cluster means_simulation}
\end{table}
\begin{table}[htp]
    \centering
    \begin{tabular}{c|c}
    \hline
    Group & Covariance\\
    \hline
    \\[-1em]
    1 &  $\begin{bmatrix}
0.8 & 0.3 \\
0.3 & 0.8
    \end{bmatrix}$
    \\
    \\[-1em]
    \hline
    \\[-1em]
    2 &  $\begin{bmatrix}
0.85 & 0.25 \\
0.25 & 0.85
    \end{bmatrix}$  \\
    \\[-1em]
    \hline
    \\[-1em]
    3 &  $\begin{bmatrix}
1 & 0.1 \\
0.1 & 1
    \end{bmatrix}$ \\
    \\[-1em]
    \hline
    \\[-1em]
    4 &  $\begin{bmatrix}
0.8 & -0.1 \\
-0.1 & 0.8
    \end{bmatrix}$\\
    \\[-1em]
    \hline
    \\[-1em]
    5 &  $\begin{bmatrix}
0.8 & -0.2 \\
-0.2 & 0.9
    \end{bmatrix}$\\
    \\[-1em]
    \hline
    \\[-1em]
    6 &  $\begin{bmatrix}
0.8 & 0 \\
0 & 0.8
    \end{bmatrix}$\\
    \\[-1em]
    \hline
    \\[-1em]
    7 &  $\begin{bmatrix}
0.75 & 0.25 \\
0.25 & 0.75
    \end{bmatrix}$\\
    \\[-1em]
    \hline
    \\[-1em]
    8 &  $\begin{bmatrix}
1.1 & 0.1 \\
0.1 & 1.1
    \end{bmatrix}$\\
    \hline
    \end{tabular}
    \caption{True covariance matrices for different groups.}
    \label{supp-table:population covariance_simulation}
\end{table}

In our Gibbs sampler, the truncation level of the finite mixture model was set to $L = 10$, and the base measure for GDP, $G_0$, was specified as the normal-inverse-Wishart distribution, $\mathcal{NIW}(\boldsymbol{0}, 0.01, \mathbb{I}_2, 2)$.
Upon the completion of the Gibbs sampler, the clusters were estimated by using the least squares criterion \citep{LeastSquares}, and they were compared with the true cluster labels for evaluation. We considered a variety of sample sizes as well as a case with very imbalanced design, which are summarized in Table \ref{supp-tab:ss}. In all cases, we ran 15,000 iterations of our Gibbs sampler and discarded the first 5,000 samples as burn-in.

\begin{table}[htp]
\centering
\begin{tabular}{ |c|c|c|c|c| }
 \hline
 Group & \multicolumn{4}{|c|}{Sample sizes} \\ 
 \cline{2-5}
  & small & moderate & large & unbalanced\\
 \hline
 $1$ & 40  & 80  & 150  & 350 \\ 
 $2$ & 30  & 70  & 160  & 30  \\ 
 $3$ & 30  & 70  & 180  & 40  \\
 $4$ & 35  & 75  & 170  & 45  \\
 $5$ & 25  & 83  & 155  & 25  \\ 
 $6$ & 30  & 88  & 175  & 25  \\ 
 $7$ & 25  & 92  & 185  & 35  \\ 
 $8$ & 30  & 88  & 145  & 35  \\ 
 \hline
\end{tabular}
\caption{The sample sizes for the different groups that were used to simulate the data.}
\label{supp-tab:ss}
\end{table}
We presented the results of clustering for small sample sizes and unbalanced sample sizes in the main document. Figure \ref{supp-fig:clustering for various sample sizes} shows the results of clustering for moderate and large sample sizes in each group.
\begin{figure}[http]
\centering
\begin{subfigure}{0.8\textwidth}
  \centering
  \includegraphics[width= \linewidth]{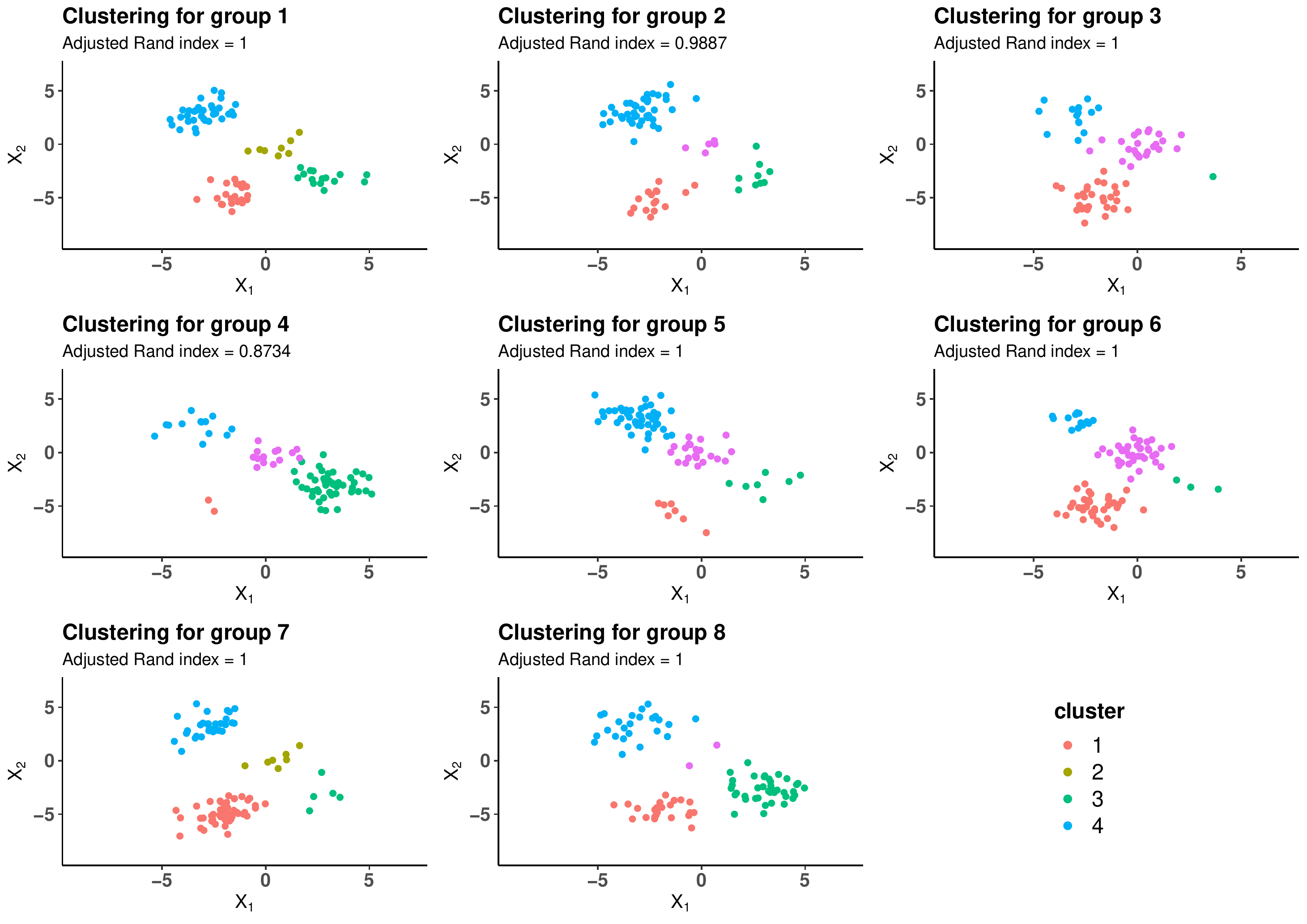}
  \caption{Moderate sample size in each group}
  \label{fig:clustering moderate}
\end{subfigure}
\par\bigskip
\begin{subfigure}{.8\textwidth}
  \centering
  \includegraphics[width=\linewidth]{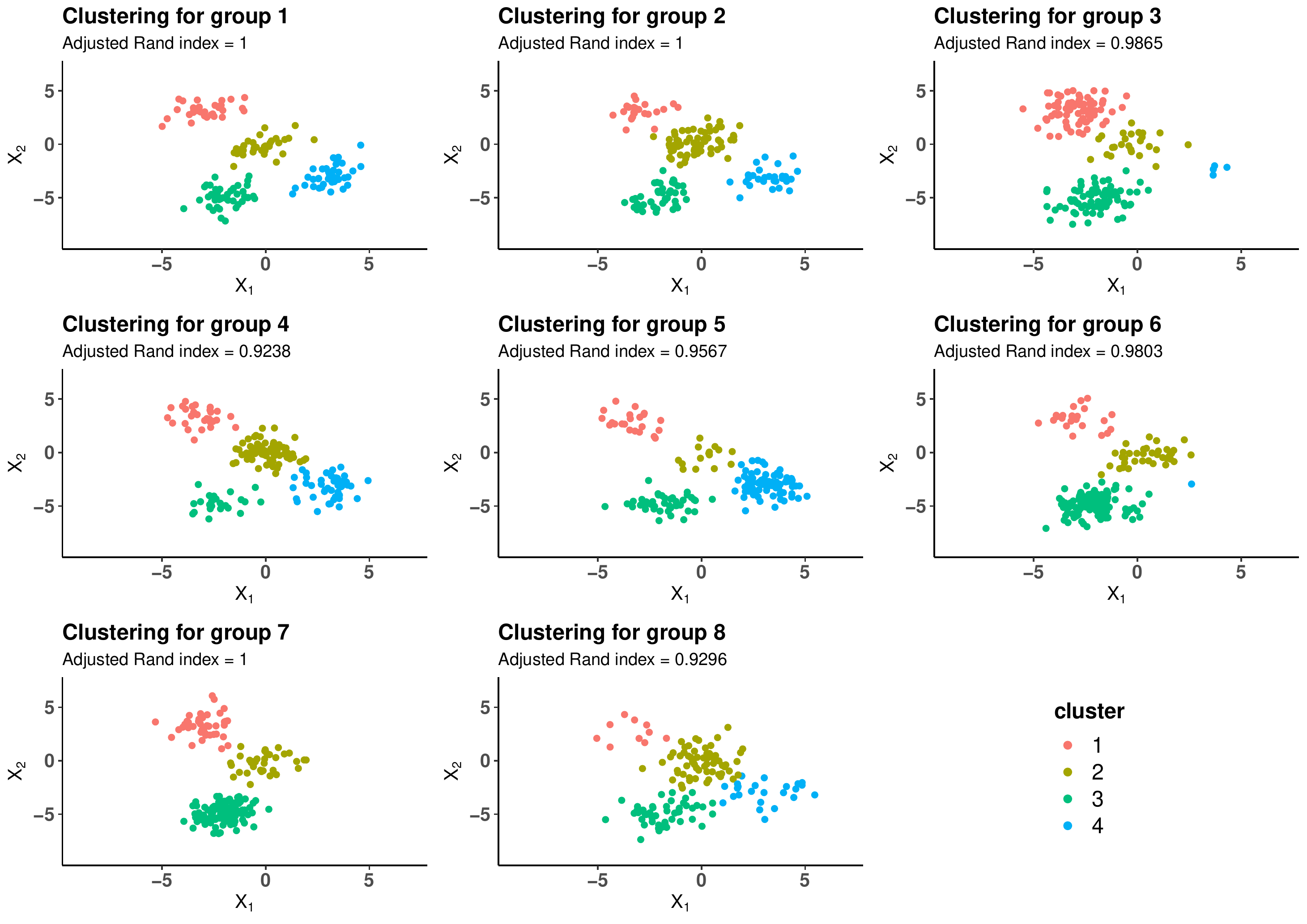}
  \caption{Large sample size in each group}
  \label{fig:clustering large}
\end{subfigure}
\caption{Clustering performance of GDP for additional sample sizes. The colors indicate the estimated clusters by GDP. Adjusted Rand index is reported at the top of each panel.}
\label{supp-fig:clustering for various sample sizes}
\end{figure}
\clearpage

\sloppy We further considered the case, wherein the simulation scenario was difficult with possibly overlapping clusters. We generated data within each of the $8$ groups from a ten-component mixture of bivariate Gaussian distributions with different covariance matrices for each group. The choice of mixture model weights for the first fours groups are summarized in Table \ref{supp-table:group weights}.

\begin{table}[htp]
    \centering
    \begin{tabular}{c|c}
    \hline
    Group & Mixture weights $\boldsymbol{\beta}_j$\\
    \hline
    1 &  $(0.100, 0.100, 0.100, 0.100, 0.100, 0.100, 0.100, 0.100, 0.100, 0.100)^\top$ \\
    2 &  $(0.167, 0.167, 0.167, 0.167, 0.167, 0.056, 0.056, 0.056, 0.000, 0.000)^\top$  \\
    3 &  $(0.095, 0.095, 0.095, 0.000, 0.000, 0.143, 0.143, 0.143, 0.143, 0.143)^\top$ \\
    4 &  $(0.030, 0.030, 0.030, 0.182, 0.182, 0.182, 0.182, 0.182, 0.000, 0.000)^\top$\\
    \hline
    \end{tabular}
    \caption{True group-specific mixture model weights.}
    \label{supp-table:group weights}
\end{table}

The mixture weights for all other groups were taken to be the mean of the mixture weights of their parent, e.g., the mixture weight for group 5 was the mean of the mixture weights of groups 2 and 4.
 The true cluster indicators $z_{ji}$'s for each of the different groups were drawn using \eqref{supp-eq::finite_mixture_model_example_appendix} and the true mixture weights. Given the cluster indicators, the data were generated from the Gaussian distribution with the true cluster-specific means $\phi_l$'s given in Table \ref{supp-table:cluster means_simulation_difficult} and the group-specific covariance matrices given in Table \ref{supp-table:population covariance_simulation}.

 \begin{table}[htp]
    \centering
    \begin{tabular}{c|c}
    \hline
    Cluster & Mean\\
    \hline
    1 &  (-2.5, 0)\\
    2 &  (0, 0)  \\
    3 &  (2.5, 0) \\
    4 &  (2.5, -2.5)\\
    5 &  (-3, -3)\\
    6 &  (2, 2)\\
    7 &  (-2, 5)\\
    8 &  (5, 8)\\
    9 &  (-5, -8)\\
    10 & (8, -8)\\
    \hline
    \end{tabular}
    \caption{True cluster-specific means.}
    \label{supp-table:cluster means_simulation_difficult}
\end{table}

In our Gibbs sampler, the truncation level of the finite mixture model was set to $L = 20$, the hyperparameter $\alpha_0$ was taken to be 1, and the base measure for GDP, $G_0$, was specified as the normal-inverse-Wishart distribution, $\mathcal{NIW}(\boldsymbol{0}, 0.01, \mathbb{I}_2, 2)$.
Upon the completion of the Gibbs sampler, the clusters were estimated by using the least squares criterion \citep{LeastSquares}, and they were compared with the true cluster labels for evaluation. We again considered a variety of sample sizes as summarized in Table \ref{supp-tab:ss}. In all cases, we ran 25,000 iterations of our Gibbs sampler and after discarding the first 15,000 samples as burn-in, considered thinning of the samples by a factor 5. The clustering results are shown in Figure \ref{supp-fig:clustering for various sample sizes difficult}. Clearly, GDP was able to identify the overlapping clusters within each group and link them across groups for all simulation scenarios with reasonable accuracy as measured by Adjusted Rand indices for each group (shown in the plots).

\begin{figure}[http]
\centering
\begin{subfigure}{0.8\textwidth}
  \centering
  \includegraphics[width= \linewidth]{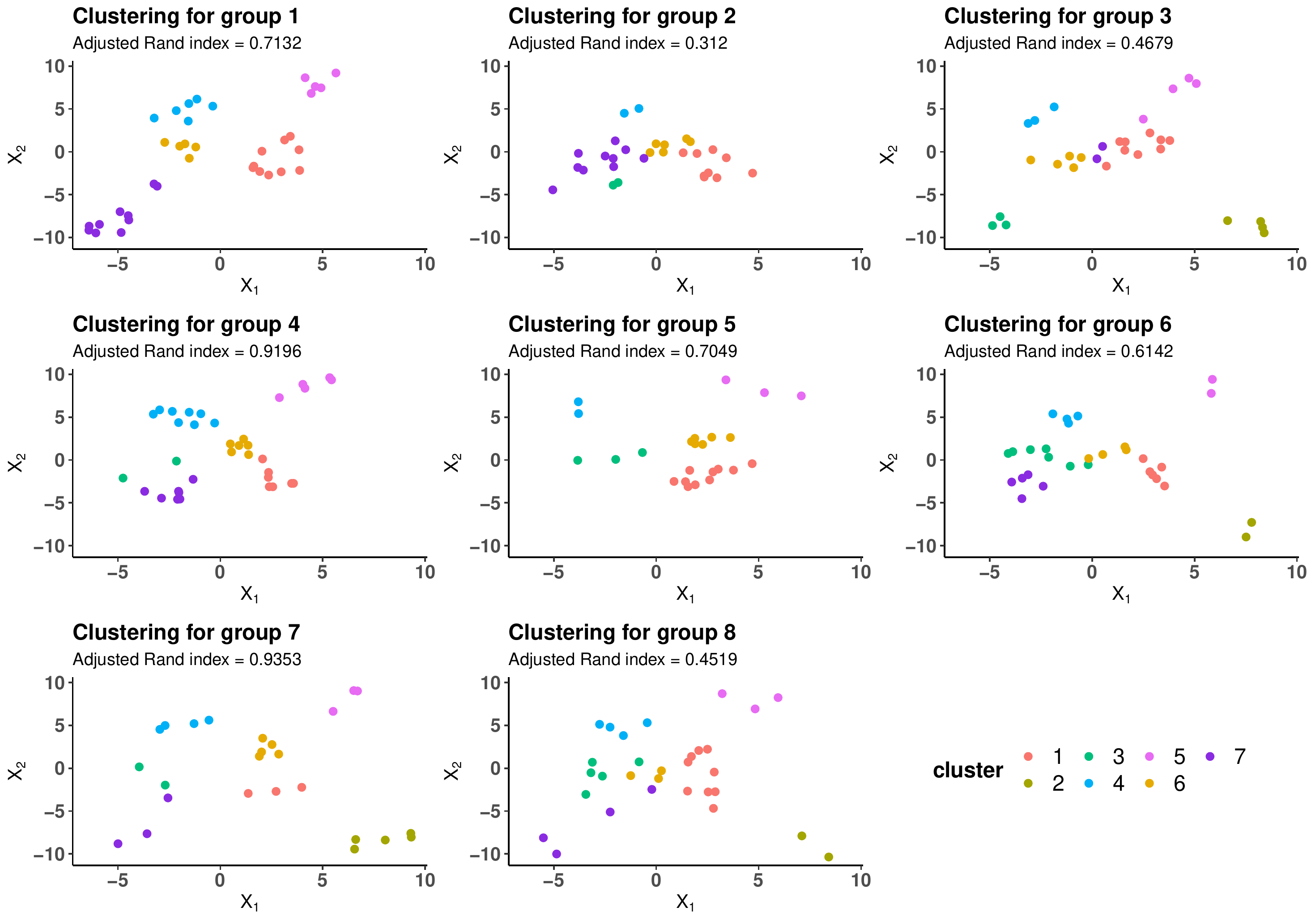}
  \caption{Small sample size in each group}
  \label{fig:clustering small difficult}
\end{subfigure}
\par\bigskip
\begin{subfigure}{.8\textwidth}
  \centering
  \includegraphics[width=\linewidth]{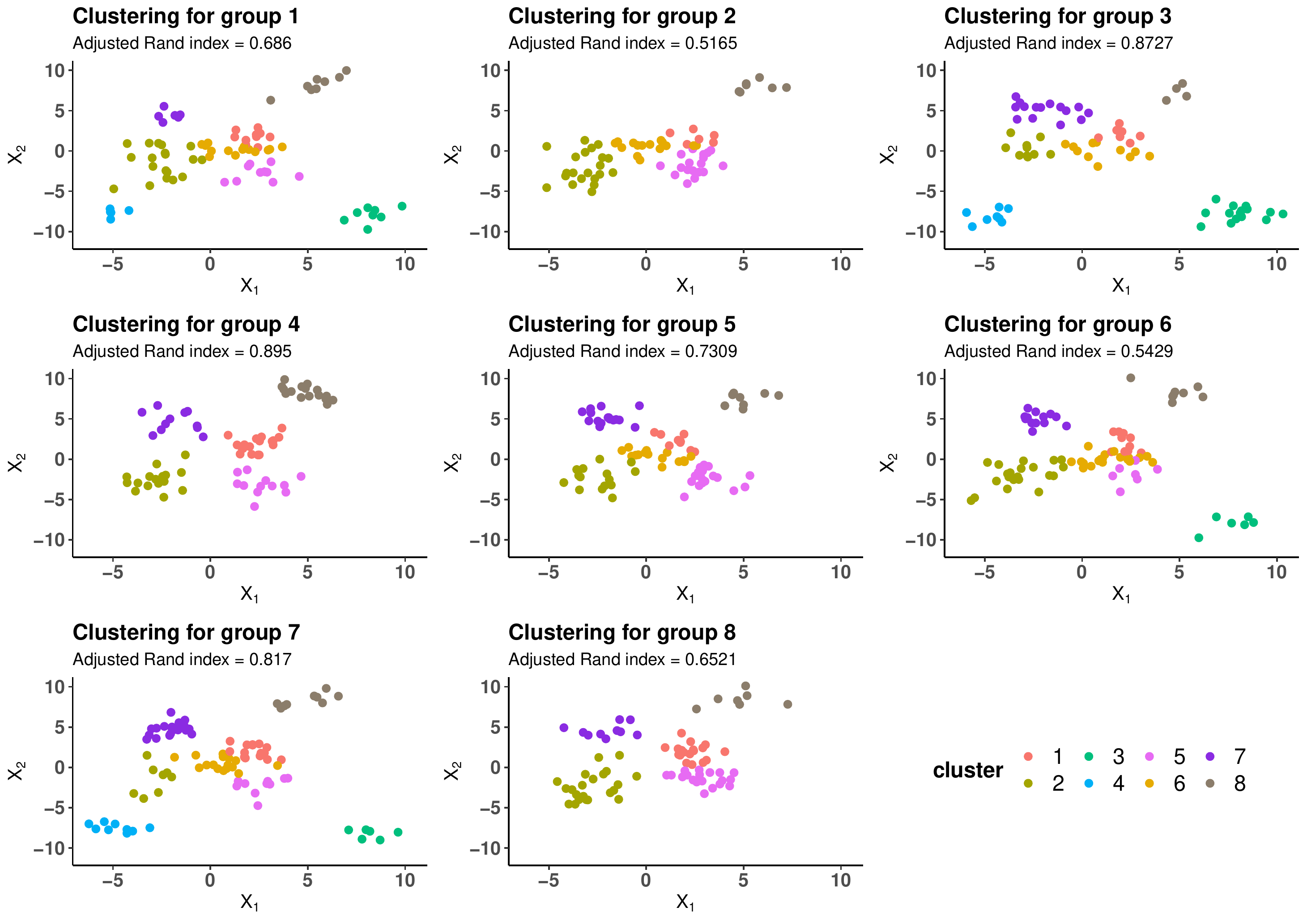}
  \caption{Moderate sample size in each group}
  \label{fig:clustering moderate difficult}
\end{subfigure}
\end{figure}
\begin{figure}[http]
\ContinuedFloat
\centering
\begin{subfigure}{0.8\textwidth}
  \centering
  \includegraphics[width= \linewidth]{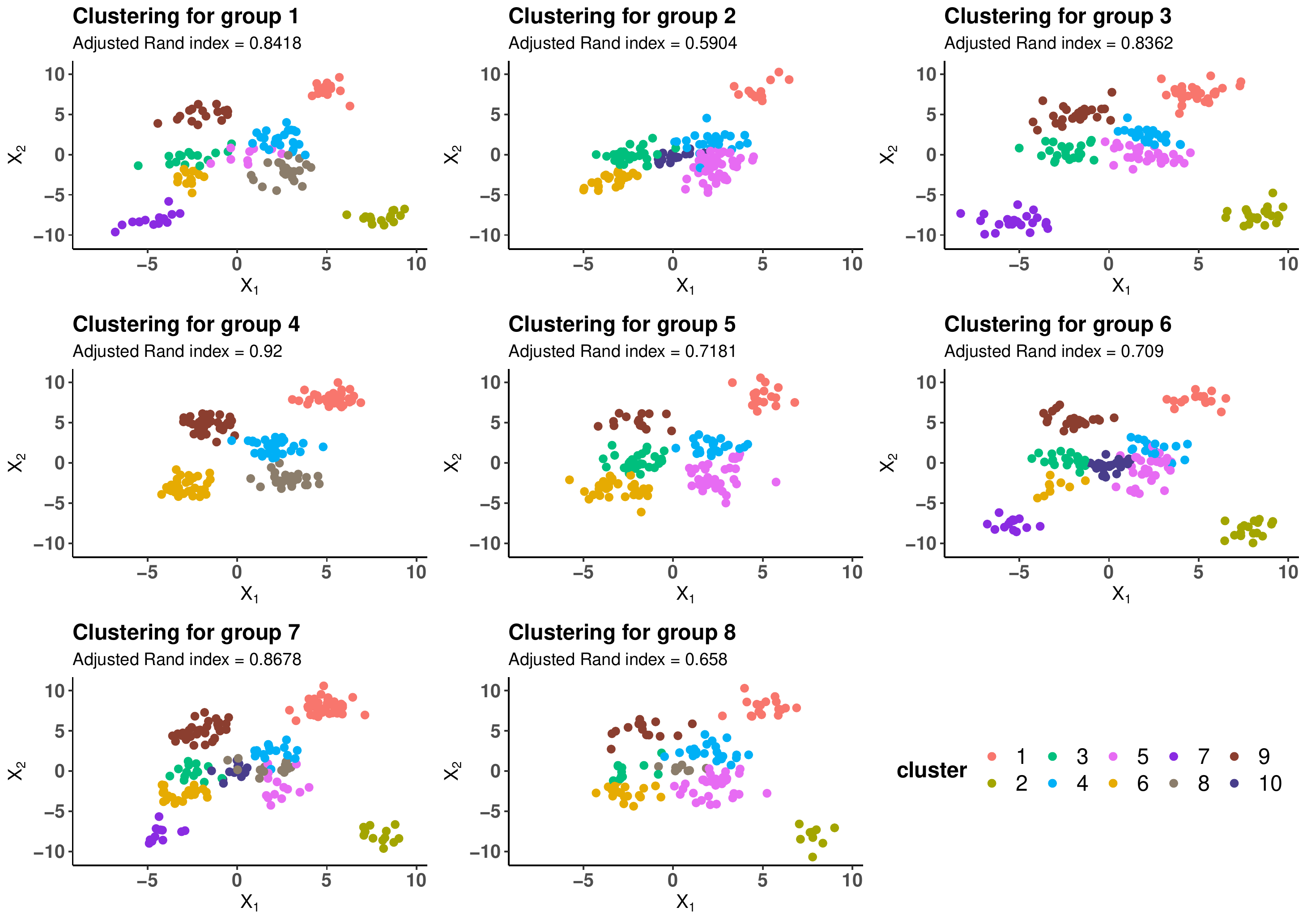}
  \caption{Large sample size in each group}
  \label{fig:clustering large difficult}
\end{subfigure}
\par\bigskip
\begin{subfigure}{.8\textwidth}
  \centering
  \includegraphics[width=\linewidth]{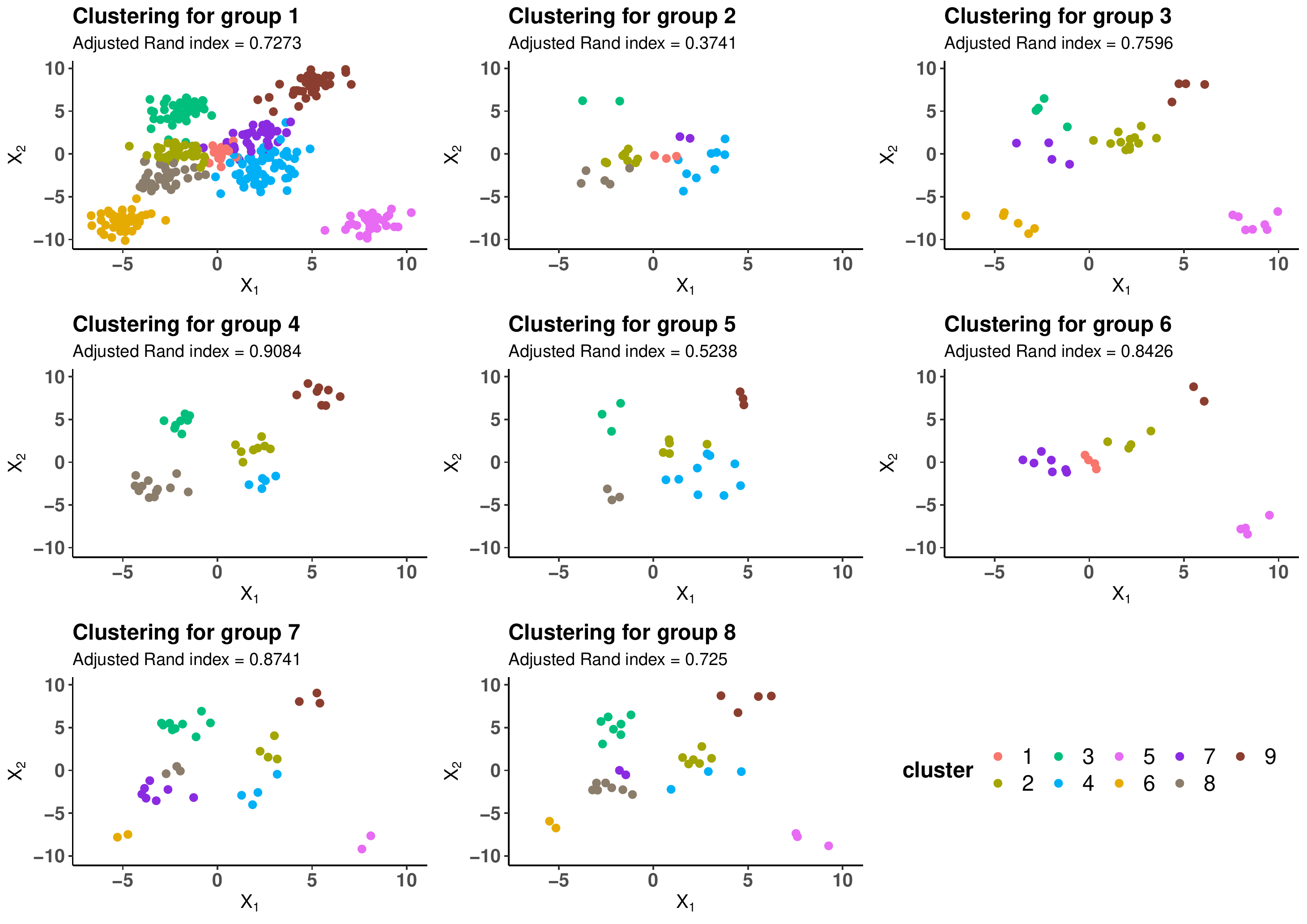}
  \caption{Unbalanced sample size in each group}
  \label{fig:clustering unbalanced difficult}
\end{subfigure}
\caption{Clustering performance of GDP for various sample sizes and difficult simulation scenario. The colors indicate the estimated clusters by GDP. Adjusted Rand index is reported at the top of each panel.}
\label{supp-fig:clustering for various sample sizes difficult}
\end{figure}

In the main manuscript, we reported the boxplot of Adjusted Rand indices for 50 replicates. Further investigation regarding the choice of $\alpha_0$ revealed no significant impact in clustering performance. 
Figure \ref{fig:gdp hdp comparison 2} shows that boxplot of Adjusted Rand indices for 50 replicates, comparing GDP, HDP, and k-means with $\alpha_0$ taken to be 6. In all situations, GDP uniformly out-performed the other two methods.

\begin{figure}[htp]
\centering
  \centering
  \includegraphics[width= 1\linewidth]{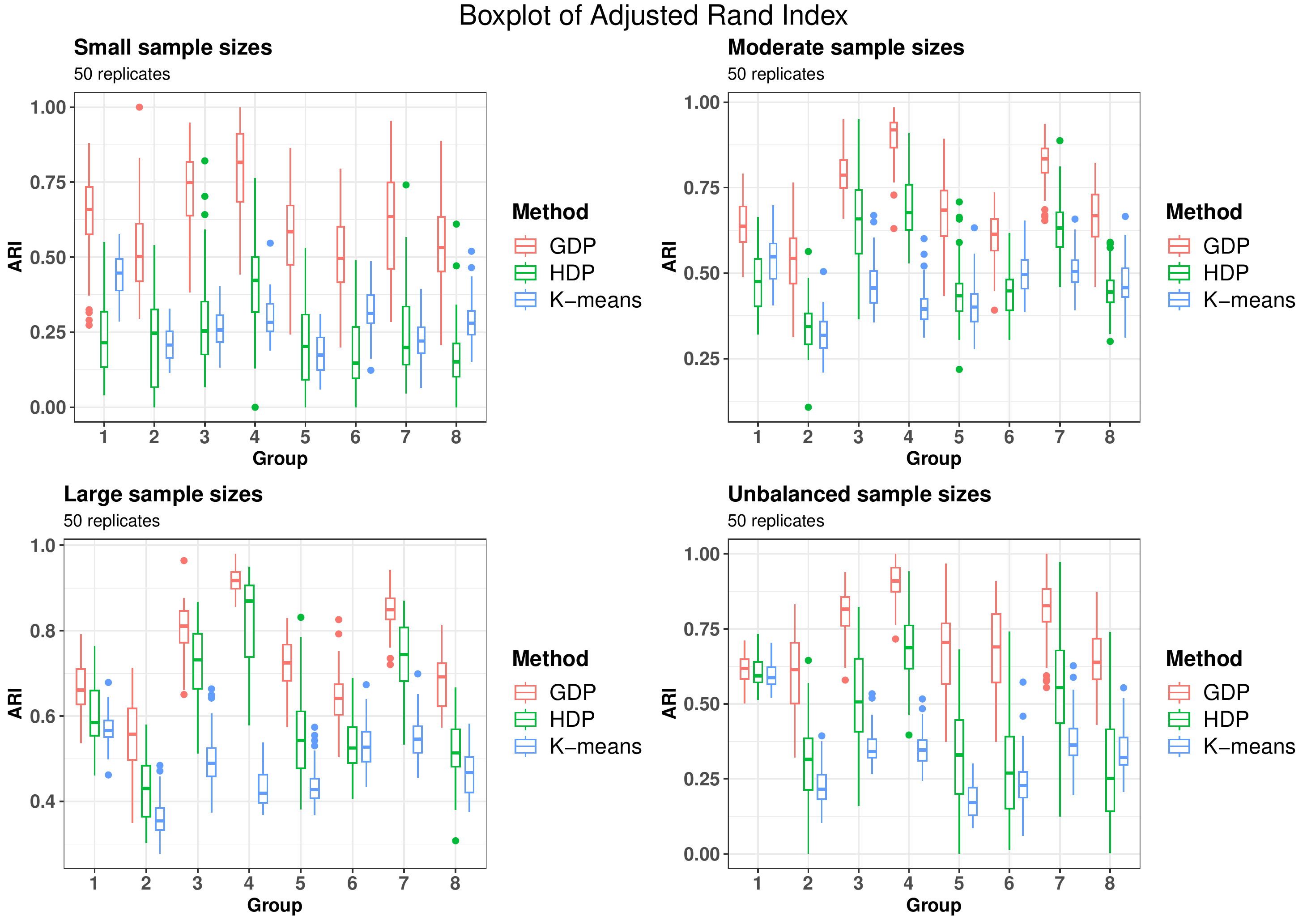}
\caption{The boxplots of the Adjusted Rand indices for GDP, HDP, and k-means for all sample sizes. In all simulations $\alpha_0$ was taken to be 6.}
\label{fig:gdp hdp comparison 2}
\end{figure}

\clearpage
\section{Real Data Analysis plots}
\label{supp-sec:real_data}
We present the traceplots (Figure \ref{supp-fig:Real data Traceplot}) of the log-likelihood for each of the four parallel chains of our sampler, corresponding to the real data analysis of the main document, after discarding the initial $25,000$ samples and thinning the samples by a factor of $15$. The traceplots indicate the presence of local modes, necessitating the need to concatenate posterior samples across these chains for more efficient and reliable inference. 

\begin{figure}[htp]
\centering
\begin{subfigure}{1\textwidth}
  \centering
  \includegraphics[width=\linewidth, scale = 1]{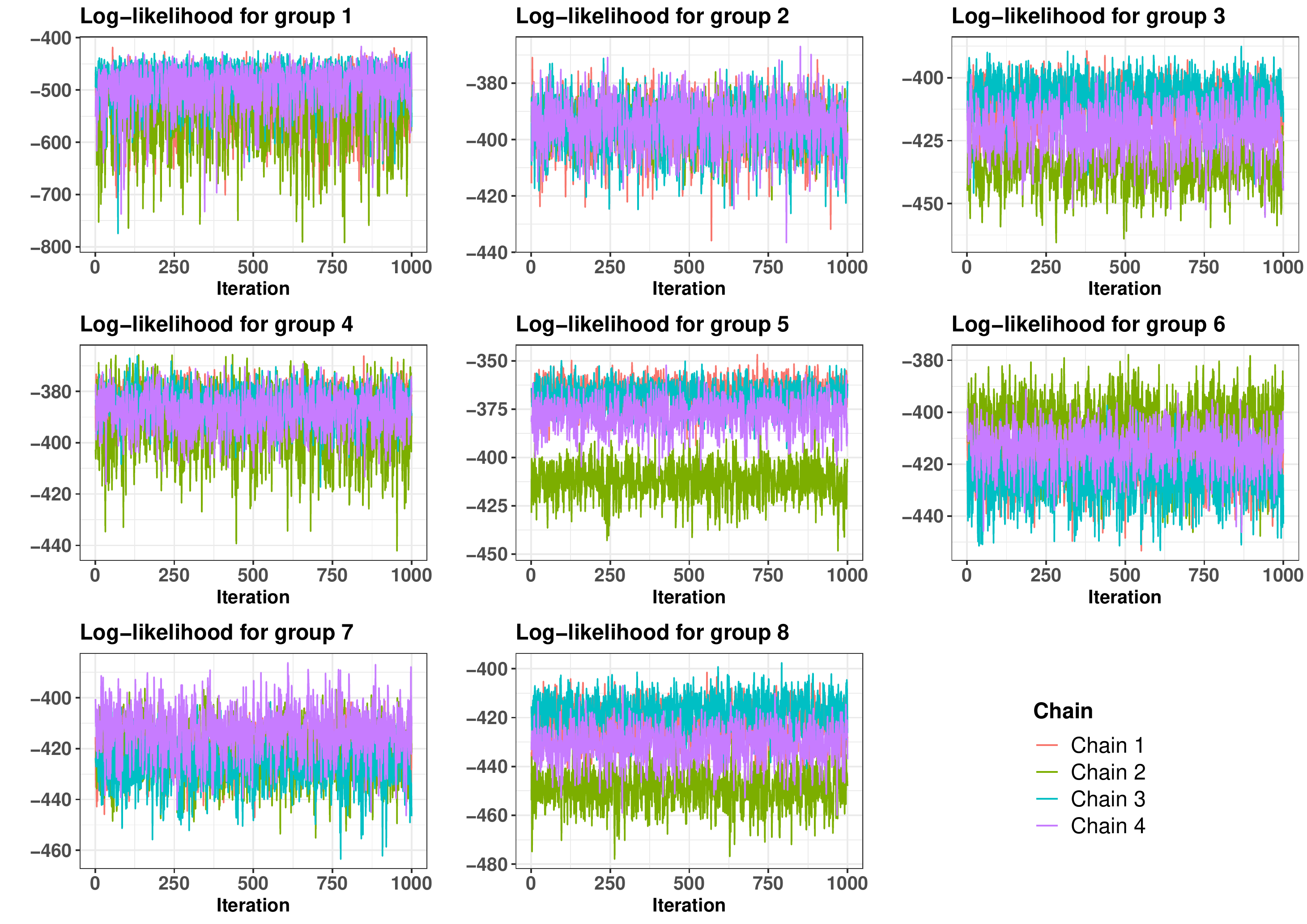}
  \caption{GDP}
  \label{supp-fig:Real data Traceplot GDP}
\end{subfigure}
\end{figure}
\begin{figure}
    \ContinuedFloat
\begin{subfigure}{1\textwidth}
  \centering
 \includegraphics[width=\linewidth, scale = 1]{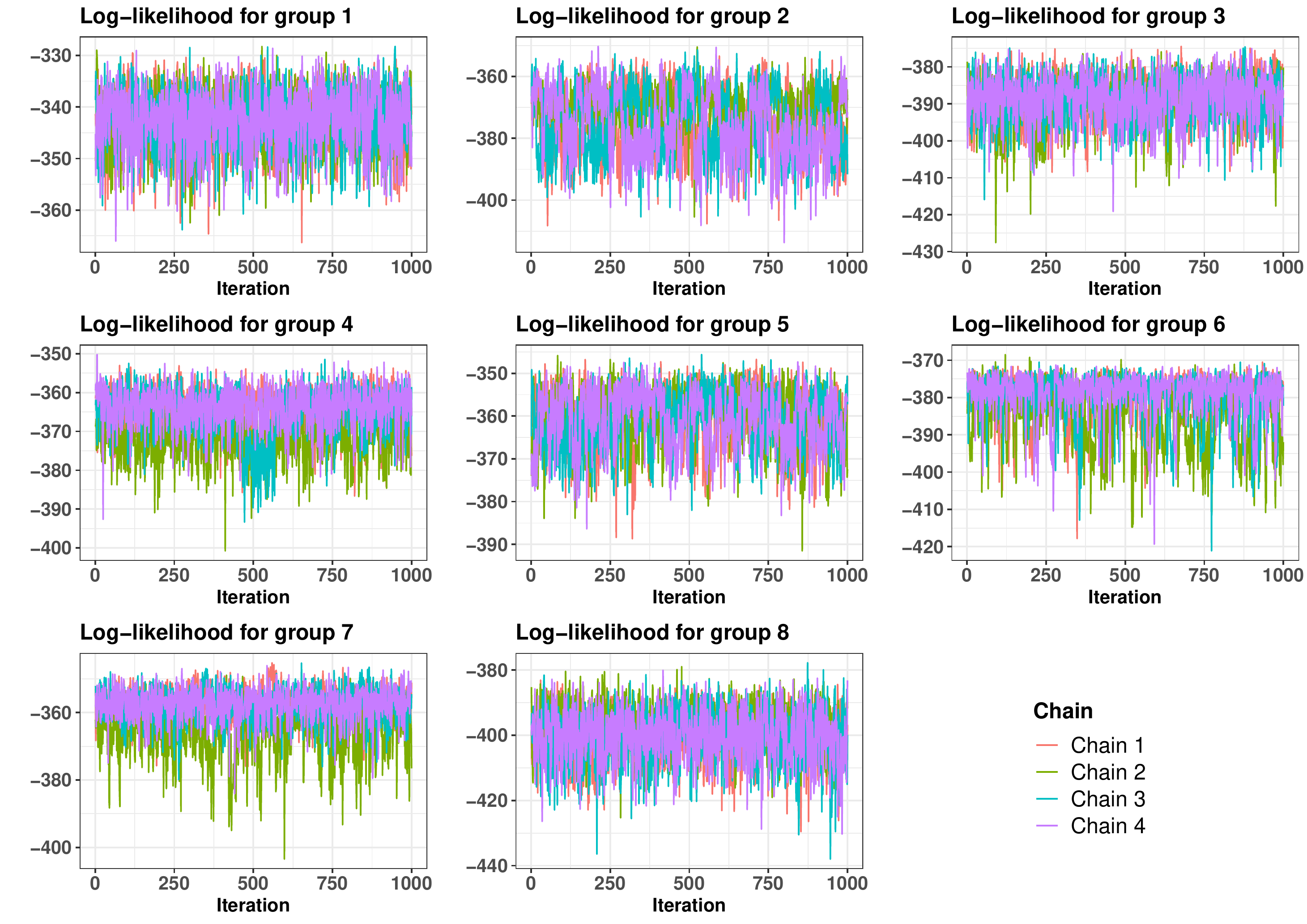}
  \caption{HDP}
  \label{supp-fig:Real data Traceplot HDP}
\end{subfigure}
\caption{Traceplots of log-likelihood for each group for (a) GDP and (b) HDP.}
\label{supp-fig:Real data Traceplot}
\end{figure}

\newpage

\bibliographystyle{myabbrvnat}
\bibliography{references}

\end{document}